\documentclass[aps, pra, reprint, superscriptaddress, onecolumn, notitlepage, tightenlines, 11pt]{revtex4-2}

\pdfoutput=1

\usepackage{header}

\graphicspath{{graphics/}}

\begin{document}

\bibliographystyle{apsrev4-2}

\title{A Graph-Theoretic Framework for Free-Parafermion Solvability}

\author{Ryan L. Mann}
\email{mail@ryanmann.org}
\homepage{http://www.ryanmann.org}
\affiliation{Centre for Quantum Computation and Communication Technology}
\affiliation{Centre for Quantum Software and Information, School of Computer Science, Faculty of Engineering \& Information Technology, University of Technology Sydney, NSW 2007, Australia}

\author{Samuel J. Elman}
\email{samuel.elman@uts.edu.au}
\affiliation{Centre for Quantum Software and Information, School of Computer Science, Faculty of Engineering \& Information Technology,
University of Technology Sydney, NSW 2007, Australia}

\author{David R. Wood}
\email{david.wood@monash.edu}
\affiliation{School of Mathematics, Monash University, Melbourne, Australia}

\author{Adrian Chapman}
\thanks{Present address: Phasecraft Inc., Washington, D.C.}
\affiliation{Department of Materials, University of Oxford, Parks Road, Oxford OX1 3PH, United Kingdom}

\begin{abstract}
    We present a graph-theoretic characterisation of when a quantum spin model admits an exact solution via a mapping to free parafermions. Our characterisation is based on the concept of a frustration graph, which represents the commutation relations between Weyl operators of a Hamiltonian. We show that a quantum spin system has an exact free-parafermion solution if its frustration graph is an oriented indifference graph. Further, we show that if the frustration graph of a model can be dipath oriented via switching operations, then the model is integrable in the sense that there is a family of commuting independent set charges. Additionally, we establish an efficient algorithm for deciding whether this is possible. Our characterisation extends that given for free-fermion solvability. Finally, we apply our results to solve three qudit spin models.
\end{abstract}

\maketitle

{
\hypersetup{linkcolor=black}
\tableofcontents
}

\section{Introduction}
\label{section:Introduction}

Free fermions provide a fundamental model for understanding various phenomena in condensed-matter physics, quantum field theory, and high-energy physics. Their straightforward mathematical properties allow for precise calculations and predictions, for instance, in the electronic structure of materials. Additionally, certain spin systems can be mapped to free fermions, allowing for exact calculations and insights into these systems. The classical two-dimensional Ising model is an archetypal example of mapping spins to free fermions. Onsager's exact computation of the free energy~\cite{onsager1944crystal} can be understood as a consequence of expressing the transfer matrix in terms of bilinears of fermionic operators~\cite{kaufman1949crystal}, a process known as the Jordan--Wigner transformation~\cite{jordan1928uber}. This transformation allows for the exact computation of many properties, including the spectrum of the quantum Ising spin chain~\cite{schultz1964two}.

The mapping of Hamiltonian terms to fermionic monomials via a generator-to-generator approach has been extended beyond one-dimensional spin chains to higher dimensions~\cite{kitaev2006anyons, mandal2009exactly}. Further, it has been shown that a generator-to-generator mapping is possible if the frustration graph of the Hamiltonian is a line graph~\cite{chapman2020characterization, ogura2020geometric}. In this case, the root graph of the line graph describes the fermionic interactions of the model.

Remarkably, this graph-theoretic approach to finding free-fermion models extends beyond those solvable by a Jordan--Wigner transformation. A certain four-fermion Hamiltonian exhibits a free-fermion spectrum despite it not satisfying the conditions of Refs.~\cite{chapman2020characterization, ogura2020geometric}. Its spectrum nevertheless can be computed by explicit construction of raising and lowering operators~\cite{fendley2019free}. This construction utilises transfer operators formed from commuting symmetries of the model. This approach goes beyond the generator-to-generator method, with the mapping non-local and non-linear in terms of the original spins or fermions. A graph-theoretic method yields an entire class of Hamiltonians solvable by this construction. Namely, an exact free-fermion spectrum follows directly from a frustration graph when it obeys certain conditions~\cite{elman2021free, chapman2023unified}. This approach can be modified still further, yielding free-fermion spectra for several models not satisfying these conditions~\cite{fendley2024free}.

In view of these successes, it is natural to explore generalisations. One such generalisation involves considering qudit models instead of qubits. Such models are of physical interest because of their role in the study of topological order~\cite{alicea2016topological}. The Fradkin--Kadanoff transformation~\cite{fradkin1980disorder} provides a natural extension of the Jordan--Wigner transformation to qudit systems. The resulting parafermionic operators obey commutation relations involving a phase, which means that a qudit system expressed in terms of bilinear parafermions is typically not exactly solvable. The concept of free fermions does not naturally generalise, leading to the notion that free parafermions may not exist, even in one dimension. Even in integrable clock Hamiltonians, the correlators within the corresponding parafermionic conformal field theory do not obey Wick's Theorem~\cite{fendley2012parafermionic}. 

However, the integrable chiral Potts model shares many characteristics with free-fermion models~\cite{vongehlen1985zn, baxter1988free, auyang1997many, baxter2006challenge}. Further, the chiral clock chain exhibits a phase with robust topological order and an exact parafermionic edge zero mode~\cite{fendley2012parafermionic}. Baxter showed that a simple chiral non-Hermitian Hamiltonian has an exact spectrum that is a $\mathbb{Z}_d$ free-parafermion analogue of free-fermion spectra~\cite{baxter1989simple}. The derivation of this result is rather indirect and relies on results from the integrable two-dimensional classical chiral Potts model~\cite{vongehlen1985zn, auyang1997many, baxter2006challenge}. An associated two-dimensional classical model known as the $\tau_2$ model can be solved using the same techniques~\cite{bazhanov1990chiral, baxter2004transfer}. The transfer matrices of these two classical models are essentially inverses of each other.

This free-parafermion spectrum was obtained more directly in Ref.~\cite{fendley2014free}, along with the construction of a set of conserved charges. Specifically, shift operators that map between energy eigenstates were constructed and used to derive the spectrum. Although the commutation of the original parafermionic operators of Fradkin and Kadanoff yields only phases, the algebra of shift operators is much more intricate. The precise form of this generalised Clifford algebra was conjectured in Ref.~\cite{fendley2014free}, and conjectured to generalise to the $\tau_2$ classical models in Ref.~\cite{baxter2014tau2}. A proof of this was given in Refs.~\cite{au2014parafermions, au2016parafermions}. We establish a more direct proof following methods generalising those used to solve the aforementioned disguised free-fermion chains.

The direct solution of the free-parafermion models~\cite{fendley2014free} relies on the simple exchange algebra obeyed by the Hamiltonian terms~\cite{yamazaki1964projective, morris1967generalized}. This algebra was generalised to a broader family of $m$-body interaction chains, which satisfy a more general algebra~\cite{alcaraz2017energy, alcaraz2018anomalous, alcaraz2020integrable}. The algebra has since been used to explore a class of XY-type models with spectra composed of combinations of free parafermions~\cite{alcaraz2020free, alcaraz2021free}. Additionally, this approach has led to the development of an efficient numerical method for calculating the mass gaps associated with these free-particle modes~\cite{alcaraz2021powerful}. It has been shown that standard quantum Ising chains with inhomogeneous couplings can be designed to have the same spectra as a family of free-fermion quantum spin chains~\cite{alcaraz2023ising}. For a brief review of free parafermions, we refer the reader to Ref.~\cite{batchelor2023brief}.

The frustration-graph approach to free fermions~\cite{chapman2020characterization, elman2021free, chapman2023unified} does not automatically apply to free parafermions. There is currently no systematic way to determine when a qudit spin system can be exactly solved by mapping it to free parafermions. In this paper, we provide a graph-theoretic characterisation for when such a free-parafermion solution exists. Similar to the results on free fermions, our characterisation is based on the frustration graph of the model. However, due to the commutation relations of the qudit Weyl operators, our results consider oriented frustration graphs.

We show that if the frustration graph of a model can be dipath oriented via switching operations, then the model is integrable in the sense that there is a family of commuting independent set charges (Theorem~\ref{theorem:IndependentSetChargesCommute}). Our main result shows that if the frustration graph of the model is an oriented indifference graph, then it admits an exact free-parafermion solution (Theorem~\ref{theorem:FreeParafermionSolvability}). Additionally, we establish an efficient algorithm for deciding whether an oriented graph can be dipath oriented via switching operations (Theorem~\ref{theorem:DipathOrientationSwitchingAlgorithm}). Finally, we apply our results to solve three qudit spin models.

This paper is structured as follows. In Section~\ref{section:Preliminaries}, we introduce the necessary preliminaries. Then, in Section~\ref{section:FreeParafermionSolvabilityAndIntegrability}, we discuss our main results on free-parafermion solvability and integrability. In Section~\ref{section:DipathOrientationViaSwitchingOperations}, we present our algorithm for deciding dipath orientability via switching operations. We apply our results to solve three qudit spin models in Section~\ref{section:Applications}. Finally, we conclude in Section~\ref{section:ConclusionAndOutlook} with some remarks and open problems.

\section{Preliminaries}
\label{section:Preliminaries}

Let $\omega_d$ denote the primitive $d^\text{th}$ root of unity.

\subsection{Graph Theory}
\label{section:GraphTheory}

Let $G=(V,E)$ be a graph. We denote the open and closed neighbourhood of a vertex $v \in V$ by $\mathcal{N}(v)\coloneqq\{u\mid\{u,v\} \in E\}$ and $\mathcal{N}[v]\coloneqq\mathcal{N}(v)\cup\{v\}$, respectively. For a subset $U$ of $V$, the induced subgraph $G[U]$ is the subgraph of $G$ whose vertex set is $U$ and whose edge set consists of all edges in $G$ which have both endpoints in $U$. For convenience, we denote the induced subgraph $G[V{\setminus}U]$ by $G{\setminus}U$.

An \emph{independent set} of $G$ is a subset $U$ of $V$ with no edges between them. The \emph{independence number} $\alpha(G)$ of $G$ is the order of the largest independent set of $G$. Let $\mathcal{I}(G)$ denote the collection of all independent sets in $G$ and let $\mathcal{I}_i(G)$ denote the collection of all independent sets in $G$ of order $i\in\mathbb{N}$. A \emph{clique} of $G$ is a set of pairwise adjacent vertices in $G$. The \emph{claw} is the complete bipartite graph $K_{1,3}$. A graph is \emph{claw-free} if it does not include the claw as an induced subgraph.

A graph is \emph{chordal} if every induced cycle has order 3. A \emph{perfect elimination ordering} of a graph is an ordering of the vertices such that, for each vertex $v$, the neighbours of $v$ that precede $v$ in the ordering form a clique. A graph is chordal if and only if it has a perfect elimination ordering~\cite{rose1970triangulated}.

An \emph{indifference ordering} of a graph is an ordering of the vertices such that every induced path is either increasing or decreasing with respect to the ordering. A graph is an \emph{indifference graph} if it has an indifference ordering~\cite{looges1993optimal}. Note that since an indifference ordering is a perfect elimination ordering, indifference graphs are chordal. There is an efficient algorithm for deciding whether a graph is an indifference graph~\cite{looges1993optimal}.

Let $G$ be an oriented graph. A \emph{dipath} in $G$ is a path where all edges are oriented consistently. We say that $G$ is \emph{dipath oriented} if every induced path is a dipath. An \emph{oriented perfect elimination ordering} of $G$ is a perfect elimination ordering that is consistent with the orientation. An \emph{oriented indifference graph} is an indifference graph that is oriented consistently with an indifference ordering. Note that an oriented graph is an oriented indifference graph if and only if it is dipath oriented and has an oriented perfect elimination ordering.

The \emph{switching} of an oriented graph at a vertex $v$ is the operation of reversing all edges incident to $v$. The \emph{switching orbit} of an oriented graph is the set of all oriented graphs that can be obtained by applying a sequence of switching operations.

\begin{definition}[Independence polynomial]
    Let $G$ be a graph with vertex weights $\{w_v\}_{v \in V(G)}$. For $x\in\mathbb{C}$, the \emph{independence polynomial} $Z_G(x)$ is defined by
    \begin{equation}
        Z_G(x) \coloneqq \sum_{I\in\mathcal{I}(G)}\prod_{v \in I}xw_v. \notag
    \end{equation}
\end{definition}

There is an efficient algorithm for computing the independence polynomial when the graph is chordal~\cite{achlioptas2021local}.

\subsection{Qudit Models}
\label{section:QuditModels}

Fix $d\in\mathbb{Z}_{\geq2}$. Let $X_d\coloneqq\sum_{k\in\mathbb{Z}_d}\outerproduct{k+1}{k}$ and $Z_d\coloneqq\sum_{k\in\mathbb{Z}_d}\omega_d^k\outerproduct{k}{k}$ denote the \emph{generalised Pauli operators} in a $d$-dimensional space, often referred to as the shift and clock operators. The \emph{Weyl--Heisenberg group} in a $d$-dimensional space is generated by the operators
\begin{equation}
    \mathcal{W}_d \coloneqq \left\{\omega_d^{\frac{\alpha\beta}{2}} X_d^\alpha Z_d^\beta \mid \alpha,\beta\in\mathbb{Z}_d \right\}. \notag
\end{equation}
These operators form a complete basis for operators on a $d$-dimensional Hilbert space. A complete basis for operators on an $n$-qudit system can be formed by taking tensor products of these operators, that is,
\begin{equation}
    \mathcal{W}_d(n) \coloneqq \left\{\omega_d^{\frac{\alpha\cdot\beta}{2}}\bigotimes_{l\in\mathbb{Z}_n}X_d^{\alpha_l}Z_d^{\beta_l}\mid\alpha,\beta\in\mathbb{Z}_d^n \right\}. \notag
\end{equation}
The elements of $\mathcal{W}_d(n)$ are called \emph{Weyl operators}. We consider $n$-qudit systems with Hamiltonians written in the generalised Pauli basis
\begin{equation}
    H = \sum_{v \in V}h_v, \notag
\end{equation}
where $V$ is subset of $\mathbb{Z}_d^{2n}$ and $h_v=b_vw_v$ for $b_v\in\mathbb{R}{\setminus}\{0\}$ and $w_v$ an element of $\mathcal{W}_d(n)$ determined by the label $v$ in the natural way. Note that the Hamiltonian is not necessarily Hermitian. However, it can be made Hermitian by adding its Hermitian conjugate. We shall restrict our attention to Hamiltonians that satisfy $h_uh_v=\omega_d^kh_vh_u$ with $k$ in $\{-1,0,1\}$ for all $u$ and $v$ in $V$. This restriction is necessary for our definition of the frustration graph and is always true when $d=2$ or $d=3$.

\begin{definition}[Frustration graph]
    The \emph{frustration graph} of a Hamiltonian is the oriented graph $G=(V,E)$ with vertex set $V$ and edge set $E=\left\{(u,v) \mid h_uh_v=\omega_dh_vh_u\right\}$, where the notation $(u,v)$ denotes a directed edge from vertex $u$ to vertex $v$.
\end{definition}

\subsection{Parafermions}
\label{subsection:parafermions}

Parafermions are the $\mathbb{Z}_d$-symmetric generalisation of fermions. They obey a more complex exchange algebra than fermions. For instance, while Majorana fermions $\{c_j\}_j$ satisfy the modified canonical anticommutation relations:
\begin{align}
    c_j^2 &= 1, & c_j &= c_j^\dagger, & c_jc_k &= (-1)^{\delta_{j,k}+1}c_kc_j, \notag
\end{align}
the Majorana-like parafermions $\{\gamma_j\}_j$ satisfy:
\begin{align}
    \gamma_j^d &= 1, & \gamma_j^{d-1} &= \gamma_j^\dagger, & \gamma_j^m\gamma_k^n &= \omega_d^{mn\operatorname{sgn}(k-j)}\gamma_k^n\gamma_j^m, \notag
\end{align}
for $m,n\in\mathbb{Z}_d$. Similarly, the concept of the free parafermion is a $\mathbb{Z}_d$ generalisation of the free fermion. While free fermions have a real energy spectrum of the form:
\begin{equation}
    \left\{\sum_{k \in [n]}(-1)^{x_k}\varepsilon_k \mid x\in\mathbb{Z}_2^n\right\}, \notag
\end{equation}
free parafermions have a complex energetic spectrum of the form:
\begin{equation}
    \left\{\sum_{k \in [n]}\omega_d^{x_k}\varepsilon_k \mid x\in\mathbb{Z}_d^n\right\}, \notag
\end{equation}
where $n$ denotes the number of particles and $\{\varepsilon_k\}_k$ denotes the single-particle energies.

We note that for free parafermions, the single-particle energies may themselves be complex, making the identification of free parafermion models from the spectrum more challenging than for fermions. For free fermions, the ability to map the Pauli strings of a spin-$1/2$ model to fermionic bilinear monomials is a sufficient condition to identify a qubit model as a free-fermion model, for instance, via a Jordan--Wigner transformation. However, this criterion does not apply to free parafermions~\cite{stoudenmire2015assembling}. In this work, we provide a graph-theoretic criterion for identifying free parafermions in qudit models, generalising the work of Ref.~\cite{chapman2023unified} to qudit models.

\section{Free-Parafermion Solvability and Integrability}
\label{section:FreeParafermionSolvabilityAndIntegrability}

In this section, we establish our results on free-parafermion solvability and integrability. We begin by introducing a family of commuting charges defined by sets of Hamiltonian terms that commute, or equivalently the independent sets of the frustration graph.

\begin{definition}[Independent set charges]
    Let $G$ be a graph and let $H$ be a Hamiltonian. For $0 \leq i \leq \alpha(G)$, the \emph{independent set charge} $Q_G^{(i)}$ is defined by
    \begin{equation}
        Q_G^{(i)} \coloneqq \sum_{S\in\mathcal{I}_i(G)}h_S, \notag
    \end{equation}
    where $h_S\coloneqq\prod_{v \in S}h_v$. Note that $Q_G^{(0)}=I$ and $Q_G^{(1)}=H$.
\end{definition}

We now introduce the transfer operator, which can be viewed as an operator-valued counterpart to the independence polynomial of a graph.
\begin{definition}[Transfer operator]
    Let $G$ be a graph and let $H$ be a Hamiltonian. For $x\in\mathbb{C}$, the \emph{transfer operator} $T_G(x)$ is defined by
    \begin{equation}
        T_G(x) \coloneqq \sum_{k=0}^{\alpha(G)}(-x)^kQ_G^{(k)}. \notag
    \end{equation}
\end{definition}

The following two propositions illustrate how the transfer operator and the independence polynomial obey similar recursion relations.
\begin{proposition}
    \label{proposition:VertexRecursionRelation}
    Let $G$ be a graph and let $v$ be a vertex in $G$. The transfer operator $T_G(x)$ and independence polynomial $Z_G(x)$ satisfy the following recursion relations.
    \begin{equation} 
        T_G(x) = T_{G\setminus\{v\}}(x)-xh_vT_{G\setminus\mathcal{N}[v]}(x), \notag
    \end{equation}
    \begin{equation} 
        Z_G(x) = Z_{G\setminus\{v\}}(x)+xw_vZ_{G\setminus\mathcal{N}[v]}(x). \notag
    \end{equation}
\end{proposition}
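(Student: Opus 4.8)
The plan is to rewrite both quantities as a single sum over all independent sets, and then split that sum according to whether or not a fixed vertex $v$ is used. For the transfer operator, I would first note that the Hamiltonian terms indexed by an independent set $S$ pairwise commute---by the standing assumption, an independent set in the frustration graph is precisely a set of mutually commuting Hamiltonian terms---so the product $h_S=\prod_{v\in S}h_v$ is well defined independently of order, and collecting powers of $x$ gives
\begin{equation}
    T_G(x)=\sum_{k=0}^{\alpha(G)}(-x)^k\sum_{S\in\mathcal{I}_k(G)}h_S=\sum_{S\in\mathcal{I}(G)}(-x)^{|S|}h_S. \notag
\end{equation}
The analogous identity $Z_G(x)=\sum_{S\in\mathcal{I}(G)}x^{|S|}\prod_{u\in S}w_u$ is immediate from the definition of the independence polynomial.

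Next I would partition $\mathcal{I}(G)$ into the independent sets that avoid $v$ and those that contain $v$. The former are exactly the independent sets of $G\setminus\{v\}$. The latter are in bijection with $\mathcal{I}(G\setminus\mathcal{N}[v])$ via $S\mapsto S\setminus\{v\}$, since an independent set containing $v$ can use no neighbour of $v$. For $S\ni v$, write $S'=S\setminus\{v\}$; then $|S|=|S'|+1$ and, because $v$ is non-adjacent to every vertex of $S'$, the term $h_v$ commutes with $h_{S'}$, so $h_S=h_vh_{S'}$. Substituting $(-x)^{|S|}=-x\,(-x)^{|S'|}$ and re-applying the single-sum identity to the two induced subgraphs $G\setminus\{v\}$ and $G\setminus\mathcal{N}[v]$ yields
\begin{equation}
    T_G(x)=\sum_{S'\in\mathcal{I}(G\setminus\{v\})}(-x)^{|S'|}h_{S'}-xh_v\sum_{S'\in\mathcal{I}(G\setminus\mathcal{N}[v])}(-x)^{|S'|}h_{S'}=T_{G\setminus\{v\}}(x)-xh_vT_{G\setminus\mathcal{N}[v]}(x), \notag
\end{equation}
where I use that $h_{S'}$ depends only on the set $S'$ and not on the ambient graph. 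The independence-polynomial recursion follows from the identical bookkeeping, with $-xh_v$ replaced by $+xw_v$ and with no commutativity concerns since the $w_u$ are scalars.

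The argument is essentially routine; the only point requiring care is the operator ordering, which is controlled by the frustration-graph hypothesis that forces the Hamiltonian terms in any independent set to commute, thereby letting $h_v$ be pulled out of $h_S$ whenever $v$ is non-adjacent to the rest of $S$. I would also remark in passing that the upper limits $\alpha(G\setminus\{v\})$ and $\alpha(G\setminus\mathcal{N}[v])$ implicit on the right-hand side cause no trouble, since there are no independent sets of order exceeding the relevant independence number, so the re-indexing of the sums is harmless.
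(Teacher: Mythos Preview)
Your argument is correct. The paper itself does not supply a proof of this proposition; it is stated as a standard recursion (the analogue of the well-known deletion identity for the independence polynomial) and then used freely in the appendices. Your partition of $\mathcal{I}(G)$ according to membership of $v$, together with the observation that $h_v$ commutes with $h_{S'}$ whenever $S'\subseteq V\setminus\mathcal{N}[v]$, is exactly the expected justification and fills the omitted details cleanly.
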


\begin{proposition}
    \label{proposition:CliqueRecursionRelation}
    Let $G$ be a graph and let $K$ be a clique in $G$. The transfer operator $T_G(x)$ and independence polynomial $Z_G(x)$ satisfy the following recursion relations.
    \begin{equation} 
        T_G(x) = T_{G \setminus K}(x)-x\sum_{v \in K}h_vT_{G\setminus\mathcal{N}[v]}(x), \notag
    \end{equation}
    \begin{equation} 
        Z_G(x) = Z_{G \setminus K}(x)+x\sum_{v \in K}w_vZ_{G\setminus\mathcal{N}[v]}(x). \notag
    \end{equation}
\end{proposition}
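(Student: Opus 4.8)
The plan is to first rewrite the transfer operator as a single generating sum over all independent sets,
\[
T_G(x) = \sum_{k=0}^{\alpha(G)} (-x)^k Q_G^{(k)} = \sum_{S \in \mathcal{I}(G)} (-x)^{|S|} h_S,
\]
with the analogous identity $Z_G(x) = \sum_{S\in\mathcal{I}(G)} x^{|S|} w_S$, where $h_S = \prod_{v\in S} h_v$ and $w_S = \prod_{v\in S} w_v$ (these products are well defined precisely because $S$ is independent, so the factors pairwise commute). Once both quantities are in this form, the proof becomes a partition-of-the-index-set argument that is essentially the generating-function version of the clique recursion for the ordinary independence polynomial.

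Next I would partition $\mathcal{I}(G)$ according to the intersection of each independent set with the clique $K$. Since $K$ is a clique, every independent set meets $K$ in at most one vertex, so the decomposition
\[
\mathcal{I}(G) = \{S \in \mathcal{I}(G) : S \cap K = \emptyset\} \;\sqcup\; \bigsqcup_{v \in K} \{S \in \mathcal{I}(G) : S \cap K = \{v\}\}
\]
is a genuine partition. The first block is exactly $\mathcal{I}(G\setminus K)$, contributing $T_{G\setminus K}(x)$. For the block indexed by $v$, write $S = \{v\} \cup S'$; then $S$ is independent with $S\cap K = \{v\}$ if and only if $S' \in \mathcal{I}(G\setminus\mathcal{N}[v])$ --- here the containment $K \subseteq \mathcal{N}[v]$, valid because $K$ is a clique containing $v$, is what lets us drop the condition $S'\cap K = \emptyset$, so that $S'$ ranges over $\mathcal{I}(G\setminus\mathcal{N}[v])$ rather than over some $K$-dependent family. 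Using $h_S = h_v h_{S'}$ (legitimate since $v$ is non-adjacent to every vertex of $S'$, hence $h_v$ commutes past $h_{S'}$) together with $|S| = 1 + |S'|$, this block contributes $-x\, h_v T_{G\setminus\mathcal{N}[v]}(x)$. Summing the blocks yields the stated recursion for $T_G$, and the identical bookkeeping with $(-x)^{|S|}h_S$ replaced by $x^{|S|}w_S$ yields the recursion for $Z_G$ (the sign becomes $+x$ simply because $Z$ carries $x^k$ rather than $(-x)^k$).

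As an alternative route, I would note that the statement also follows by induction on $|K|$ directly from Proposition~\ref{proposition:VertexRecursionRelation}: apply the vertex recursion at some $v\in K$, observe that $K\setminus\{v\}$ is a clique of $G\setminus\{v\}$ with $\mathcal{N}_{G\setminus\{v\}}[u] = \mathcal{N}_G[u]\setminus\{v\}$ and $\mathcal{N}_G[u]\cup\{v\} = \mathcal{N}_G[u]$ for $u\in K$, and then invoke the inductive hypothesis on $G\setminus\{v\}$. Either way, I do not anticipate a genuine obstacle; the only point requiring care is the repeated use of the clique hypothesis --- both to ensure the displayed decomposition is a true partition (no independent set meets $K$ twice) and to collapse each ``remaining'' graph to $G\setminus\mathcal{N}[v]$ independently of $K$ --- while operator non-commutativity is a non-issue, since every product $h_S$ that appears is indexed by an independent set.
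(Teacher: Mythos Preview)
Your argument is correct. The paper states this proposition without proof (it is presented alongside Proposition~\ref{proposition:VertexRecursionRelation} as a basic recursion, with no proof given in the main text or the appendices), so there is nothing to compare against; your partition of $\mathcal{I}(G)$ by $S\cap K$ is the standard derivation of the clique recursion for the independence polynomial, and your observation that every $h_S$ involved is indexed by an independent set disposes of any operator-ordering concern in the transfer-operator version.
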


We now show when the independent set charges are commuting symmetries of the Hamiltonian. In Refs.~\cite{elman2021free, chapman2023unified}, it was shown that for qubit models, the independent set charges commute if the frustration graph is claw-free. However, for qudit models, the condition is more complicated due to the complex commutation relations of the Weyl operators.

\begin{theorem}
    \label{theorem:IndependentSetChargesCommute}
    Fix $d\in\mathbb{Z}_{\geq2}$. Let $H$ be a qudit Hamiltonian with dipath-oriented frustration graph $G$. The independent set charges $\{Q_G^{(i)}\}$ pairwise commute.
\end{theorem}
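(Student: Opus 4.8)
The plan is to recast the statement via the transfer operator and prove it by induction on $|V(G)|$. Since $T_G(x)=\sum_{k=0}^{\alpha(G)}(-x)^kQ_G^{(k)}$ and the monomials $1,x,x^2,\dots$ are linearly independent over $\mathbb{C}$, the assertion that $\{Q_G^{(i)}\}$ pairwise commute is equivalent to the operator-valued polynomial identity $[T_G(x),T_G(y)]=0$ for all $x,y\in\mathbb{C}$. The base case is trivial, and if $G$ is disconnected then $T_G$ factorises over its components into pairwise-commuting transfer operators (distinct components have disjoint support and no connecting edges), so it suffices to treat connected $G$.

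For the inductive step I would fix a vertex $v$ and apply the vertex recursion relation of Proposition~\ref{proposition:VertexRecursionRelation} to both $T_G(x)$ and $T_G(y)$. Writing $A(x)\coloneqq T_{G\setminus\{v\}}(x)$ and $B(x)\coloneqq T_{G\setminus\mathcal{N}[v]}(x)$, and using that $h_v$ commutes with everything supported on $G\setminus\mathcal{N}[v]$, the commutator expands as
\begin{equation}
    [T_G(x),T_G(y)]=[A(x),A(y)]+xy\,h_v^2[B(x),B(y)]-y[A(x),h_vB(y)]-x[h_vB(x),A(y)].\notag
\end{equation}
The first two terms vanish by the induction hypothesis applied to the induced subgraphs $G\setminus\{v\}$ and $G\setminus\mathcal{N}[v]$, which are again dipath oriented. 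So everything reduces to showing that the two cross terms cancel. I would attack this using the elementary phase-absorption identity $A(x)h_v=h_v\widetilde{A}(x)$, where $\widetilde{A}$ is the transfer operator of $G\setminus\{v\}$ for the Hamiltonian obtained by multiplying each $h_u$ with $u\in\mathcal{N}(v)$ by $\omega_d^{-\epsilon(v,u)}$, with $\epsilon(v,u)\in\{-1,1\}$ the orientation sign of the edge $\{u,v\}$; this rescaled Hamiltonian has the same frustration graph. Substituting this in, the required identity becomes a controlled commutation statement relating $B$, $A$, and its dressed version $\widetilde{A}$ — morally, that the transfer operators of $G\setminus\{v\}$ and $G\setminus\mathcal{N}[v]$ commute after a phase twist on the common neighbourhood $\mathcal{N}(v)$.

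The hard part is this commutation lemma, and it is exactly where the dipath-orientation hypothesis is used. Claw-freeness — implied by dipath orientation, since an induced claw would force a contradiction among the orientations of its three induced paths $P_3$ — guarantees that any vertex has at most two neighbours in any independent set. When it has exactly two, the two neighbours together with the vertex induce a path $P_3$, which is a dipath by hypothesis, so the two associated commutation phases are $\omega_d$ and $\omega_d^{-1}$ and therefore cancel; when it has exactly one, the surviving phase is tracked by the phase-absorption identity and made to telescope against the induction hypothesis. (For qubits this last point is automatic from $(-1)^2=1$, so claw-freeness alone suffices as in Refs.~\cite{elman2021free,chapman2023unified}; the orientation condition is the genuine new ingredient for $d>2$.) An alternative route, kept in reserve, is to expand $[Q_G^{(i)},Q_G^{(j)}]$ directly, group its terms by the resulting Weyl operator, observe that within each group the relevant pairs of independent sets interact along a disjoint union of paths and even cycles (again by claw-freeness), and build a sign-reversing involution whose phase cancellations are guaranteed by dipath orientation; the case $j=1$ (that is, $[H,Q_G^{(i)}]=0$) already illustrates the mechanism, via the involution that swaps a vertex of an independent set with its unique neighbour outside it.
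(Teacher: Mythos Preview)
Your inductive route via the transfer operator is not the paper's approach, and it stalls precisely where you flag it. After the vertex recursion the induction hypothesis kills $[A(x),A(y)]$ and $[B(x),B(y)]$, but the surviving cross terms demand
\[
y\bigl(\widetilde{A}(x)B(y)-B(y)A(x)\bigr)=x\bigl(\widetilde{A}(y)B(x)-B(x)A(y)\bigr),
\]
an identity relating transfer operators on two \emph{different} induced subgraphs ($G\setminus\{v\}$ versus $G\setminus\mathcal{N}[v]$), one of them phase-twisted on $\mathcal{N}(v)$. The induction hypothesis only gives $[T_{G'}(x),T_{G'}(y)]=0$ for a fixed $G'$ and says nothing about this mixed object. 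Nor can you simplify by choosing $v$ simplicial: the theorem assumes only dipath orientation, not oriented indifference, so $\mathcal{N}(v)$ need not be a clique (a consistently oriented $4$-cycle already shows this). Your heuristic---at most two neighbours in an independent set, opposite phases along an induced $P_3$---is exactly the right intuition, but turning it into a proof of the cross-term identity seems to require the same global pairing you would use in a direct argument, so the induction is not actually reducing the difficulty.

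The paper takes precisely your ``alternative route, kept in reserve''. One expands $[Q_G^{(i)},Q_G^{(j)}]=\sum_{R,S}[h_R,h_S]$; claw-freeness forces $G[R\cup S]$ to be a disjoint union of paths and even cycles, and only the path components with an odd number of edges contribute a net phase. Writing $\uparrow_R$ (resp.\ $\uparrow_S$) for the number of such odd paths whose source vertex in the dipath orientation lies in $R$ (resp.\ $S$), one gets $[h_R,h_S]=(1-\omega_d^{\uparrow_S-\uparrow_R})h_Rh_S$. The involution swaps $R$ and $S$ on the union $P$ of all odd-path vertices, $(\bar R,\bar S)\coloneqq(R\Delta P,\,S\Delta P)$; a short computation gives $[h_{\bar R},h_{\bar S}]=-(1-\omega_d^{\uparrow_S-\uparrow_R})h_Rh_S$, and the sum cancels in pairs. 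So your reserve plan \emph{is} the proof---promote it and drop the transfer-operator induction.
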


\begin{proof}
    By definition,
    \begin{equation}
        [Q_G^{(i)},Q_G^{(j)}] = \sum_{R\in\mathcal{I}_i(G)}\sum_{S\in\mathcal{I}_j(G)}[h_R,h_S]. \notag
    \end{equation}
    Let $R$ and $S$ be two independent sets of $G$. Note that since $G$ is dipath oriented, it is claw-free. Therefore, the induced subgraph $G[R \cup S$] is a disjoint union of paths and even cycles. Let $\mathcal{C}$ denote the set of all maximally connected components of $G[R \cup S$]. Observe that, for $C\in\mathcal{C}$, $[h_{R \cap C},h_{S \cap C}]=0$ unless $C$ is an odd path. Let $\uparrow_R$ and $\uparrow_S$ denote the respective number of odd paths in $\mathcal{C}$ starting with an element of $R$ and $S$ in the dipath orientation of $G$. Then,
    \begin{equation}
        [h_R,h_S] = \left(1-\omega_d^{\uparrow_S-\uparrow_R}\right)h_Rh_S. \notag
    \end{equation}
    Let $P$ denote the set of all vertices that are contained in an odd path of $\mathcal{C}$. Now consider the independent sets $\bar{R}$ and $\bar{S}$ formed from $R$ and $S$ by exchanging vertices in $P$, that is, $\bar{R} \coloneqq R \Delta P$ and $\bar{S} \coloneqq S \Delta P$. Then,
    \begin{align}
        [h_{\bar{R}},h_{\bar{S}}] &= \left(1-\omega_d^{\uparrow_{\bar{S}}-\uparrow_{\bar{R}}}\right)h_{\bar{R}}h_{\bar{S}} \notag \\
        &= \left(1-\omega_d^{\uparrow_R-\uparrow_S}\right)h_{R{\setminus}P}h_{P{\setminus}R}h_{P{\setminus}S}h_{S{\setminus}P} \notag \\
        &= \left(1-\omega_d^{\uparrow_R-\uparrow_S}\right)\omega_d^{\uparrow_S-\uparrow_R}h_{R{\setminus}P}h_{P{\setminus}S}h_{P{\setminus}R}h_{S{\setminus}P} \notag \\
        &= -\left(1-\omega_d^{\uparrow_S-\uparrow_R}\right)h_Rh_S. \notag
    \end{align}
    Hence,
    \begin{equation}
        [h_R,h_S]+[h_{\bar{R}},h_{\bar{S}}] = 0. \notag
    \end{equation}
    It then follows that the independent set charges $\{Q_G^{(i)}\}$ pairwise commute, completing the proof.
\end{proof}

Theorem~\ref{theorem:IndependentSetChargesCommute} implies that a qudit Hamiltonian with a frustration graph that can be dipath oriented is integrable. Additionally, we can conclude that the independent set charges will commute in this manner only for one-dimensional models~\cite{brandstadt2003linear}. In Section~\ref{section:DipathOrientationViaSwitchingOperations}, we establish an efficient algorithm for deciding whether a frustration graph can be dipath oriented under switching operations. In this context, the switching operation corresponds to replacing a Hamiltonian term with its complex conjugate. We have the immediate Corollary of Theorem~\ref{theorem:IndependentSetChargesCommute}.

\begin{corollary}
    \label{corollary:HamiltonianTransferOperatorCommutator}
    Fix $d\in\mathbb{Z}_{\geq2}$. Let $H$ be a qudit Hamiltonian with dipath-oriented frustration graph $G$. The transfer operator satisfies $[H,T_G(x)]=0$ and $[T_G(x),T_G(y)]=0$ for all $x,y\in\mathbb{C}$.
\end{corollary}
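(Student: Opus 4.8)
The plan is to derive Corollary~\ref{corollary:HamiltonianTransferOperatorCommutator} purely as a formal consequence of Theorem~\ref{theorem:IndependentSetChargesCommute}, without re-examining the frustration-graph structure. Recall that $T_G(x) = \sum_{k=0}^{\alpha(G)}(-x)^k Q_G^{(k)}$ is a polynomial in $x$ whose coefficients are exactly the independent set charges $Q_G^{(k)}$. Since $Q_G^{(1)} = H$, the commutator $[H, T_G(x)]$ expands as $\sum_{k=0}^{\alpha(G)}(-x)^k [Q_G^{(1)}, Q_G^{(k)}]$, and every term vanishes by Theorem~\ref{theorem:IndependentSetChargesCommute}. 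This gives $[H, T_G(x)] = 0$ for all $x \in \mathbb{C}$.

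For the second claim, I would expand the double sum
\begin{equation}
    [T_G(x), T_G(y)] = \sum_{k=0}^{\alpha(G)}\sum_{l=0}^{\alpha(G)}(-x)^k(-y)^l\,[Q_G^{(k)}, Q_G^{(l)}], \notag
\end{equation}
which again is identically zero term-by-term by Theorem~\ref{theorem:IndependentSetChargesCommute}, since each $[Q_G^{(k)}, Q_G^{(l)}] = 0$. Hence $[T_G(x), T_G(y)] = 0$ for all $x, y \in \mathbb{C}$. Strictly speaking $[H, T_G(x)] = 0$ is then just the special case $x = y$ after identifying $T_G(y)$'s degree-one coefficient, but stating it separately is cleaner, so I would present both expansions explicitly.

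There is essentially no obstacle here: the corollary is immediate once one observes that both $H$ and $T_G(x)$ lie in the commutative algebra generated by $\{Q_G^{(i)}\}$, which Theorem~\ref{theorem:IndependentSetChargesCommute} establishes is abelian. The only thing worth being careful about is the hypothesis bookkeeping — the dipath-orientation condition on $G$ is exactly what Theorem~\ref{theorem:IndependentSetChargesCommute} requires, so it carries over verbatim, and no new structural argument about paths, cycles, or Weyl-operator phases is needed. I would therefore keep the proof to two short displayed equations plus a sentence invoking the theorem.
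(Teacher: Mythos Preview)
Your proposal is correct and matches the paper's approach exactly: the paper states this as an ``immediate Corollary'' of Theorem~\ref{theorem:IndependentSetChargesCommute} with no further proof, and your expansion of $T_G(x)$ in the $Q_G^{(k)}$ basis followed by termwise application of the theorem is precisely the intended one-line argument.
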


Baxter's model~\cite{baxter1988free, baxter1989simple} was solved using a mapping to free parafermions in Ref.~\cite{fendley2014free}. A similar approach was used to identify the free-fermion spectrum of a model that is not solvable via a Jordan--Wigner transformation~\cite{fendley2019free}. This was generalised in Refs.~\cite{alcaraz2020free, alcaraz2020integrable} to a specific family of qudit models, and named equipartition indifference models in Ref.~\cite{elman2021free}. The approach of Ref.~\cite{fendley2019free} can also be generalised to Baxter's model and the more general $\tau_2$ model~\cite{fendley2023personal}. This is similar to the approach used for free-fermion solutions to many-body
models defined on spin-$1/2$ particles~\cite{chapman2020characterization, elman2021free, chapman2023unified}. We now characterise qudit models that can be solved via a mapping to free parafermions based on a graph-theoretic criterion. Our main result may be stated as follows.

\begin{theorem}
    \label{theorem:FreeParafermionSolvability}
    Fix $d\in\mathbb{Z}_{\geq2}$. Let $H$ be a qudit Hamiltonian with an oriented indifference frustration graph. There exist single-particle energies $\{\varepsilon_k\}$ and parafermionic modes $\{\psi_{r,k}\}$ such that
    \begin{equation}
        H = \sum_{k=1}^{\alpha(G)}\sum_{r\in\mathbb{Z}_d}\omega_d^r\varepsilon_k\prod_{p=1}^d\psi_{r-p,k}. \notag
    \end{equation}
    Further, the single-particle energies $\{\varepsilon_k\}$ satisfy $Z_G\left(-\varepsilon_k^{-d}\right)=0$ for all $k\in[\alpha(G)]$.
\end{theorem}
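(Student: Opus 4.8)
The plan is to follow the transfer-operator strategy that works for free fermions, adapted to the parafermionic setting. The key structural input is that an oriented indifference graph is in particular dipath oriented, so Theorem~\ref{theorem:IndependentSetChargesCommute} and Corollary~\ref{corollary:HamiltonianTransferOperatorCommutator} apply: the charges $\{Q_G^{(i)}\}$ mutually commute, and the transfer operator $T_G(x)$ commutes with $H$ and with $T_G(y)$. The natural first step is to compute $T_G(x)T_G(y)$ (or $T_G(x)^d$) using the recursion relations of Proposition~\ref{proposition:VertexRecursionRelation} and Proposition~\ref{proposition:CliqueRecursionRelation}, specialised to the perfect elimination ordering guaranteed by the indifference structure. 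Because an indifference ordering is a perfect elimination ordering \emph{and} forces every induced path to be monotone (hence a dipath once oriented consistently), the cross terms $h_u h_v$ that appear when multiplying two transfer operators should collapse in a controlled way — the claw-freeness kills most obstructions, and the dipath orientation pins down the residual phases $\omega_d^{\pm1}$. I would aim to show that $T_G(x)$ satisfies a scalar-coefficient polynomial identity whose roots are governed by $Z_G$, e.g.\ that $\prod_{k}\bigl(1 - x^d \varepsilon_k^d \cdot(\text{something})\bigr)$ times the identity equals a product of the $T_G$ evaluated at the $d$-th roots, mirroring how $\det(I - x^2 h h^{T})$ arises in the fermionic case.

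Concretely, I would introduce the modes by building raising/lowering (shift) operators from the transfer operator, generalising the construction of Ref.~\cite{fendley2014free}. The single-particle energies $\varepsilon_k$ should emerge as the objects controlling the spectrum of $T_G(x)$: since $[H, T_G(x)]=0$ and $[T_G(x),T_G(y)]=0$, we can simultaneously "diagonalise" (or at least block-triangularise) $H$ and the $T_G(x)$, and the eigenvalue of $T_G(x)$ on a given sector should factor as $\prod_{k}(1 - x\,\lambda_k)$ with $\lambda_k$ a $d$-th power of a scalar built from $\varepsilon_k$. Then the claimed spectral form $H=\sum_k \sum_{r}\omega_d^r \varepsilon_k \prod_p \psi_{r-p,k}$ is obtained by reading off $Q_G^{(1)}=H$ from the $x$-linear coefficient of the factorised $T_G(x)$, after expressing the $\psi_{r,k}$ as the projectors/shift operators onto the parafermionic occupation sectors. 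The identity $Z_G(-\varepsilon_k^{-d})=0$ should then drop out by comparing the operator recursion for $T_G$ with the scalar recursion for $Z_G$ in Propositions~\ref{proposition:VertexRecursionRelation}--\ref{proposition:CliqueRecursionRelation}: the characteristic data of the modes is exactly the root structure of the independence polynomial, because the two recursions are identical up to the substitution $h_v \leftrightarrow w_v$ and a sign, and the operator version, once restricted to a "free" sector, must reduce to the scalar one evaluated at $x = -\varepsilon_k^{-d}$.

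The main obstacle I anticipate is establishing the clean operator identity for $T_G(x)$ in the parafermionic case — the analogue of the fermionic statement that $T_G(x)$ squares (up to scalars) to a product over modes. For free fermions this rests on the Majorana algebra collapsing products to $\pm1$; here the Weyl operators only close up to $\omega_d^{\{-1,0,1\}}$, and crucially one needs $d$-fold products $\prod_{p=1}^d \psi_{r-p,k}$ rather than pairs, so the right object is presumably $T_G(x)T_G(\omega_d x)\cdots T_G(\omega_d^{d-1}x)$ or a similar cyclic product over $d$-th roots of unity. Controlling the phases that accumulate in this $d$-fold product — showing they conspire so that the cross terms telescope and leave only the scalar independence polynomial at the appropriate argument — is where the indifference (not merely claw-free) hypothesis must be used in full force, via the monotonicity of induced paths in the elimination ordering. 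A secondary technical point is verifying that the $\psi_{r,k}$ built this way genuinely satisfy the Majorana-like parafermion relations $\psi^d=1$, $\psi^{d-1}=\psi^\dagger$, $\psi_j^m\psi_k^n=\omega_d^{mn\,\mathrm{sgn}(k-j)}\psi_k^n\psi_j^m$; I would handle this by induction on $|V(G)|$, peeling off a simplicial vertex of the indifference ordering and using Proposition~\ref{proposition:CliqueRecursionRelation} to relate $T_G$ to $T_{G\setminus K}$, so that the modes of $G$ are obtained from those of a smaller indifference graph plus one new mode whose energy solves the reduced independence-polynomial equation.
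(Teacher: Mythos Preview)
Your high-level plan is on target: the paper does exactly prove that $\prod_{m\in\mathbb{Z}_d}T_G(u\omega_d^{-m})=Z_G(-u^d)$ by induction along an oriented perfect elimination ordering (this is Lemma~\ref{lemma:IndependencePolynomialTransferMatrices}), and the single-particle energies are then \emph{defined} by $Z_G(-\varepsilon_k^{-d})=0$. So the ``cyclic product of transfer operators collapses to the scalar independence polynomial'' step you anticipate as the main obstacle is indeed the correct first lemma, and your proposed inductive attack on it is the right one.

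The genuine gap is in how you propose to construct the modes $\psi_{p,k}$. You suggest obtaining them by simultaneously diagonalising $H$ and the family $\{T_G(x)\}$ and reading off projectors onto occupation sectors. This cannot work as stated: everything built purely out of the transfer operators commutes with $H$ (by Corollary~\ref{corollary:HamiltonianTransferOperatorCommutator}), so no such combination can satisfy a nontrivial ladder relation $[H,\psi_{p,k}]=(1-\omega_d)\omega_d^{p+1}\varepsilon_k\psi_{p,k}$. The paper resolves this by introducing an \emph{external} ingredient you have not mentioned: a \emph{simplicial mode} $\chi$, a Weyl operator that commutes with every $h_u$ except the last vertex $v$ of the elimination ordering, where $[\chi,h_v]=(1-\omega_d)\chi h_v$. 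The parafermionic modes are then sandwiches
\[
\psi_{p,k}=\frac{1}{N_k}\,T_G\!\left(\varepsilon_k^{-1}\omega_d^{-p}\right)\,\chi\,\prod_{m=1}^{d-1}T_G\!\left(\varepsilon_k^{-1}\omega_d^{-p-m}\right),
\]
and the whole machinery (the commutator with $H$, the exchange relations, and the projector identities leading to the displayed formula for $H$) runs on repeatedly pushing $\chi$ through transfer operators using Proposition~\ref{proposition:VertexRecursionRelation} at the simplicial vertex $v$. Without $\chi$ there is no seed for the ladder structure.

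A secondary point: you expect the $\psi_{p,k}$ to obey the standard Majorana-like parafermion algebra $\psi^d=1$, $\psi^{d-1}=\psi^\dagger$, $\psi_j^m\psi_k^n=\omega_d^{mn\,\mathrm{sgn}(k-j)}\psi_k^n\psi_j^m$. They do not. What one actually proves (Lemma~\ref{lemma:ParafermionicModeCommutator}) is a deformed exchange law
\[
\left(\omega_d^p\varepsilon_k-\omega_d^{q+1}\varepsilon_l\right)\psi_{p,k}\psi_{q,l}=-\left(\omega_d^q\varepsilon_l-\omega_d^{p+1}\varepsilon_k\right)\psi_{q,l}\psi_{p,k}
\]
together with nilpotency $\psi_{p,k}\psi_{q,k}=0$ for $p\neq q+1$, and these are what make $\mathcal{P}_{r,k}=\prod_{p=1}^d\psi_{r-p,k}$ into genuine orthogonal projectors. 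The final identity $H=\sum_{k,r}\omega_d^r\varepsilon_k\mathcal{P}_{r,k}$ is then obtained not by extracting a linear coefficient but by Lagrange interpolation over the roots $\{\omega_d^{-r}\varepsilon_k^{-1}\}$ of $Z_G(-u^d)$ applied to $T_G^{\succ 0}(u)\,\partial_u T_G^0(u)$.
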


As in the free-fermion case, a general solution to a qudit Hamiltonian requires that the product of transfer operators is equal to the independence polynomial of the frustration graph~\cite{elman2021free}. While the independent set charges commute in models where the frustration graphs are dipath oriented, additional conditions are necessary for the independence polynomial to be expressed as a product of transfer operators. Note that the dipath orientation condition allows for periodic boundary conditions, while the oriented indifference condition does not.

\begin{lemma}[{restate=[name=restatement]IndependencePolynomialTransferMatrices}]
    \label{lemma:IndependencePolynomialTransferMatrices}
    Fix $d\in\mathbb{Z}_{\geq2}$. Let $H$ be a qudit Hamiltonian with oriented indifference frustration graph $G$. Then,
    \begin{equation}
        Z_G\left(-u^d\right) = \prod_{m\in\mathbb{Z}_d}T_G\left(u\omega_d^{-m}\right). \notag
    \end{equation}
\end{lemma}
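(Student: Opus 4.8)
The plan is to prove the identity by induction on $|V(G)|$, using the vertex recursion of Proposition~\ref{proposition:VertexRecursionRelation}; the base case $V(G)=\emptyset$ is immediate, both sides being $1$. Before the induction it helps to record a structural constraint on the right-hand side. Write $F_G(u)\coloneqq\prod_{m\in\mathbb{Z}_d}T_G(u\omega_d^{-m})$. Replacing $u$ by $u\omega_d$ cyclically permutes the $d$ factors, and since an oriented indifference graph is dipath oriented those factors commute by Corollary~\ref{corollary:HamiltonianTransferOperatorCommutator}; hence $F_G(u\omega_d)=F_G(u)$, so $F_G$ is a polynomial in $u^d$, just like $Z_G(-u^d)$. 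Consequently, in the expansion below all contributions carrying a power of $u$ that is not a multiple of $d$ must ultimately cancel, and this is the mechanism we exploit.

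For the inductive step, let $v$ be the last vertex of an indifference ordering of $G$ consistent with the orientation, so that $v$ is simplicial (its neighbourhood $\mathcal{N}(v)$ is a clique) and a sink (every edge at $v$ points into $v$, so conjugation by $h_v$ multiplies each $h_w$ with $w\in\mathcal{N}(v)$ by the fixed phase $\omega_d^{-1}$); moreover $h_v$ commutes with $h_w$ for every $w\notin\mathcal{N}[v]$, and the induced subgraphs $G\setminus\{v\}$ and $G\setminus\mathcal{N}[v]$ are again oriented indifference graphs, so the inductive hypothesis applies to them. Set $A_m\coloneqq T_{G\setminus\{v\}}(u\omega_d^{-m})$ and $B_m\coloneqq T_{G\setminus\mathcal{N}[v]}(u\omega_d^{-m})$, so that $T_G(u\omega_d^{-m})=A_m-u\omega_d^{-m}h_vB_m$ by Proposition~\ref{proposition:VertexRecursionRelation}. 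Expanding $F_G(u)=\prod_{m=0}^{d-1}(A_m-u\omega_d^{-m}h_vB_m)$ produces $2^d$ terms indexed by the subset $A\subseteq\mathbb{Z}_d$ of factors that contribute $h_vB_m$. The $A=\emptyset$ term equals $\prod_m A_m=Z_{G\setminus\{v\}}(-u^d)$ by induction. In the $A=\mathbb{Z}_d$ term, using that $h_v$ commutes with every $B_m$, that the $B_m$ commute with one another (Corollary~\ref{corollary:HamiltonianTransferOperatorCommutator} for $G\setminus\mathcal{N}[v]$), and that $h_v^d=b_v^dI$, everything collapses to a scalar times $\prod_m B_m=Z_{G\setminus\mathcal{N}[v]}(-u^d)$; computing the root-of-unity prefactor (which uses $\prod_m\omega_d^{-m}=(-1)^{d-1}$) identifies this term as $-u^db_v^d\,Z_{G\setminus\mathcal{N}[v]}(-u^d)$. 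Summing these two contributions reproduces exactly the independence-polynomial vertex recursion $Z_G(-u^d)=Z_{G\setminus\{v\}}(-u^d)-u^db_v^d\,Z_{G\setminus\mathcal{N}[v]}(-u^d)$ of Proposition~\ref{proposition:VertexRecursionRelation} (with vertex weight $b_v^d$, consistent with $h_v^d=b_v^dI$). Hence the proof reduces to showing that the sum over proper nonempty $A\subsetneq\mathbb{Z}_d$ of the remaining ``mixed'' terms vanishes.

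This cancellation is the main obstacle. The plan is to move every copy of $h_v$ in a mixed term to the far left of the product: since $\mathcal{N}(v)$ is a clique, each term $h_S$ of a factor $A_m$ meets $\mathcal{N}(v)$ in at most one vertex, so passing $h_v$ across $A_m$ merely rescales the part of $A_m$ supported on a neighbour of $v$ by a power of $\omega_d^{-1}$ that depends only on how many copies of $h_v$ have already passed. After this rearrangement, a mixed term with $|A|=j$ carries the scalar $u^j\prod_{m\in A}\omega_d^{-m}$ (up to further root-of-unity phases), and at lowest order in $u$ all the rearranged $A_m$ and $B_m$ factors reduce to the identity, so summing over all $A$ with $|A|=j$ yields the $j$-th elementary symmetric function of the $d$-th roots of unity times $h_v^j$, which vanishes for $1\le j\le d-1$. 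To kill the remaining higher-order-in-$u$ contributions I expect to need the finer indifference structure: the umbrella (unit-interval) property of the ordering shows that for each $w\in\mathcal{N}(v)$ the neighbours of $w$ in $G\setminus\mathcal{N}[v]$ again form a clique, and the clique recursion of Proposition~\ref{proposition:CliqueRecursionRelation}---applied first to $\mathcal{N}(v)$ inside $G\setminus\{v\}$ and then to these secondary cliques---rewrites the boundary parts of the $A_m$ in terms of transfer operators of graphs on which $h_v$ and the relevant $h_w$ act trivially, after which the phase sums collapse by the same root-of-unity identity. This mirrors the free-fermion argument of Ref.~\cite{chapman2023unified}; I anticipate the bookkeeping of these nested cliques and their orientations to be the one genuinely delicate point, and once it is carried out the induction closes and yields the claimed identity.
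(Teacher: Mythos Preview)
Your inductive scaffolding---peeling off the last vertex $v$ of an oriented indifference ordering, applying the vertex recursion, and identifying the $A=\emptyset$ and $A=\mathbb{Z}_d$ contributions with the two halves of Proposition~\ref{proposition:VertexRecursionRelation}---is exactly the paper's. The observation that $F_G(u\omega_d)=F_G(u)$ is a nice sanity check, though the paper does not use it.

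The genuine gap is the vanishing of the mixed terms. You correctly handle the lowest order in $u$ via elementary symmetric functions of roots of unity, but the higher-order cancellation is only sketched (``I expect to need\ldots'', ``I anticipate the bookkeeping\ldots''). The route you propose---secondary clique recursions on $\mathcal{N}(w)\setminus\mathcal{N}[v]$ for each $w\in\mathcal{N}(v)$---can probably be made to work, but it is substantially more involved than necessary and you have not carried it out.

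The paper avoids this entirely by a different organisation. Rather than splitting each factor as $A_m-u\omega_d^{-m}h_vB_m$, it first combines Propositions~\ref{proposition:VertexRecursionRelation} and~\ref{proposition:CliqueRecursionRelation} to write
\[
T_G\bigl(u\omega_d^{-m}\bigr)=\bigl(1-u\omega_d^{-m}h_v\bigr)\,T_{G\setminus\mathcal{N}[v]}\bigl(u\omega_d^{-m}\bigr)\;-\;u\omega_d^{-m}\sum_{w\in\mathcal{N}(v)}h_w\,T_{G\setminus\mathcal{N}[w]}\bigl(u\omega_d^{-m}\bigr).
\]
In the expansion of the $d$-fold product, suppose the first summand is chosen in $l$ of the factors and the second in the remaining $d-l$. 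Because $h_wh_v=\omega_dh_vh_w$ for every $w\in\mathcal{N}(v)$, each time one commutes a $(1-u\omega_d^{-m}h_v)$ leftwards past one of the $h_w$-terms the exponent shifts by $+1$; after moving all of them to the front they collapse to $\prod_{k=0}^{l-1}(1-u\omega_d^{-k}h_v)$, \emph{independently of which $l$ positions were chosen}. What remains---summed over all such choices---is precisely the degree-$(d-l)$ part, in the variables $\{b_w\}_{w\in\mathcal{N}(v)}$, of $\prod_m T_{G\setminus\{v\}}(u\omega_d^{-m})=Z_{G\setminus\{v\}}(-u^d)$ by the inductive hypothesis. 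Now the cancellation is free: $Z_{G\setminus\{v\}}(-u^d)$ is a polynomial in $\{b_w^d\}$, and since $\mathcal{N}(v)$ is a clique at most one $w\in\mathcal{N}(v)$ can appear in any independent set, so the only nonzero degrees in $\mathcal{N}(v)$ are $0$ and $d$, i.e.\ only $l=d$ and $l=0$ survive. Lemma~\ref{lemma:IndependencePolynomialBaseCaseAlgebraicIdentity} then turns $\prod_{k=0}^{d-1}(1-u\omega_d^{-k}h_v)$ into $1-(ub_v)^d$, and the clique recursion closes the induction.

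In short: instead of proving a separate cancellation, the paper extracts it from the already-established identity for $G\setminus\{v\}$ via a coefficient argument. Your approach is not wrong, but the missing step is the hard one, and the paper's reorganisation sidesteps it completely.
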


We prove Lemma~\ref{lemma:IndependencePolynomialTransferMatrices} in Appendix~\ref{section:IndependencePolynomialTransferMatrices}. To solve the free parafermion model, we need to determine the single-particle energies. As in the free-fermion case, we can identify the single-particle energies of the model using the roots of the independence polynomial. To achieve this, we introduce the concept of simplicial and parafermionic modes.

\begin{definition}[Single-particle energies]
    Fix $d\in\mathbb{Z}_{\geq2}$. Let $H$ be a qudit Hamiltonian with oriented indifference frustration graph $G$. For $k\in[\alpha(G)]$, the \emph{single-particle energies} $\{\varepsilon_k\}$ are defined by
    \begin{equation}
        Z_G\left(-\varepsilon_k^{-d}\right) = 0. \notag
    \end{equation}
\end{definition}

\begin{definition}[Simplicial mode]
    Fix $d\in\mathbb{Z}_{\geq2}$. Let $H$ be a qudit Hamiltonian with oriented indifference frustration graph $G$. A \emph{simplicial mode} with respect to an oriented perfect elimination ordering of $G$ is a Weyl operator $\chi$ such that, for $u \in V(G)$,
    \begin{equation}
        [\chi,h_u] = \delta_{u,v}(1-\omega_d)\chi h_u, \notag
    \end{equation}
    where $v$ denotes the last vertex in the oriented perfect elimination ordering of $G$.
\end{definition}

\begin{definition}[Parafermionic mode]
    Fix $d\in\mathbb{Z}_{\geq2}$. Let $H$ be a qudit Hamiltonian with oriented indifference frustration graph $G$. Further let $\chi$ be a simplicial mode with respect to an oriented perfect elimination ordering of $G$ and let $v$ denote the last vertex in the oriented perfect elimination ordering of $G$. 
    For $p\in\mathbb{Z}_d$ and $k\in[\alpha(G)]$, the \emph{parafermionic mode} $\psi_{p,k}$ of degree $p$ associated with mode $k$ is defined by
    \begin{equation}
        \psi_{p,k} \coloneqq \frac{1}{N_k}T_G^p\left(\varepsilon_k^{-1}\right) \chi T_G^{\succ p}\left(\varepsilon_k^{-1}\right), \notag
    \end{equation}
    where
    \begin{equation}
        N_k \coloneq (1-\omega_d)\left[dZ_{G\setminus\{v\}}\left(-\varepsilon_k^{-d}\right)\left(\varepsilon_k\frac{\partial Z_G\left(-\varepsilon^{-d}\right)}{\partial\varepsilon}\bigg|_{\varepsilon=\varepsilon_k}\right)^{d-1}\right]^{\frac{1}{d}}, \notag
    \end{equation}
    and
    \begin{equation}
        T_G^p(u) \coloneqq T_G\left(u\omega_d^{-p}\right) \quad\text{and}\quad T_G^{\succ p}(u) \coloneqq \prod_{m=1}^{d-1}T_G\left(u\omega_d^{-p-m}\right). \notag
    \end{equation}
\end{definition}

The following lemma establishes the commutation relation between a qudit Hamiltonian with an oriented indifference frustration graph and its corresponding parafermionic modes.
\begin{lemma}[{restate=[name=restatement]HamiltonianParafermionicModeCommutator}]
    \label{lemma:HamiltonianParafermionicModeCommutator}
    Fix $d\in\mathbb{Z}_{\geq2}$. Let $H$ be a qudit Hamiltonian with oriented indifference frustration graph $G$. The single-particle energies $\{\varepsilon_k\}$ and parafermionic modes $\{\psi_{p,k}\}$ satisfy
    \begin{equation}
        [H,\psi_{p,k}] = (1-\omega_d)\omega_d^{p+1}\varepsilon_k\psi_{p,k}. \notag
    \end{equation}
\end{lemma}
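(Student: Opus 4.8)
The plan is to compute $[H, \psi_{p,k}]$ directly from the definition $\psi_{p,k} = \frac{1}{N_k} T_G^p(\varepsilon_k^{-1})\,\chi\,T_G^{\succ p}(\varepsilon_k^{-1})$, exploiting that $H = Q_G^{(1)}$ commutes with every transfer operator by Corollary~\ref{corollary:HamiltonianTransferOperatorCommutator}. Since the transfer operators $T_G(u\omega_d^{-p})$ for all shifts commute with $H$ and with each other, the only nontrivial contribution to the commutator comes from moving $H$ past the single simplicial factor $\chi$. So the first step is to establish a "ladder" identity of the form $H\,\chi = \chi\,(\text{something involving } H \text{ and the transfer operator})$, or more precisely to compute $[H,\chi]$ and then $[H, T_G^p \chi T_G^{\succ p}]$ by pushing $H$ through. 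Using the defining relation $[\chi, h_u] = \delta_{u,v}(1-\omega_d)\chi h_u$, where $v$ is the last vertex of the oriented perfect elimination ordering, we get $[H,\chi] = -[\chi, H] = -(1-\omega_d)\chi h_v = (\omega_d - 1)\chi h_v$, i.e. $H\chi = \chi H + (\omega_d-1)\chi h_v = \chi\bigl(H + (\omega_d-1)h_v\bigr)$. Iterating, $\chi$ conjugates $H$ to $H + (\omega_d - 1)h_v$, and more generally $\chi$ conjugates $h_v$-dependent expressions in a controlled way; I would record that $\chi Q_G^{(i)} \chi^{-1}$ shifts by terms coming from independent sets containing $v$.

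The key algebraic step is then to understand how $\chi$ conjugates the transfer operator. Because $v$ is simplicial, $\mathcal{N}(v)$ is a clique, and one can split $T_G(x) = T_{G\setminus\{v\}}(x) - x h_v T_{G\setminus\mathcal{N}[v]}(x)$ via Proposition~\ref{proposition:VertexRecursionRelation}. Conjugating by $\chi$ fixes $T_{G\setminus\{v\}}(x)$ and $T_{G\setminus\mathcal{N}[v]}(x)$ (neither contains $h_v$, and $\chi$ commutes with all other $h_u$), while $\chi h_v \chi^{-1} = \omega_d h_v$ from the simplicial relation. Hence $\chi T_G(x)\chi^{-1} = T_{G\setminus\{v\}}(x) - \omega_d x\, h_v T_{G\setminus\mathcal{N}[v]}(x)$, which is precisely $T_G(\omega_d x)$ with the $v$-term rescaled — in fact one checks this equals $T_G$ evaluated at a rescaled argument only in the relevant factor, so the cleaner statement is $\chi\, T_G^p(u) = T_G^{p-1}(u)\,\chi$ (up to the $v$-term bookkeeping) — i.e. conjugation by $\chi$ cyclically shifts the "degree" label $p$. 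Assembling: $H\,\psi_{p,k} = \frac{1}{N_k} T_G^p H \chi T_G^{\succ p} = \frac{1}{N_k} T_G^p \chi (H + (\omega_d-1)h_v) T_G^{\succ p}$, and the $H$-piece reassembles into $\psi_{p,k} H$, while the $(\omega_d-1)h_v$-piece must be shown to equal $(1-\omega_d)\omega_d^{p+1}\varepsilon_k \psi_{p,k}$.

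The main obstacle is this last identification: showing that $\frac{1}{N_k}T_G^p(\varepsilon_k^{-1})\,\chi\,h_v\,T_G^{\succ p}(\varepsilon_k^{-1}) = -\omega_d^{p+1}\varepsilon_k\,\psi_{p,k}$, equivalently $\chi h_v T_G^{\succ p}(\varepsilon_k^{-1}) = -\omega_d^{p+1}\varepsilon_k\,\chi\,T_G^{\succ p}(\varepsilon_k^{-1})$ after cancelling the common $T_G^p$ prefactor and $N_k$. This requires knowing that $h_v\,T_G(u\omega_d^{-p-m})$, summed/multiplied over $m=1,\dots,d-1$, collapses because $\varepsilon_k^{-1}$ is chosen so that $Z_G(-\varepsilon_k^{-d})=0$. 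Concretely I expect to use: $T_{G\setminus\mathcal{N}[v]}$ appears in the recursion, $h_v T_G(x) = h_v T_{G\setminus\{v\}}(x) - x h_v^2 T_{G\setminus\mathcal{N}[v]}(x)$ — wait, $h_v$ does not commute with $T_{G\setminus\{v\}}$ in general, so one must instead use the clique recursion at $v$ or the fact that, via the oriented indifference structure, $h_v$ times the remaining transfer operators telescopes. The vanishing of $Z_G$ at $-\varepsilon_k^{-d}$, combined with Lemma~\ref{lemma:IndependencePolynomialTransferMatrices} giving $\prod_{m\in\mathbb{Z}_d}T_G(\varepsilon_k^{-1}\omega_d^{-m}) = Z_G(-\varepsilon_k^{-d}) = 0$, means one of the factors $T_G(\varepsilon_k^{-1}\omega_d^{-m})$ is "singular" on the relevant subspace — this is the analogue of the free-fermion computation and is what forces the scalar $\varepsilon_k$ to emerge. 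I would handle the normalisation $N_k$ last, by matching the overall scalar: its peculiar form involving $Z_{G\setminus\{v\}}(-\varepsilon_k^{-d})$ and the derivative $\partial_\varepsilon Z_G(-\varepsilon^{-d})|_{\varepsilon_k}$ is exactly what is needed for the telescoping product of transfer operators evaluated near the root to produce the clean coefficient $(1-\omega_d)\omega_d^{p+1}\varepsilon_k$, and I would verify this by a residue/L'Hôpital-type expansion of $\prod_m T_G(\varepsilon_k^{-1}\omega_d^{-m})$ around $\varepsilon = \varepsilon_k$.
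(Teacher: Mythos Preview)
Your opening moves are right: by Corollary~\ref{corollary:HamiltonianTransferOperatorCommutator} the commutator collapses to $\frac{1}{N_k}T_G^p(\varepsilon_k^{-1})\,[H,\chi]\,T_G^{\succ p}(\varepsilon_k^{-1})$, and $[H,\chi]$ is proportional to $h_v\chi$. You also correctly flag that $Z_G(-\varepsilon_k^{-d})=0$ via Lemma~\ref{lemma:IndependencePolynomialTransferMatrices} must enter. But the argument breaks down at the ``main obstacle'' step, and the route you sketch to close it does not work.

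First, the reduction ``cancelling the common $T_G^p$ prefactor'' is illegitimate: $T_G^p(\varepsilon_k^{-1})$ is an operator with no reason to be invertible (indeed the full product over $m\in\mathbb{Z}_d$ vanishes). Worse, the identity you are led to, $\chi h_v T_G^{\succ p}(\varepsilon_k^{-1}) = -\omega_d^{p+1}\varepsilon_k\,\chi\,T_G^{\succ p}(\varepsilon_k^{-1})$, is false: since $\chi$ is a Weyl operator and hence invertible, this would force $h_v T_G^{\succ p}(\varepsilon_k^{-1}) = -\omega_d^{p+1}\varepsilon_k\,T_G^{\succ p}(\varepsilon_k^{-1})$, i.e.\ $h_v$ acting as a scalar on the range of $T_G^{\succ p}$, which it does not. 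Your alternative idea, that conjugation by $\chi$ cyclically shifts the degree label, also fails (as you yourself half-notice): $\chi T_G(x)\chi^{-1}=T_{G\setminus\{v\}}(x)-\omega_d x\,h_v T_{G\setminus\mathcal{N}[v]}(x)$ is \emph{not} $T_G(\omega_d x)$, because the argument in the $T_{G\setminus\{v\}}$ and $T_{G\setminus\mathcal{N}[v]}$ factors is unchanged.

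What you are missing is a single algebraic lemma that lets you move $[H,\chi]$ past $T_G^p$ cleanly. The paper proves (Lemma~\ref{lemma:TransferOperatorHamiltonianSimplicialModeCommutator}) the twisted commutation relation
\[
T_G\!\left(u\omega_d^{-p}\right)[H,\chi]-\omega_d[H,\chi]\,T_G\!\left(u\omega_d^{-p}\right)
=(1-\omega_d)\,\frac{\omega_d^{p+1}}{u}\,\bigl[T_G\!\left(u\omega_d^{-p}\right),\chi\bigr],
\]
established via the vertex and clique recursions at the simplicial vertex $v$. Plugging $u=\varepsilon_k^{-1}$ and right-multiplying by $T_G^{\succ p}(\varepsilon_k^{-1})$ splits $T_G^p[H,\chi]T_G^{\succ p}$ into $(1-\omega_d)\omega_d^{p+1}\varepsilon_k\,T_G^p\chi T_G^{\succ p}$ plus a term with $T_G^p T_G^{\succ p}=Z_G(-\varepsilon_k^{-d})=0$ on the right. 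That is the mechanism producing the scalar $\varepsilon_k$; no telescoping of $h_v$ through $T_G^{\succ p}$ is needed. Finally, note that $N_k$ plays no role here---it appears identically on both sides of the claimed identity---so the residue/L'H\^opital discussion in your last paragraph is not relevant to this lemma.
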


We prove Lemma~\ref{lemma:HamiltonianParafermionicModeCommutator} in Appendix~\ref{section:HamiltonianParafermionicModeCommutator}.

\begin{lemma}[{restate=[name=restatement]ParafermionicModeCommutator}]
    \label{lemma:ParafermionicModeCommutator}
    Fix $d\in\mathbb{Z}_{\geq2}$. Let $H$ be a qudit Hamiltonian with oriented indifference frustration graph $G$. The single-particle energies $\{\varepsilon_k\}$ and parafermionic modes $\{\psi_{p,k}\}$ satisfy
    \begin{equation}
        \left(\omega_d^p\varepsilon_k-\omega_d^{q+1}\varepsilon_l\right)\psi_{p,k}\psi_{q,l} = -\left(\omega_d^q\varepsilon_l-\omega_d^{p+1}\varepsilon_k\right)\psi_{q,l}\psi_{p,k}, \notag
    \end{equation}
    for $k$ and $l$ distinct, and 
    \begin{equation}
        \psi_{p,k}\psi_{q,k} = 0, \notag
    \end{equation}
    for $p$ and $q+1$ distinct.
\end{lemma}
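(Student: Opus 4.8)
The plan is to reduce the exchange relations for the parafermionic modes to algebraic identities among products of transfer operators evaluated at scaled arguments, using two ingredients already available: the commutativity of transfer operators at different arguments (Corollary~\ref{corollary:HamiltonianTransferOperatorCommutator}, valid here since an oriented indifference graph is in particular dipath oriented), and Lemma~\ref{lemma:IndependencePolynomialTransferMatrices}, which factors $Z_G(-u^d)$ as $\prod_{m\in\mathbb{Z}_d}T_G(u\omega_d^{-m})$. First I would substitute the definition $\psi_{p,k} = N_k^{-1}\,T_G^p(\varepsilon_k^{-1})\,\chi\,T_G^{\succ p}(\varepsilon_k^{-1})$ into the product $\psi_{p,k}\psi_{q,l}$ and collect all the transfer-operator factors to the left of the two simplicial-mode factors $\chi$. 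The essential move is a commutation rule of the form $\chi\,T_G(u) = T_{G\setminus\{v\}}(u)\,\chi + (\text{correction})$ expressing how $\chi$ passes through a transfer operator; this follows from the vertex recursion $T_G(x)=T_{G\setminus\{v\}}(x)-xh_vT_{G\setminus\mathcal N[v]}(x)$ of Proposition~\ref{proposition:VertexRecursionRelation} together with the defining relation $[\chi,h_u]=\delta_{u,v}(1-\omega_d)\chi h_u$ for the simplicial mode and the fact that $v$ is simplicial (so $\mathcal N[v]$ is a clique and $\chi$ commutes with $T_{G\setminus\mathcal N[v]}$). I expect that the clean statement is $\chi\,T_G(u\omega_d^{-a}) = T_G(u\omega_d^{-a+1})\,\chi$ or a close variant, i.e.\ moving $\chi$ past a transfer operator shifts its spectral parameter by a factor of $\omega_d$; this is precisely what makes the degree index $p$ of $\psi_{p,k}$ behave cyclically.

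With such a shift rule in hand, the second step is bookkeeping: in $\psi_{p,k}\psi_{q,l}$ I push the first $\chi$ all the way to the right past the block $T_G^{\succ p}(\varepsilon_k^{-1})\,T_G^q(\varepsilon_l^{-1})$, picking up spectral-parameter shifts, and similarly arrange $\psi_{q,l}\psi_{p,k}$. After doing this, both orderings should be expressible in terms of the same canonical object — a single product of all $d$ transfer operators $\prod_{m\in\mathbb{Z}_d}T_G(\varepsilon_k^{-1}\omega_d^{-m})=Z_G(-\varepsilon_k^{-d})$ and $\prod_{m}T_G(\varepsilon_l^{-1}\omega_d^{-m})=Z_G(-\varepsilon_l^{-d})$ — up to scalar prefactors that are rational functions of $\omega_d^p\varepsilon_k$, $\omega_d^q\varepsilon_l$ and their shifts, plus the normalisations $N_k,N_l$ (which I can treat as opaque constants since they cancel between the two sides of the claimed identity). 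The linear-in-$\varepsilon$ coefficients $\omega_d^p\varepsilon_k-\omega_d^{q+1}\varepsilon_l$ and $\omega_d^q\varepsilon_l-\omega_d^{p+1}\varepsilon_k$ should emerge as exactly the "leftover" factors when a $T_G$ at one scaled argument is expressed relative to a $T_G$ at an argument differing by the spectral shift — this is where the recursion relations contribute a term $-x h_v T_{G\setminus\mathcal N[v]}$ whose coefficient carries the $\varepsilon_k$ versus $\varepsilon_l$ mismatch. For the degenerate case $\psi_{p,k}\psi_{q,k}=0$ with $q+1\neq p$: once the $\chi$'s are brought together, the product contains a contiguous block $\prod_{m=0}^{d-1}T_G(\varepsilon_k^{-1}\omega_d^{-p-m})$ only when $q+1=p$; otherwise the block of transfer operators that accumulates, after using $Z_G(-\varepsilon_k^{-d})=0$ (the defining equation of the single-particle energy $\varepsilon_k$), collapses to zero because it contains the full factored product $Z_G(-\varepsilon_k^{-d})$ as a subproduct. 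Concretely, two $\chi$ operators separated by fewer than $d$ transfer operators all at $\varepsilon_k$-scaled arguments force the appearance of the vanishing product.

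The main obstacle I anticipate is establishing the $\chi$-through-$T_G$ shift rule cleanly and then controlling the scalar prefactors. Passing $\chi$ past $T_G$ via Proposition~\ref{proposition:VertexRecursionRelation} naively produces a sum of two terms (one with $T_{G\setminus\{v\}}$ and one with $h_v T_{G\setminus\mathcal N[v]}$), and it is not obvious a priori that these recombine into a single $T_G$ at a shifted argument; making this work presumably relies on an identity specific to the oriented indifference structure — perhaps that $\chi h_v$ is itself (proportional to) a simplicial mode, or an inductive argument on $|V(G)|$ peeling off the simplicial vertex $v$. I would first prove this shift lemma in isolation (likely by induction along the oriented perfect elimination ordering, using the clique recursion of Proposition~\ref{proposition:CliqueRecursionRelation} at the clique $\mathcal N[v]$), and only then assemble the exchange relations. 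The remaining prefactor algebra — verifying that the scalars match the stated coefficients $\omega_d^p\varepsilon_k-\omega_d^{q+1}\varepsilon_l$ etc., and that the $N_k$'s cancel — should be a direct if tedious computation using $Z_G(-u^d)=\prod_m T_G(u\omega_d^{-m})$ and its derivative (which feeds into $N_k$), so I would relegate it to the appendix.
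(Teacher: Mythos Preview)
Your proposal rests on a shift rule of the form $\chi\,T_G(u\omega_d^{-a}) = T_G(u\omega_d^{-a+1})\,\chi$, and you correctly flag this as the main risk. Unfortunately it is false, and no close variant salvages it. Using Proposition~\ref{proposition:VertexRecursionRelation} and $h_v\chi=\omega_d\chi h_v$, together with the fact that $\chi$ commutes with every $h_u$ for $u\neq v$, one gets
\[
\chi\,T_G(u)=\bigl[T_{G\setminus\{v\}}(u)-u\omega_d^{-1}h_v\,T_{G\setminus\mathcal N[v]}(u)\bigr]\chi .
\]
Only the $h_v$ term picks up the $\omega_d^{-1}$; the $T_{G\setminus\{v\}}(u)$ term does not shift, and the two pieces do not recombine into $T_G$ at any single scaled argument. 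Iterating this through $T_G^{\succ p}(\varepsilon_k^{-1})\,T_G^q(\varepsilon_l^{-1})$ therefore produces an expanding sum over subgraphs rather than a clean product at shifted parameters, and your bookkeeping program does not close. The same difficulty undermines your mechanism for the vanishing case $\psi_{p,k}\psi_{q,k}=0$: between the two $\chi$'s there are always exactly $d$ transfer-operator factors, so the full product $Z_G(-\varepsilon_k^{-d})$ does not appear simply by counting factors.

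The paper avoids commuting $\chi$ through $T_G$ altogether. Instead it proves a clean scalar exchange rule for $T_G$ with the \emph{full} parafermionic mode,
\[
\bigl(1-u\omega_d^{q-p}\varepsilon_l\bigr)\,T_G(u\omega_d^{-p})\,\psi_{q,l}=\bigl(1-u\omega_d^{q-p+1}\varepsilon_l\bigr)\,\psi_{q,l}\,T_G(u\omega_d^{-p}),
\]
by sandwiching the identity $T_G(u\omega_d^{-p})[H,\chi]-\omega_d[H,\chi]T_G(u\omega_d^{-p})=(1-\omega_d)\tfrac{\omega_d^{p+1}}{u}[T_G(u\omega_d^{-p}),\chi]$ between $T_G^q(\varepsilon_l^{-1})$ and $T_G^{\succ q}(\varepsilon_l^{-1})$ and invoking Lemma~\ref{lemma:HamiltonianParafermionicModeCommutator} to replace $[H,\psi_{q,l}]$ by its eigenvalue. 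A separate short lemma handles $\chi\,\psi_{q,l}$, yielding a symmetric expression in $\chi^2$; comparing the two orderings then gives the stated coefficients as a telescoping product, and the same transfer-operator--mode rule (with $\varepsilon_l=\varepsilon_k$) makes one prefactor vanish when $p\neq q+1$. The missing idea in your plan is precisely this: route everything through $[H,\psi_{q,l}]$ rather than trying to move $\chi$ past individual transfer operators.
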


We prove Lemma~\ref{lemma:ParafermionicModeCommutator} in Appendix~\ref{section:ParafermionicModeCommutator}. In the remainder of this section, we focus on the mathematical details needed to determine the normalisation of the parafermionic modes and to complete the proof of Theorem~\ref{theorem:FreeParafermionSolvability}. To establish the normalisation, we first define the projection operators as products of the parafermionic modes. These operators can be understood as generalisations of the raising and lowering operators commonly used in qubit systems.

\begin{definition}[Projection operators]
    Fix $d\in\mathbb{Z}_{\geq2}$. Let $H$ be a qudit Hamiltonian with oriented indifference frustration graph $G$. For $r\in\mathbb{Z}_d$ and $k\in[\alpha(G)]$, the \emph{projection operator} $\mathcal{P}_{r,k}$ is defined by
    \begin{equation}
        \mathcal{P}_{r,k} \coloneqq \prod_{p=1}^d\psi_{r-p,k}. \notag
    \end{equation}
\end{definition}

\begin{lemma}[{restate=[name=restatement]ProjectorOperatorRelations}]
    \label{lemma:ProjectorOperatorRelations}
    Fix $d\in\mathbb{Z}_{\geq2}$. Let $H$ be a qudit Hamiltonian with oriented indifference frustration graph $G$. The single-particle energies $\{\varepsilon_k\}$ and projection operators $\{\mathcal{P}_{r,k}\}$ satisfy
    \begin{equation}
        \mathcal{P}_{r,k} = \left[\frac{\partial Z_G\left(-\varepsilon^{-d}\right)}{\partial\varepsilon}\bigg|_{\varepsilon=\varepsilon_k}\right]^{-1}T_G^{\succ r}\left(\varepsilon_k^{-1}\right)\frac{\partial T_G^r\left(\varepsilon^{-1}\right)}{\partial\varepsilon}\bigg|_{\varepsilon=\varepsilon_k}, \notag
    \end{equation}
    $\mathcal{P}_{r,k}^2=\mathcal{P}_{r,k}$, and $\mathcal{P}_{r,k}\mathcal{P}_{s,k}=0$ for $r$ and $s$ distinct.
\end{lemma}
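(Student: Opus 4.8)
The plan is to prove the three claims in turn, deducing idempotence from the closed form (which I would establish first) and orthogonality directly from Lemma~\ref{lemma:ParafermionicModeCommutator}. Write $u_k\coloneqq\varepsilon_k^{-1}$. By Lemma~\ref{lemma:IndependencePolynomialTransferMatrices} and Corollary~\ref{corollary:HamiltonianTransferOperatorCommutator}, the transfer operators $\{T_G(u_k\omega_d^j)\}_{j\in\mathbb{Z}_d}$ pairwise commute and, for every $p$, satisfy $T_G^p(u_k)T_G^{\succ p}(u_k)=T_G^{\succ p}(u_k)T_G^p(u_k)=Z_G(-u_k^d)I=0$ since $\varepsilon_k$ is a single-particle energy. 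I will also use that $T_G^r(\varepsilon^{-1})$, $T_G^{\succ r}(\varepsilon^{-1})$ and their $\varepsilon$-derivatives are polynomials in the charges $\{Q_G^{(i)}\}$, which pairwise commute by Theorem~\ref{theorem:IndependentSetChargesCommute}, so all of these operators commute with one another.

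First I would establish the closed form. Substituting the definition of the parafermionic modes gives
\begin{equation*}
    \mathcal{P}_{r,k}=\prod_{p=1}^{d}\psi_{r-p,k}=\frac{1}{N_k^{d}}\prod_{p=1}^{d}\Bigl[T_G^{r-p}(u_k)\,\chi\,T_G^{\succ(r-p)}(u_k)\Bigr].
\end{equation*}
The simplicial-mode relation states exactly that $\chi$ commutes with every Hamiltonian term but $h_v$, with which it $\omega_d$-commutes; via the vertex recursion $T_G(x)=T_{G\setminus\{v\}}(x)-xh_vT_{G\setminus\mathcal{N}[v]}(x)$ of Proposition~\ref{proposition:VertexRecursionRelation}, a copy of $\chi$ can be pushed past a transfer operator at the cost of rescaling its $v$-component by a power of $\omega_d$. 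Commuting the $d$ copies of $\chi$ through the product and regrouping using the commutativity of the transfer operators, the right-hand block $T_G^{\succ(r-p)}(u_k)$ of each mode shares the factor $T_G^{r-p-1}(u_k)$ with the leading factor of the next mode, and the relation $T_G^p(u_k)T_G^{\succ p}(u_k)=0$ then annihilates all but one term of the resulting expansion, leaving an expression proportional to $T_G^{\succ r}(u_k)\,\partial_\varepsilon T_G^r(\varepsilon^{-1})|_{\varepsilon_k}$, the derivative entering through the first-order vanishing of $Z_G$ at $-\varepsilon_k^{-d}$. The residual scalar is exactly what $N_k^{d}$ is designed to cancel: that the prefactor equals $[\partial_\varepsilon Z_G(-\varepsilon^{-d})|_{\varepsilon_k}]^{-1}$ follows from $N_k^{d}=(1-\omega_d)^d\,d\,Z_{G\setminus\{v\}}(-\varepsilon_k^{-d})\bigl(\varepsilon_k\,\partial_\varepsilon Z_G(-\varepsilon^{-d})|_{\varepsilon_k}\bigr)^{d-1}$ together with the vertex recursion for $Z_G$. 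This mirrors the telescoping behind the analogous free-fermion statement.

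Granting the closed form, idempotence is short. Put $c_k\coloneqq\partial_\varepsilon Z_G(-\varepsilon^{-d})|_{\varepsilon_k}$, $A\coloneqq T_G^r(u_k)$, $A'\coloneqq\partial_\varepsilon T_G^r(\varepsilon^{-1})|_{\varepsilon_k}$, $B\coloneqq T_G^{\succ r}(u_k)$ and $B'\coloneqq\partial_\varepsilon T_G^{\succ r}(\varepsilon^{-1})|_{\varepsilon_k}$, so that $\mathcal{P}_{r,k}=c_k^{-1}BA'$. Differentiating the identity $T_G^r(\varepsilon^{-1})T_G^{\succ r}(\varepsilon^{-1})=Z_G(-\varepsilon^{-d})I$ of Lemma~\ref{lemma:IndependencePolynomialTransferMatrices} at $\varepsilon=\varepsilon_k$ gives $A'B+AB'=c_kI$, while $AB=BA=0$; since $A,A',B,B'$ pairwise commute,
\begin{equation*}
    \mathcal{P}_{r,k}^2=c_k^{-2}\,B\,(A'B)\,A'=c_k^{-2}\,B\,(c_kI-AB')\,A'=c_k^{-1}BA'-c_k^{-2}(BA)\,B'A'=c_k^{-1}BA'=\mathcal{P}_{r,k}.
\end{equation*}

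For orthogonality the closed form is not needed: since $\mathcal{P}_{r,k}=\psi_{r-1,k}\cdots\psi_{r-d,k}$ with $\psi_{r-d,k}=\psi_{r,k}$, the product $\mathcal{P}_{r,k}\mathcal{P}_{s,k}$ contains the consecutive pair $\psi_{r,k}\psi_{s-1,k}$, which vanishes by Lemma~\ref{lemma:ParafermionicModeCommutator} whenever $r$ and $(s-1)+1=s$ are distinct; as $r\neq s$ by hypothesis, $\mathcal{P}_{r,k}\mathcal{P}_{s,k}=0$. The main obstacle is the first step: arranging the commutation of the $d$ simplicial modes through the nested transfer-operator products so the telescoping collapses to exactly the stated derivative combination, and matching the leftover scalar against $N_k$; the other two steps are then routine.
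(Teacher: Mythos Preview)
Your arguments for idempotence and orthogonality are both sound. The idempotence step---differentiating $T_G^rT_G^{\succ r}=Z_G(-\varepsilon^{-d})I$ at $\varepsilon_k$ and using the commutativity of polynomials in the charges to get $BA'B A'=B(c_kI-AB')A'=c_kBA'$---is precisely the mechanism the paper isolates as a separate lemma. Your orthogonality argument is in fact a little cleaner than the paper's: you invoke $\psi_{r,k}\psi_{s-1,k}=0$ straight from Lemma~\ref{lemma:ParafermionicModeCommutator}, whereas the paper expands $\psi_{r,k}$ and appeals to the transfer-operator intertwiner.

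The gap is in the closed-form step, and it is more than bookkeeping. Pushing a $\chi$ past $T_G^q(u_k)$ does \emph{not} merely rescale a component: by the vertex recursion one gets $T_G^q(u_k)\chi=\chi T_G^q(u_k)+(1-\omega_d)\omega_d^{-q-1}u_k\,h_v\chi\,T_{G\setminus\mathcal N[v]}^q(u_k)$, so iterating over the $d$ copies of $\chi$ and the $d(d-1)$ interior transfer factors produces an expansion with many terms mixing $T_G$ and $T_{G\setminus\mathcal N[v]}$ and accumulating powers of $h_v$. Your claim that $T_G^pT_G^{\succ p}=0$ ``annihilates all but one term'' is not justified---most terms in that expansion are not of this form---and the mechanism by which a single $\partial_\varepsilon T_G^r$ emerges is left unexplained.

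The paper sidesteps this by never disassembling the later $\psi$'s. Its key device is an intertwining relation between transfer operators and the parafermionic modes themselves,
\[
\bigl(1-u\omega_d^{q-p}\varepsilon_l\bigr)\,T_G(u\omega_d^{-p})\,\psi_{q,l}=\bigl(1-u\omega_d^{q-p+1}\varepsilon_l\bigr)\,\psi_{q,l}\,T_G(u\omega_d^{-p}),
\]
together with its degenerate limit $T_G^p(u_k)\psi_{p,k}=(1-\omega_d)\varepsilon_k\,\psi_{p,k}\,\partial_\varepsilon T_G^p(\varepsilon^{-1})\big|_{\varepsilon_k}$, obtained by letting $u\to u_k$ and using $\psi_{p,k}T_G^p(u_k)=0$; this limit is exactly where the derivative enters. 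Writing only the first factor $\psi_{r-1,k}=\tfrac{1}{N_k}T_G^{r-1}(u_k)\chi T_G^{\succ r-1}(u_k)$ explicitly, the paper then slides each factor of $T_G^{\succ r-1}(u_k)$ to the right through the intact modes $\psi_{r-2,k},\ldots,\psi_{r-d,k}$, picking up controlled scalars and one derivative per mode. Two further short identities (one moving $T_G^{q+1}\chi$ past $\psi_{q,k}$, one collapsing $T_G^{\succ p+1}\partial_\varepsilon T_G^{p+1}\big|_{\varepsilon_k}\psi_{p,k}$ to $\partial_\varepsilon Z_G\big|_{\varepsilon_k}\psi_{p,k}$) and the relation $h_v\chi=\omega_d\chi h_v$ then reduce everything to the stated form with the scalar matching $N_k^d$. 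The moral: work at the level of the $T$--$\psi$ intertwiner rather than repeatedly commuting $\chi$ through transfer operators.
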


We prove Lemma~\ref{lemma:ProjectorOperatorRelations} in Appendix~\ref{section:ProjectorOperatorRelations}.

\begin{lemma}[{restate=[name=restatement]HamiltonianProjectorOperatorRelation}]
    \label{lemma:HamiltonianProjectorOperatorRelation}
    Fix $d\in\mathbb{Z}_{\geq2}$. Let $H$ be a qudit Hamiltonian with oriented indifference frustration graph $G$. The single-particle energies $\{\varepsilon_k\}$ and projection operators $\{\mathcal{P}_{r,k}\}$ satisfy
    \begin{equation}
        H = \sum_{k=1}^{\alpha(G)}\sum_{r\in\mathbb{Z}_d}\omega_d^r\varepsilon_k\mathcal{P}_{r,k}. \notag
    \end{equation}
\end{lemma}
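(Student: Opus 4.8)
The plan is to build the Hamiltonian as a weighted sum of the mutually orthogonal idempotents $\{\mathcal{P}_{r,k}\}$ by leveraging the three previously established structural lemmas. First I would observe that, by Lemma~\ref{lemma:ProjectorOperatorRelations}, for each fixed mode $k$ the operators $\{\mathcal{P}_{r,k}\}_{r\in\mathbb{Z}_d}$ are pairwise-orthogonal projections; together with the (to-be-used) completeness relation $\sum_{r\in\mathbb{Z}_d}\mathcal{P}_{r,k}=I$ on the relevant subspace, this means they form a resolution of the identity, so that the candidate operator $\widetilde{H}\coloneqq\sum_{k}\sum_{r}\omega_d^r\varepsilon_k\mathcal{P}_{r,k}$ has exactly the claimed free-parafermion spectrum. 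The task is then to show $\widetilde{H}=H$.

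The key step is to relate $H$ to the projectors through the parafermionic modes. From Lemma~\ref{lemma:HamiltonianParafermionicModeCommutator}, $[H,\psi_{p,k}]=(1-\omega_d)\omega_d^{p+1}\varepsilon_k\psi_{p,k}$, i.e.\ $H\psi_{p,k}=\psi_{p,k}\bigl(H+(1-\omega_d)\omega_d^{p+1}\varepsilon_k\bigr)$. Iterating this around the product $\mathcal{P}_{r,k}=\prod_{p=1}^d\psi_{r-p,k}$ and telescoping the shifts $\omega_d^{r-p+1}$ over $p=1,\dots,d$, I would compute $H\mathcal{P}_{r,k}-\mathcal{P}_{r,k}H$ and find it equals $c_{r,k}\mathcal{P}_{r,k}$ for an explicit constant; since $\sum_{p=1}^d(1-\omega_d)\omega_d^{r-p+1}\varepsilon_k=(1-\omega_d)\varepsilon_k\sum_{p=1}^d\omega_d^{r-p+1}=0$ (a full sum of $d$-th roots of unity), the bulk part cancels, and what survives is a single contribution, giving $[H,\mathcal{P}_{r,k}]=\omega_d^{r}\varepsilon_k(1-\omega_d^{?})\mathcal{P}_{r,k}$—more precisely I expect the commutator to land in the form $(\omega_d^{r}-\omega_d^{r+?})\varepsilon_k\mathcal{P}_{r,k}$ after bookkeeping. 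Then, multiplying $\widetilde{H}$ on the right (or left) by a fixed $\mathcal{P}_{s,l}$, orthogonality $\mathcal{P}_{r,k}\mathcal{P}_{s,k}=\delta_{r,s}\mathcal{P}_{s,k}$ within a mode collapses the $r$-sum, and I would argue the cross terms in $k\neq l$ either vanish or combine consistently using Lemma~\ref{lemma:ParafermionicModeCommutator}, ultimately showing $\widetilde{H}$ and $H$ have the same action.

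An alternative, and perhaps cleaner, route is to identify $\widetilde{H}$ spectrally: show that $\widetilde{H}$ commutes with every $h_v$ up to the same phase structure that characterises $H$, or equivalently show $H-\widetilde{H}$ commutes with all projectors and all parafermionic modes, hence is central; then a dimension/trace count (using that $\operatorname{tr} h_S=0$ for any nonempty independent set $S$, so $\operatorname{tr} H=0$, matched against $\operatorname{tr}\widetilde{H}=0$ since each $\mathcal{P}_{r,k}$ sums to a projector independent of $r$) forces $H=\widetilde{H}$. I would likely combine both: use the commutator computation to pin down the eigenstructure and the trace argument to fix the additive constant.

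The main obstacle I anticipate is the completeness/closure step: establishing that $\sum_{k}\sum_{r}\mathcal{P}_{r,k}$ acts as the identity (equivalently, that the $\alpha(G)$ modes are independent and ``span'' the Hilbert space in the appropriate sense), since the lemmas quoted give orthogonality and idempotency within each mode but not automatically across modes. Handling the inter-mode algebra—showing $\mathcal{P}_{r,k}$ and $\mathcal{P}_{s,l}$ for $k\neq l$ commute and that the total number of orthogonal projectors matches $d^{\alpha(G)}$—will require careful use of Lemma~\ref{lemma:ParafermionicModeCommutator} together with the oriented-indifference structure, and this is where the bulk of the real work lies; the rest is telescoping sums of roots of unity.
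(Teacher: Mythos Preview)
Your telescoping computation actually yields $[H,\mathcal{P}_{r,k}]=0$ on the nose: the shift sum $\sum_{p=1}^d(1-\omega_d)\omega_d^{r-p+1}\varepsilon_k$ vanishes completely, so there is no ``single contribution'' left over. This is consistent with $\mathcal{P}_{r,k}$ being a spectral projector for $H$, but it gives you no leverage toward $H=\widetilde{H}$; it only says $H$ commutes with every projector, which $\widetilde{H}$ does trivially. Your fallback---show $H-\widetilde{H}$ is central, then kill it by a trace---is where the real gap is. Centrality would require controlling $[\mathcal{P}_{s,l},\psi_{p,k}]$ for $k\neq l$, and even granting that, a traceless central element need not vanish unless you know the projectors resolve the identity, which is precisely the completeness statement you flag as unproven. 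So both routes stall at the same missing ingredient, and you have no plan to supply it.

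The paper avoids all of this by a direct computation that never touches completeness or commutator bookkeeping. It uses the explicit transfer-operator expression for $\mathcal{P}_{r,k}$ from Lemma~\ref{lemma:ProjectorOperatorRelations},
\[
\mathcal{P}_{r,k}=\Bigl[\partial_\varepsilon Z_G(-\varepsilon^{-d})\big|_{\varepsilon_k}\Bigr]^{-1}T_G^{\succ r}(\varepsilon_k^{-1})\,\partial_\varepsilon T_G^r(\varepsilon^{-1})\big|_{\varepsilon_k},
\]
rewrites the normalising derivative via the factor theorem, and recognises the resulting double sum over $(r,k)$ as a Lagrange interpolation of the degree-$(d\alpha(G)-1)$ operator polynomial $u\mapsto T_G^{\succ 0}(u)\,\partial_u T_G^0(u)$ at the $d\alpha(G)$ nodes $u=\omega_d^{-r}\varepsilon_k^{-1}$, evaluated at $u=0$. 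Since $T_G^{\succ 0}(0)=I$ and $\partial_u T_G^0(u)|_{u=0}=-H$, the interpolation collapses to $H$. The key idea you are missing is to exploit that closed form for $\mathcal{P}_{r,k}$ rather than its definition as a product of parafermionic modes.
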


We prove Lemma~\ref{lemma:HamiltonianProjectorOperatorRelation} in Appendix~\ref{section:HamiltonianProjectorOperatorRelation}. Combining Lemma~\ref{lemma:HamiltonianParafermionicModeCommutator}, Lemma~\ref{lemma:ProjectorOperatorRelations}, and Lemma~\ref{lemma:HamiltonianProjectorOperatorRelation} proves Theorem~\ref{theorem:FreeParafermionSolvability}.

\section{Dipath Orientation via Switching Operations}
\label{section:DipathOrientationViaSwitchingOperations}

In this section, we establish an efficient algorithm for deciding whether an oriented graph can be dipath oriented via switching operations.

\begin{theorem}
    \label{theorem:DipathOrientationSwitchingAlgorithm}
    Let $G$ be an oriented graph. There is an polynomial-time algorithm to decide whether there exists a dipath-oriented graph in the switching orbit of $G$.
\end{theorem}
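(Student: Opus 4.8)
The plan is to reduce the decision problem to testing the consistency of a system of linear equations over $\mathbb{F}_2$. The first observation is that switching does not alter the underlying undirected graph of $G$, so the family of induced paths, viewed as vertex subsets, is the same for every oriented graph in the switching orbit. Hence the question is whether one can choose a set of vertices to switch so that every induced path acquires a consistent (i.e.\ dipath) orientation. A switching is specified by a subset $S\subseteq V(G)$, which I encode by indicator variables $x_v\in\mathbb{F}_2$; after switching, an edge $\{u,w\}$ points toward $w$ if and only if $\mathrm{into}(u,w)+x_u+x_w=1\pmod 2$, where $\mathrm{into}(u,w)\in\{0,1\}$ records the current orientation of that edge.

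Next I would reduce the global condition ``every induced path is a dipath'' to the local condition ``every induced $P_3$ is a dipath'' by induction on the length of the path. Given an induced path $v_1v_2\cdots v_n$ with $n\ge 4$, the sub-path $v_1\cdots v_{n-1}$ is induced and hence, by induction, a dipath, say $v_1\to\cdots\to v_{n-1}$; moreover $v_{n-2}v_{n-1}v_n$ is an induced $P_3$, so by hypothesis it is a dipath, and since $v_{n-2}\to v_{n-1}$ it must be $v_{n-2}\to v_{n-1}\to v_n$, so the whole path is a dipath. There are only $O(n^3)$ induced $P_3$'s, which gives the finite certificate we need.

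For each induced $P_3$ with centre $b$ and endpoints $a,c$, being a dipath means exactly one of its two edges points into $b$, i.e.\ $\mathrm{into}(a,b)+\mathrm{into}(c,b)\equiv 1$ after switching. Substituting the switched orientations and cancelling the $x_b$ terms yields the constraint $x_a+x_c\equiv 1+\mathrm{into}(a,b)+\mathrm{into}(c,b)\pmod 2$, involving only the two endpoint variables. Collecting these constraints over all induced $P_3$'s produces a linear system in $\{x_v\}_{v\in V(G)}$, and a dipath-oriented graph exists in the switching orbit if and only if this system is consistent; a solution, when it exists, exhibits the required switching sequence. Consistency is decided by Gaussian elimination --- in fact, since every constraint binds only two variables, by a union-find/parity-propagation argument --- in polynomial time. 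This also automatically excludes graphs with an induced claw: summing the three $P_3$-constraints through the centre of a claw gives $0\equiv 1$.

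The one point that needs care is the equivalence between the global condition (all induced paths) and the local one (all induced $P_3$'s), together with checking that the $P_3$ condition is genuinely $\mathbb{F}_2$-linear in the switching variables --- specifically that the centre variable $x_b$ drops out, so each equation constrains only the two endpoints. Once this is established, the algorithm and its polynomial running time follow immediately; I do not anticipate a substantive combinatorial obstacle beyond this bookkeeping.
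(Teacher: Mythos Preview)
Your proposal is correct and essentially identical to the paper's proof: both reduce the question to a system of $\mathbb{F}_2$-linear equations, one per induced $P_3$, in which the centre variable cancels and only the two endpoint switching variables remain, and then solve by Gaussian elimination. You actually supply more detail than the paper does, namely the induction showing that dipath-orienting every induced $P_3$ suffices for all induced paths (the paper states this without proof) and the pleasant observation that an induced claw forces the system to be inconsistent.
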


\begin{proof}
    Let $\mathcal{P}_2$ denote the set of all induced paths of length $2$ in $G$. Further let $x_e$ be an indicator variable for whether $e$ is in $E(G)$ and let $s_v$ be an indicator variable for whether a switching has been applied to vertex $v$ in $V(G)$. Consider the following system of equations over $\mathbb{F}_2$:
    \begin{equation}
        \left\{x_{(u,v)}+x_{(v,w)}+s_u+s_w\right\}_{(u,v,w)\in\mathcal{P}_2}. \notag
    \end{equation}
    Note that each equation evaluates to zero if and only if the corresponding path is a dipath. To ensure the graph is dipath oriented, it is necessary and sufficient that every induced path of length 2 in the graph is dipath oriented. This condition is satisfied if and only if the system of equations for all such paths evaluates to zero. Thus, the system of equations evaluates to zero if and only if the graph is dipath oriented. Since this is a system of linear equations over $\mathbb{F}_2$, it can be solved in polynomial time using Gaussian elimination, completing the proof.
\end{proof}

\section{Applications}
\label{section:Applications}

In this section, we apply our framework to three models. Firstly, we apply our framework to Baxter's model~\cite{baxter1989simple}. The model is defined on a line with respect to a parameter $n\in\mathbb{N}$. The Hamiltonian is given by
\begin{equation}
    H_n = \sum_{j=1}^naZ^\dagger_{j}Z_{j+1}+\sum_{j=1}^{n+1}bX_{j}, \notag
\end{equation}
where $a$ and $b$ are real parameters. The frustration graph of the model is a dipath, and therefore, an oriented indifference graph. Hence, Theorem~\ref{theorem:IndependentSetChargesCommute} and Theorem~\ref{theorem:FreeParafermionSolvability} apply.

We now apply our framework to the class of one-dimensional qudit spin models introduced in Ref.~\cite{alcaraz2020integrable}, which are solvable via a mapping to free parafermions. This class of models generalise the free-fermion models of Ref.~\cite{fendley2019free} to qudit systems. The class is defined with respect to parameters $n\in\mathbb{N}$ and $p\in\mathbb{N}$. The Hamiltonian is given by
\begin{equation}
    H_{n,p} = \sum_{j=1}^na_jX_jX_{j+1} \ldots X_{j+p-1}Z_{j+p}, \notag
\end{equation}
where $\{a_j\}$ are real parameters.

\begin{figure}[ht!]
    \centering
    \includegraphics[width=0.9\textwidth]{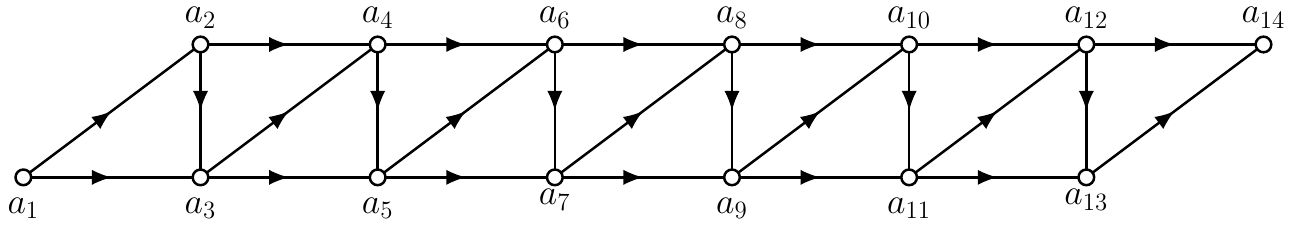}
    \caption{The frustration graph for the one-dimensional model introduced in Ref.~\cite{alcaraz2020integrable} with $n=14$ and $p=2$.}
    \label{figure:alcarazpimentamodel}
\end{figure}

The frustration graph of the model with $n=14$ and $p=2$ is shown in Fig.~\ref{figure:alcarazpimentamodel}. The frustration graph is an oriented indifference graph for all $n\in\mathbb{N}$ and all $p\in\mathbb{N}$. Hence, Theorem~\ref{theorem:IndependentSetChargesCommute} and Theorem~\ref{theorem:FreeParafermionSolvability} apply.

Finally, we introduce a one-dimensional qudit spin model and apply our framework. The model is defined on a line with respect to a parameter $n\in\mathbb{N}$, with qudits with $d=3$ located at sites $j$, $j+\frac{1}{3}$, and $j+\frac{2}{3}$ for $j\in[n]$. The Hamiltonian is given by
\begin{align}
    H_n = \sum_{j=1}^n&\left[aX_{j}Z^\dagger_{j+\frac{1}{3}}+b\omega_3(XZ^\dagger)_{j+\frac{1}{3}}Z_{j+\frac{2}{3}}+cX_{j+\frac{1}{3}}Z_{j+\frac{2}{3}}\right. \notag \\
    &\quad+ \left.dZ^\dagger_{j+\frac{1}{3}}X_{j+\frac{2}{3}}+e\omega_3^\frac{1}{2}Z^\dagger_{j+\frac{1}{3}}(XZ)_{j+\frac{2}{3}}+fZ_{j+\frac{2}{3}}Z^\dagger_{j+1}\right], \notag
\end{align}
where $a$, $b$, $c$, $d$, $e$, and $f$ are real parameters.

\begin{figure}[ht!]
    \centering
    \includegraphics[width=0.9\textwidth]{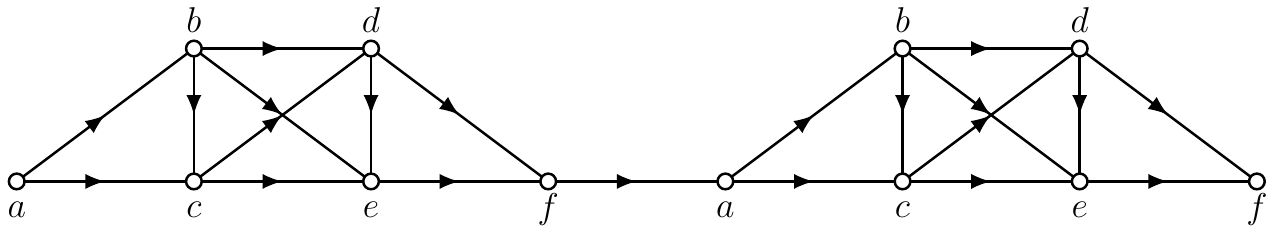}
    \caption{The frustration graph for the one-dimensional model with $n=2$.}
    \label{figure:exampleapplication}
\end{figure}

The frustration graph of the model with $n=2$ is shown in Fig.~\ref{figure:exampleapplication}. The frustration graph is an oriented indifference graph for all $n\in\mathbb{N}$. Hence, Theorem~\ref{theorem:IndependentSetChargesCommute} and Theorem~\ref{theorem:FreeParafermionSolvability} apply. By Theorem~\ref{theorem:IndependentSetChargesCommute}, the model is integrable in the sense that there is a family of commuting independent set charges. By Theorem~\ref{theorem:FreeParafermionSolvability}, the model admits an explicit free-parafermion solution, and the single-particle energies are identified with the roots of the independence polynomial. The independence polynomial can be computed efficiently via the algorithm of Ref.~\cite{achlioptas2021local} or by solving a recurrence relation obtained by applying Proposition~\ref{proposition:CliqueRecursionRelation}.

\section{Conclusion \& Outlook}
\label{section:ConclusionAndOutlook}

We have developed a graph-theoretic framework for free-parafermion solvability and integrability. Our results show that if the frustration graph of a model is an oriented indifference graph, then it admits an exact free-parafermion solution. Further, we have shown that if the frustration graph of a model can be dipath oriented via switching operations, then the model is integrable in the sense that there is a family of commuting independent set charges. Additionally, we have established an efficient algorithm for determining whether this is possible. Finally, we applied our results to solve three qudit spin models.

There is a well-understood connection between free fermions and matchgates~\cite{knill2001fermionic, terhal2002classical}. Research has investigated disguised free-fermion circuits~\cite{pozsgay2024quantum, vona2025exact}. Exploring generalisations of matchgates and extending these results to free parafermions would be interesting. Further, while there exists an efficient algorithm for deciding whether a graph is an indifference graph~\cite{looges1993optimal}, it would be interesting to develop an efficient algorithm for deciding whether an oriented graph can be transformed into an oriented indifference graph via switching operations.

\section*{Acknowledgements}

We thank Paul Fendley, Catherine Greenhill, and Jannis Ruh for helpful discussions. RLM was supported by the ARC Centre of Excellence for Quantum Computation and Communication Technology (CQC2T), project number CE170100012. SJE was supported with funding from the Defense Advanced Research Projects Agency under the Quantum Benchmarking (QB) program under award no. HR00112230007, HR001121S0026, and HR001122C0074 contracts. DRW is supported by the Australian Research Council. AC acknowledges support from EPSRC under agreement EP/T001062/1, and from EU H2020-FETFLAG-03-2018 under grant agreement no. 820495 (AQTION). The views, opinions and/or findings expressed are those of the authors and should not be interpreted as representing the official views or policies of the Department of Defense or the U.S. Government.

\appendix

\section{Proof of Lemma~\ref*{lemma:IndependencePolynomialTransferMatrices}}
\label{section:IndependencePolynomialTransferMatrices}

\IndependencePolynomialTransferMatrices*

To prove Lemma~\ref{lemma:IndependencePolynomialTransferMatrices}, we first require the following lemma.

\begin{lemma}
    \label{lemma:IndependencePolynomialBaseCaseAlgebraicIdentity}
    Fix $d\in\mathbb{Z}^+$. Then, for $l\in\mathbb{Z}_{d+1}$,
    \begin{equation}
        (-1)^l\sum_{\substack{S\subseteq\mathbb{Z}_d \\ \abs{S}=l}}\prod_{s \in S}\omega_d^{-s} = 
        \begin{cases}
            +1, & \text{if $l=0$}, \\
            -1, & \text{if $l=d$}, \\
            0, & \text{otherwise}.
      \end{cases} \notag
    \end{equation}
\end{lemma}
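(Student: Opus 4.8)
The plan is to recognise the left-hand side as (up to sign) an elementary symmetric polynomial evaluated at the set of all $d$-th roots of unity, and then read off its value from Vieta's formulas applied to the polynomial $t^d - 1$.

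Concretely, first I would observe that as $s$ ranges over $\mathbb{Z}_d$, the quantities $\omega_d^{-s}$ range over all $d$-th roots of unity, each exactly once. Hence
\begin{equation}
    \prod_{s\in\mathbb{Z}_d}\left(t-\omega_d^{-s}\right) = t^d - 1. \notag
\end{equation}
Next I would expand the left-hand side in terms of elementary symmetric polynomials: writing $e_l \coloneqq \sum_{S\subseteq\mathbb{Z}_d,\,|S|=l}\prod_{s\in S}\omega_d^{-s}$, we have
\begin{equation}
    \prod_{s\in\mathbb{Z}_d}\left(t-\omega_d^{-s}\right) = \sum_{l=0}^{d}(-1)^l e_l\, t^{d-l}. \notag
\end{equation}
Comparing coefficients of $t^{d-l}$ on both sides of the two displayed identities gives $(-1)^l e_l = 1$ when $l=0$, $(-1)^l e_l = -1$ when $l=d$, and $(-1)^l e_l = 0$ otherwise, since the polynomial $t^d-1$ has only the two nonzero coefficients $1$ (at $t^d$) and $-1$ (at $t^0$). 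This is exactly the claimed formula, since $(-1)^l e_l$ is precisely the left-hand side of the statement.

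There is no substantial obstacle here: the only thing to be careful about is the bookkeeping of signs and the indexing range $l\in\{0,1,\dots,d\}$, and the trivial edge observation that the map $s\mapsto\omega_d^{-s}$ is a bijection from $\mathbb{Z}_d$ onto the group of $d$-th roots of unity (so no root is repeated and the factorisation of $t^d-1$ is the full one). An entirely equivalent route, if one prefers to avoid naming Vieta, is to argue directly: for $0<l<d$ one can show $\sum_{|S|=l}\prod_{s\in S}\omega_d^{-s}=0$ by a roots-of-unity cancellation argument (multiplying the sum by $\omega_d^{-1}$ permutes the index sets and rescales by $\omega_d^{-l}\neq 1$), while the cases $l=0$ and $l=d$ are immediate. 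I would present the generating-function version as it is the cleanest.
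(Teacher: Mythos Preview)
Your proof is correct and is essentially the same as the paper's: the paper observes that the polynomials $\prod_{m\in\mathbb{Z}_d}(1-u\omega_d^{-m})$ and $1-u^d$ coincide (same roots, same constant term), expands the product, and compares coefficients of $u^l$. Your version just uses the monic form $\prod_s(t-\omega_d^{-s})=t^d-1$ instead, which amounts to the substitution $t\leftrightarrow u^{-1}$ and gives exactly the same coefficient comparison.
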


\begin{proof}
    The proof follows by observing that the polynomials
    \begin{equation}
        \prod_{m\in\mathbb{Z}_d}\left(1-u\omega_d^{-m}\right) = \sum_{l=0}^d(-u)^l\sum_{\substack{S\subseteq\mathbb{Z}_d \\ \abs{S}=l}}\prod_{s \in S}\omega_d^{-s} \quad\text{and}\quad 1-u^d \notag  
    \end{equation}
    are equal since they have identical roots and identical constant term.
\end{proof}

\begin{proof}[Proof of Lemma~\ref*{lemma:IndependencePolynomialTransferMatrices}.]
    We prove the statement by induction on an oriented perfect elimination ordering of $G$. First, we prove the base case of a graph $G_0$ comprising a single vertex $v_0$. By applying Lemma~\ref{lemma:IndependencePolynomialBaseCaseAlgebraicIdentity},
    \begin{equation}
        \prod_{m\in\mathbb{Z}_d}T_{G_0}\left(u\omega_d^{-m}\right) = \prod_{m\in\mathbb{Z}_d}\left(1-u\omega_d^{-m}h_{v_0}\right) = \sum_{l=0}^d\left(-uh_{v_0}\right)^l\sum_{\substack{S\subseteq\mathbb{Z}_d \\ \abs{S}=l}}\prod_{s \in S}\omega_d^{-s} = 1-(ub_v)^d = Z_{G_0}\left(-u^d\right). \notag
    \end{equation}
    Let $v$ denote the last vertex in the oriented perfect elimination ordering of $G$. We now prove the statement is true for $G$ under the assumption that it is true for $G{\setminus}\{v\}$. Note that since $v$ is the last vertex in an oriented perfect elimination ordering of $G$, $N(v)$ is a clique. By applying the inductive hypothesis and Proposition~\ref{proposition:CliqueRecursionRelation}, we obtain
    \begin{align}
        Z_{G\setminus\{v\}}\left(-u^d\right) &= \prod_{m\in\mathbb{Z}_d}T_{G\setminus\{v\}}\left(u\omega_d^{-m}\right) \notag \\
        &= \prod_{m\in\mathbb{Z}_d}\left[T_{G\setminus\mathcal{N}[v]}\left(u\omega_d^{-m}\right)-u\omega_d^{-m}\sum_{w\in\mathcal{N}(v)}h_wT_{G\setminus\mathcal{N}[w]}\left(u\omega_d^{-m}\right)\right]. \notag
    \end{align}
    Now, by applying Proposition~\ref{proposition:VertexRecursionRelation} and Proposition~\ref{proposition:CliqueRecursionRelation}, we have
    \begin{align}
        \prod_{m\in\mathbb{Z}_d}T_G\left(u\omega_d^{-m}\right) &= \prod_{m\in\mathbb{Z}_d}\left[T_{G\setminus\{v\}}\left(u\omega_d^{-m}\right)-u\omega_d^{-m}h_vT_{G\setminus\mathcal{N}[v]}\left(u\omega_d^{-m}\right)\right] \notag \\
        &= \prod_{m\in\mathbb{Z}_d}\left[\left(1-u\omega_d^{-m}h_v\right)T_{G\setminus\mathcal{N}[v]}\left(u\omega_d^{-m}\right)-u\omega_d^{-m}\sum_{w\in\mathcal{N}(v)}h_wT_{G\setminus\mathcal{N}[w]}\left(u\omega_d^{-m}\right)\right]. \notag
    \end{align}
    For a subset of vertices $U$ of $G$ and $\alpha\in\mathbb{N}^U$, define $b^\alpha\coloneqq\prod_{w \in U}b_w^{\alpha_w}$. By expanding the product and using $h_wh_v=\omega_dh_vh_w$ for all $w\in\mathcal{N}(v)$, we have
    \begin{equation}
        \prod_{m\in\mathbb{Z}_d}T_G\left(u\omega_d^{-m}\right) = \sum_{l=0}^d\left[\prod_{k=0}^{l-1}\left(1-u\omega_d^{-k}h_v\right)\right]\sum_{\substack{\alpha\in\mathbb{N}^{\mathcal{N}(v)} \\ \abs{\alpha}=d-l}}b^\alpha[b^\alpha]Z_{G\setminus\{v\}}\left(-u^d\right), \notag
    \end{equation}
    where $[b^\alpha]Z_{G\setminus\{v\}}\left(-u^d\right)$ denotes the $b^\alpha$ coefficient in $Z_{G\setminus\{v\}}\left(-u^d\right)$. Hence, by applying Proposition~\ref{proposition:CliqueRecursionRelation},
    \begin{align}
        \prod_{m\in\mathbb{Z}_d}T_G\left(u\omega_d^{-m}\right) &= \left[\prod_{k=0}^{d-1}\left(1-u\omega_d^{-k}h_v\right)\right]Z_{G\setminus\mathcal{N}[v]}\left(-u^d\right)
        -u^d\sum_{w\in\mathcal{N}(v)}b_w^dZ_{G\setminus\mathcal{N}[w]}\left(-u^d\right) \notag \\
        &= \left[1-(ub_v)^d\right]Z_{G\setminus\mathcal{N}[v]}\left(-u^d\right)
        -u^d\sum_{w\in\mathcal{N}(v)}b_w^dZ_{G\setminus\mathcal{N}[w]}\left(-u^d\right) \notag \\
        &= Z_{G\setminus\mathcal{N}[v]}\left(-u^d\right)
        -u^d\sum_{w\in\mathcal{N}[v]}b_w^dZ_{G\setminus\mathcal{N}[w]}\left(-u^d\right) \notag \\
        &= Z_G\left(-u^d\right). \notag
    \end{align}
    This completes the proof.
\end{proof}

\section{Proof of Lemma~\ref*{lemma:HamiltonianParafermionicModeCommutator}}
\label{section:HamiltonianParafermionicModeCommutator}

\HamiltonianParafermionicModeCommutator*

To prove Lemma~\ref{lemma:HamiltonianParafermionicModeCommutator}, we first require the following lemmas.

\begin{lemma}
    \label{lemma:TransferOperatorSimplicialModeCommutator}
    Fix $d\in\mathbb{Z}_{\geq2}$. Let $H$ be a qudit Hamiltonian with oriented indifference frustration graph $G$. Further let $\chi$ be a simplicial mode with respect to an oriented perfect elimination ordering of $G$ and let $v$ denote the last vertex in the oriented perfect elimination ordering of $G$. Then, for all $p\in\mathbb{Z}$,
    \begin{equation}
        \frac{\omega_d^{p+1}}{u}\left[T_G\left(u\omega_d^{-p}\right),\chi\right] = (1-\omega_d)h_v \chi T_{G \setminus N[v]}\left(u\omega_d^{-p}\right). \notag
    \end{equation}
\end{lemma}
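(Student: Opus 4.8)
The plan is to prove the identity by induction on an oriented perfect elimination ordering of $G$, mirroring the structure of the proof of Lemma~\ref{lemma:IndependencePolynomialTransferMatrices}. Write $w$ for the last vertex of the ordering (I will use $w$ to avoid clashing with the dummy $u$ in the statement — actually the statement uses $v$ for the last vertex, so I keep $v$). The base case is the single-vertex graph $G_0 = \{v\}$, where $T_{G_0}(x) = I - x h_v$ and $T_{G_0 \setminus N[v]}(x) = T_{\emptyset}(x) = I$. Then the left-hand side is $\tfrac{\omega_d^{p+1}}{u}[I - u\omega_d^{-p}h_v, \chi] = -\omega_d\,[h_v,\chi]$, and by the defining property of a simplicial mode $[\chi,h_v] = (1-\omega_d)\chi h_v$, so $[h_v,\chi] = -(1-\omega_d)\chi h_v$; since $\chi$ and $h_v$ commute up to the phase $\omega_d$ we can rewrite $-\omega_d \cdot(-(1-\omega_d))\chi h_v = \omega_d(1-\omega_d)\chi h_v = (1-\omega_d) h_v \chi$, matching the right-hand side.

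For the inductive step I would expand $T_G$ using the vertex recursion (Proposition~\ref{proposition:VertexRecursionRelation}) at the simplicial vertex $v$: $T_G(x) = T_{G\setminus\{v\}}(x) - x h_v T_{G\setminus N[v]}(x)$. Taking the commutator with $\chi$, the first term $[T_{G\setminus\{v\}}(x),\chi]$ vanishes because $\chi$ commutes with every $h_u$ for $u \neq v$, hence with every $h_S$ supported on $V(G)\setminus\{v\}$, hence with $T_{G\setminus\{v\}}(x)$. So only the $-x h_v T_{G\setminus N[v]}(x)$ term contributes, and one needs $[h_v T_{G\setminus N[v]}(x), \chi]$. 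Here I would use that $\chi$ commutes with $T_{G\setminus N[v]}(x)$ as well (it is supported off $v$), so $[h_v T_{G\setminus N[v]}(x),\chi] = [h_v,\chi] T_{G\setminus N[v]}(x) + (\text{phase correction from moving }\chi\text{ past }T_{G\setminus N[v]})$. More carefully: $h_v T_{G\setminus N[v]}(x)\chi - \chi h_v T_{G\setminus N[v]}(x) = h_v(T_{G\setminus N[v]}(x)\chi - \chi T_{G\setminus N[v]}(x)) + (h_v\chi - \chi h_v)T_{G\setminus N[v]}(x) = [h_v,\chi]T_{G\setminus N[v]}(x)$, using commutativity of $\chi$ with $T_{G\setminus N[v]}(x)$. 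Substituting $[h_v,\chi] = -(1-\omega_d)\chi h_v = -\omega_d^{-1}(1-\omega_d)h_v\chi$ and collecting the prefactor $-\tfrac{\omega_d^{p+1}}{u}\cdot(- u\omega_d^{-p}) = \omega_d$ yields exactly $(1-\omega_d) h_v \chi T_{G\setminus N[v]}(x)$ with $x = u\omega_d^{-p}$.

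In fact this argument does not even require induction — it works directly from the vertex recursion, so I would present it as a direct computation rather than an induction. The key facts used are: (i) $\chi$, being a simplicial mode, commutes with $h_u$ for all $u\neq v$, and (ii) $\chi h_v = \omega_d^{-1} h_v \chi$, which is a restatement of the simplicial-mode condition $[\chi,h_v]=(1-\omega_d)\chi h_v$. The main obstacle, and the point I would check most carefully, is keeping track of the phase bookkeeping: one must confirm that $\chi$ genuinely commutes (not merely up to a phase) with $T_{G\setminus N[v]}(x)$, which follows because $T_{G\setminus N[v]}(x)$ is a polynomial in the $h_u$ with $u \in V(G)\setminus N[v]$ and each such $h_u$ commutes with $\chi$ exactly; and that the single phase $\omega_d^{-1}$ picked up from commuting $\chi$ past $h_v$ combines correctly with the explicit $\omega_d^{p+1}/u$ and the $-u\omega_d^{-p}$ from the recursion to give a net factor of $(1-\omega_d)$. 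Once the phases are verified, the identity follows in a few lines.
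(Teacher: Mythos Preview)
Your proposal is correct and follows essentially the same approach as the paper: both apply Proposition~\ref{proposition:VertexRecursionRelation} at the simplicial vertex $v$, observe that $\chi$ commutes with $T_{G\setminus\{v\}}$ and with $T_{G\setminus\mathcal{N}[v]}$, and then reduce everything to the single commutator $[h_v,\chi]$ together with the phase relation $h_v\chi=\omega_d\chi h_v$. The paper presents this directly in two lines without the detour through induction; your own closing observation that induction is unnecessary is exactly right.
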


\begin{proof}
    By applying Proposition~\ref{proposition:VertexRecursionRelation} with $h_v\chi=\omega_d\chi h_v$,
    \begin{align}
        \frac{\omega_d^{p+1}}{u}\left[T_G\left(u\omega_d^{-p}\right),\chi\right] &= \omega_d\left[\chi h_v-h_v \chi\right]T_{G\setminus\mathcal{N}[v]}\left(u\omega_d^{-p}\right) \notag \\
        &= (1-\omega_d)h_v\chi T_{G\setminus\mathcal{N}[v]}\left(u\omega_d^{-p}\right), \notag
    \end{align}
    completing the proof.
\end{proof}

\begin{lemma}
    \label{lemma:TransferOperatorHamiltonianSimplicialModeCommutator}
    Fix $d\in\mathbb{Z}_{\geq2}$. Let $H$ be a qudit Hamiltonian with oriented indifference frustration graph $G$. Further let $\chi$ be a simplicial mode with respect to an oriented perfect elimination ordering of $G$. Then, for all $p\in\mathbb{Z}$,
    \begin{equation}
        T_G\left(u\omega_d^{-p}\right)[H,\chi]-\omega_d[H,\chi]T_G\left(u\omega_d^{-p}\right) = (1-\omega_d)\frac{\omega_d^{p+1}}{u}\left[T_G\left(u\omega_d^{-p}\right),\chi\right]. \notag
    \end{equation}
\end{lemma}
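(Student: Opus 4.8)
The plan is to reduce both sides of the asserted identity to expressions in $h_v$, $\chi$, and $T_{G\setminus\mathcal{N}[v]}\!\left(u\omega_d^{-p}\right)$, where $v$ denotes the last vertex of the oriented perfect elimination ordering, and then to match them through Lemma~\ref{lemma:TransferOperatorSimplicialModeCommutator}. Since $\chi$ is a simplicial mode it commutes with $h_u$ for every $u\neq v$ and satisfies $h_v\chi=\omega_d\chi h_v$, so $[H,\chi]=[h_v,\chi]=(\omega_d-1)\chi h_v$. Writing $z\coloneqq u\omega_d^{-p}$, the left-hand side becomes $(\omega_d-1)\bigl(T_G(z)\,\chi h_v-\omega_d\,\chi h_v\,T_G(z)\bigr)$, so it suffices to show that $T_G(z)\,\chi h_v-\omega_d\,\chi h_v\,T_G(z)$ equals the appropriate scalar multiple of $\chi h_v\,T_{G\setminus\mathcal{N}[v]}(z)$.

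To evaluate this I would expand $T_G(z)$ using the clique recursion of Proposition~\ref{proposition:CliqueRecursionRelation} around $\mathcal{N}[v]$ --- a clique because $v$ is simplicial --- giving $T_G(z)=T_{G\setminus\mathcal{N}[v]}(z)-z\sum_{w\in\mathcal{N}[v]}h_w\,T_{G\setminus\mathcal{N}[w]}(z)$, substitute this on both occurrences of $T_G(z)$, and reorder Weyl operators using four facts: (i) $\chi$ commutes with $T_{G\setminus\mathcal{N}[v]}(z)$, with every $T_{G\setminus\mathcal{N}[w]}(z)$, and with $h_w$ for $w\in\mathcal{N}(v)$, since none of these involve $v$; (ii) $h_v$ commutes with every $T_{G\setminus\mathcal{N}[w]}(z)$, $w\in\mathcal{N}[v]$, because $\mathcal{N}[v]$ is a clique containing $w$ and hence $\mathcal{N}[v]\subseteq\mathcal{N}[w]$, which leaves no neighbour of $v$ in $G\setminus\mathcal{N}[w]$; (iii) $h_v\chi=\omega_d\chi h_v$; and (iv) $h_w h_v=\omega_d h_v h_w$ for $w\in\mathcal{N}(v)$, which holds because $v$ is the terminal vertex of an orientation-consistent ordering, so every edge at $v$ carries the same orientation relative to $v$. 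With these, the $T_{G\setminus\mathcal{N}[v]}(z)$ term of $T_G(z)$ yields $(1-\omega_d)\chi h_v\,T_{G\setminus\mathcal{N}[v]}(z)$, while for each $w\in\mathcal{N}[v]$ the two contributions of the summand $h_w\,T_{G\setminus\mathcal{N}[w]}(z)$ --- one from $T_G(z)\,\chi h_v$, one from $-\omega_d\,\chi h_v\,T_G(z)$ --- reorder to the same operator string and cancel (for $w=v$ via $h_v\chi h_v=\omega_d\chi h_v^2$; for $w\in\mathcal{N}(v)$ via the single factor of $\omega_d$ from (iv) together with $[h_v,T_{G\setminus\mathcal{N}[w]}(z)]=0$). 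Only $(1-\omega_d)\chi h_v\,T_{G\setminus\mathcal{N}[v]}(z)$ survives, and by Lemma~\ref{lemma:TransferOperatorSimplicialModeCommutator} --- which expresses $\frac{\omega_d^{p+1}}{u}[T_G(z),\chi]$ in precisely this form, using $h_v\chi=\omega_d\chi h_v$ --- this is the required multiple of $[T_G(z),\chi]$; re-inserting the prefactor $(\omega_d-1)$ then yields the identity. (Alternatively, one could first invoke $[H,T_G(z)]=0$ from Corollary~\ref{corollary:HamiltonianTransferOperatorCommutator} to rewrite the left-hand side as $[H,\,T_G(z)\chi-\omega_d\chi T_G(z)]$, but the direct route above seems cleaner.)

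The step I expect to be the main obstacle is this summand-by-summand cancellation: one must verify that the phases picked up when commuting $\chi$ and $h_v$ past the $h_w$'s and past the induced-subgraph transfer operators align so that the two copies of each $h_w\,T_{G\setminus\mathcal{N}[w]}(z)$ cancel exactly, leaving only the $w$-free term. This is exactly where the oriented-indifference hypothesis enters in full: the inclusion $\mathcal{N}[v]\subseteq\mathcal{N}[w]$ makes $h_v$ central on $T_{G\setminus\mathcal{N}[w]}(z)$, and orientation-consistency forces every $h_w$ ($w\in\mathcal{N}(v)$) to acquire the same phase $\omega_d$ against $h_v$; were either ingredient to fail --- as under periodic boundary conditions, where the frustration graph can be dipath oriented but is not an oriented indifference graph --- an uncompensated factor $\omega_d^{\pm1}$ would survive and the identity would break. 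The computation is the $\mathbb{Z}_d$ analogue of the corresponding free-fermion manipulation, with $\omega_d$ playing the role of $-1$.
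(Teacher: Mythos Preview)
Your proposal is correct and follows essentially the same route as the paper. The only organisational difference is that you apply the clique recursion of Proposition~\ref{proposition:CliqueRecursionRelation} once at the full clique $\mathcal{N}[v]$, whereas the paper first peels off $v$ via Proposition~\ref{proposition:VertexRecursionRelation} and then applies Proposition~\ref{proposition:CliqueRecursionRelation} at $\mathcal{N}(v)$ inside $G\setminus\{v\}$; the two decompositions are equivalent, and the term-by-term cancellations you identify (using $\mathcal{N}[v]\subseteq\mathcal{N}[w]$ and $h_wh_v=\omega_dh_vh_w$) are exactly the ones the paper uses.
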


\begin{proof}
    Let $v$ denote the last vertex in the oriented perfect elimination ordering of $G$. By using $[H,\chi]=(1-\omega_d)h_v\chi$,
    \begin{equation}
        T_G\left(u\omega_d^{-p}\right)[H,\chi]-\omega_d[H,\chi]T_G\left(u\omega_d^{-p}\right) = (1-\omega_d)\left[T_G\left(u\omega_d^{-p}\right)h_v\chi-\omega_dh_v \chi T_G\left(u\omega_d^{-p}\right)\right]. \notag
    \end{equation}
    By applying Proposition~\ref{proposition:VertexRecursionRelation} with $h_v\chi=\omega_d\chi h_v$,
    \begin{align}
        &T_G\left(u\omega_d^{-p}\right)[H,\chi]-\omega_d[H,\chi]T_G\left(u\omega_d^{-p}\right) \notag \\
        &\quad= (1-\omega_d)\left[\left(T_{G\setminus\{v\}}\left(u\omega_d^{-p}\right)h_v-\omega_dh_vT_{G\setminus\{v\}}\left(u\omega_d^{-p}\right)\right)\chi\right] \notag \\
        &\quad\quad+ (1-\omega_d)\left[u\omega_d^{-p}h_v\left(\omega_d \chi h_v-h_v \chi\right)T_{G\setminus\mathcal{N}[v]}\left(u\omega_d^{-p}\right)\right] \notag \\
        &\quad= (1-\omega_d)\left[T_{G\setminus\{v\}}\left(u\omega_d^{-p}\right)h_v-\omega_dh_vT_{G\setminus\{v\}}\left(u\omega_d^{-p}\right)\right]\chi. \notag
    \end{align}
    Note that since $v$ is the last vertex in an oriented perfect elimination ordering of $G$, $N(v)$ is a clique. By applying Proposition~\ref{proposition:CliqueRecursionRelation} with $h_wh_v=\omega_dh_vh_w$ for all $w\in\mathcal{N}(v)$,
    \begin{align}
        &T_G\left(u\omega_d^{-p}\right)[H,\chi]-\omega_d[H,\chi]T_G\left(u\omega_d^{-p}\right) \notag \\
        &\quad= (1-\omega_d)\left[(1-\omega_d)h_vT_{G \setminus N[v]}\left(u\omega_d^{-p}\right)+u\omega_d^{-p}\sum_{w \in N(v)}\left(\omega_dh_vh_w-h_wh_v\right)T_{G\setminus\mathcal{N}[w]}\left(u\omega_d^{-p}\right)\right]\chi \notag \\
        &\quad= (1-\omega_d)^2h_v \chi T_{G \setminus N[v]}\left(u\omega_d^{-p}\right). \notag
    \end{align}
    The proof then follows from Lemma~\ref{lemma:TransferOperatorSimplicialModeCommutator}.
\end{proof}

\begin{proof}[Proof of Lemma~\ref*{lemma:HamiltonianParafermionicModeCommutator}.]
    By applying Lemma~\ref{lemma:TransferOperatorHamiltonianSimplicialModeCommutator} and Lemma~\ref{lemma:IndependencePolynomialTransferMatrices} and using Corollary~\ref{corollary:HamiltonianTransferOperatorCommutator}, we obtain
    \begin{align}
        [H,\psi_{p,k}] &= \frac{1}{N_k}T_G^p\left(\varepsilon_k^{-1}\right)[H,\chi]T_G^{\succ p}\left(\varepsilon_k^{-1}\right) \notag \\
        &= \frac{1}{N_k}\left[\omega_d[H,\chi]T_G^p\left(\varepsilon_k^{-1}\right)+(1-\omega_d)\omega_d^{p+1}\varepsilon_k\left[T_G^p\left(\varepsilon_k^{-1}\right),\chi\right]\right]T_G^{\succ p}\left(\varepsilon_k^{-1}\right) \notag \\
        &= \frac{1}{N_k}\left[(1-\omega_d)\omega_d^{p+1}\varepsilon_kT_G^p\left(\varepsilon_k^{-1}\right) \chi T_G^{\succ p}\left(\varepsilon_k^{-1}\right)\right] \notag \\
        &\quad+ \frac{1}{N_k}\left[\omega_d\left([H,\chi]-(1-\omega_d)\omega_d^p\varepsilon_k\chi\right)T_G^p\left(\varepsilon_k^{-1}\right)T_G^{\succ p}\left(\varepsilon_k^{-1}\right)\right] \notag \\
        &= (1-\omega_d)\omega_d^{p+1}\varepsilon_k\psi_{p,k}+\frac{\omega_d}{N_k}\left([H,\chi]-(1-\omega_d)\omega_d^p\varepsilon_k\chi\right)Z_G\left(-\varepsilon_k^{-d}\right). \notag
    \end{align}
    Now, since $Z_G\left(-\varepsilon_k^{-d}\right)=0$,
    \begin{equation}
         [H,\psi_{p,k}] = (1-\omega_d)\omega_d^{p+1}\varepsilon_k\psi_{p,k}, \notag
    \end{equation}
    completing the proof.
\end{proof}

\section{Proof of Lemma~\ref*{lemma:ParafermionicModeCommutator}}
\label{section:ParafermionicModeCommutator}

\ParafermionicModeCommutator*

To prove Lemma~\ref{lemma:ParafermionicModeCommutator}, we first require the following lemmas.

\begin{lemma}
    \label{lemma:TransferOperatorParafermionicModeCommutator}
    Fix $d\in\mathbb{Z}_{\geq2}$. Let $H$ be a qudit Hamiltonian with oriented indifference frustration graph $G$. Then, for all $p\in\mathbb{Z}$, the single-particle energies $\{\varepsilon_l\}$ and parafermionic modes $\{\psi_{q,l}\}$ satisfy
    \begin{equation}
        \left(1-u\omega_d^{q-p}\varepsilon_l\right)T_G\left(u\omega_d^{-p}\right)\psi_{q,l} = \left(1-u\omega_d^{q-p+1}\varepsilon_l\right)\psi_{q,l}T_G\left(u\omega_d^{-p}\right). \notag
    \end{equation}
\end{lemma}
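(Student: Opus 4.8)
The plan is to follow the template of the proof of Lemma~\ref{lemma:HamiltonianParafermionicModeCommutator}: reduce the claimed exchange relation to a statement about how the transfer operator braids with the simplicial mode $\chi$, and then annihilate the leftover terms using $Z_G(-\varepsilon_l^{-d})=0$. Write $\eta\coloneqq u\omega_d^{-p}$ and $\psi_{q,l}=\frac{1}{N_l}A\chi B$, where $A\coloneqq T_G(\varepsilon_l^{-1}\omega_d^{-q})$ and $B\coloneqq\prod_{m=1}^{d-1}T_G(\varepsilon_l^{-1}\omega_d^{-q-m})$. By Corollary~\ref{corollary:HamiltonianTransferOperatorCommutator}, $T_G(\eta)$ commutes with $A$ and with $B$, and since $u\omega_d^{q-p}\varepsilon_l=\omega_d^q\eta\varepsilon_l$ and $u\omega_d^{q-p+1}\varepsilon_l=\omega_d^{q+1}\eta\varepsilon_l$, the difference of the two sides of the lemma equals $\frac{1}{N_l}AYB$ with
\[
    Y\coloneqq\left(1-\omega_d^{q}\eta\varepsilon_l\right)T_G(\eta)\chi-\left(1-\omega_d^{q+1}\eta\varepsilon_l\right)\chi\,T_G(\eta).
\]
It therefore suffices to prove $AYB=0$.

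First I would compute $Y$. The vertex recursion of Proposition~\ref{proposition:VertexRecursionRelation} applied to $T_G(\eta)$, together with $h_v\chi=\omega_d\chi h_v$ and the facts that $\chi$ commutes with $T_{G\setminus\{v\}}$ and with $T_{G\setminus\mathcal{N}[v]}$ while $h_v$ commutes with $T_{G\setminus\mathcal{N}[v]}$ (here $v$ is the last vertex of the oriented perfect elimination ordering), gives the two identities $T_G(\eta)\chi=\bigl[T_{G\setminus\{v\}}(\eta)-\eta h_vT_{G\setminus\mathcal{N}[v]}(\eta)\bigr]\chi$ and $\chi\,T_G(\eta)=\bigl[T_{G\setminus\{v\}}(\eta)-\eta\omega_d^{-1}h_vT_{G\setminus\mathcal{N}[v]}(\eta)\bigr]\chi$, whose difference is the content of Lemma~\ref{lemma:TransferOperatorSimplicialModeCommutator}. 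Substituting these into $Y$ and collecting the coefficients of $T_{G\setminus\{v\}}(\eta)\chi$ and of $h_vT_{G\setminus\mathcal{N}[v]}(\eta)\chi$ — the $\varepsilon_l$-dependence of the second coefficient cancelling — yields the ``boundary'' correction
\[
    Y=(\omega_d-1)\eta\bigl[\omega_d^{q}\varepsilon_l\,T_{G\setminus\{v\}}(\eta)-\omega_d^{-1}h_v\,T_{G\setminus\mathcal{N}[v]}(\eta)\bigr]\chi.
\]

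It remains to show $AYB=0$. Pushing $\chi$ rightward through $B$, the second identity above turns each factor $T_G(\varepsilon_l^{-1}\omega_d^{-q-m})$ into $T_{G\setminus\{v\}}(\varepsilon_l^{-1}\omega_d^{-q-m})-\varepsilon_l^{-1}\omega_d^{-q-m-1}h_vT_{G\setminus\mathcal{N}[v]}(\varepsilon_l^{-1}\omega_d^{-q-m})$, i.e. $T_G$ with $h_v$ rescaled by $\omega_d^{-1}$, so $AYB=0$ becomes an identity among transfer operators of $G$, $G\setminus\{v\}$, and $G\setminus\mathcal{N}[v]$ at the $d$ arguments $\varepsilon_l^{-1}\omega_d^{-q-m}$, $m\in\mathbb{Z}_d$. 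Because $v$ is last in a perfect elimination ordering, $\mathcal{N}(v)$ is a clique; expanding everything by the clique recursion of Proposition~\ref{proposition:CliqueRecursionRelation} and repeatedly applying $h_wh_v=\omega_dh_vh_w$ for $w\in\mathcal{N}(v)$ — the same manipulation carried out in the proof of Lemma~\ref{lemma:IndependencePolynomialTransferMatrices} — collapses $AYB$ to a multiple of the full product $\prod_{m\in\mathbb{Z}_d}T_G\!\left(\varepsilon_l^{-1}\omega_d^{-m}\right)$, which by Lemma~\ref{lemma:IndependencePolynomialTransferMatrices} equals the scalar $Z_G\!\left(-\varepsilon_l^{-d}\right)$ and hence vanishes by the definition of the single-particle energy $\varepsilon_l$. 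The hard part is precisely this last step: the transfer operators of $G$, $G\setminus\{v\}$, and $G\setminus\mathcal{N}[v]$ do not pairwise commute, so one must track carefully how $\chi$ conjugates each factor of $B$, how the resulting rescaled-$h_v$ operators interact with $A$ and with the boundary operators in $Y$, and verify that the accumulated $\omega_d$-phases reassemble exactly into the product evaluated by Lemma~\ref{lemma:IndependencePolynomialTransferMatrices}.
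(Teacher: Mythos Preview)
Your reduction to showing $AYB=0$ and your explicit computation of $Y$ are both correct. The gap is in the last paragraph: you never actually show $AYB=0$, only sketch an expansion via the clique recursion and concede that ``the hard part is precisely this last step''. As stated, the claim that $AYB$ collapses to an operator multiple of $\prod_{m\in\mathbb{Z}_d}T_G(\varepsilon_l^{-1}\omega_d^{-m})$ is not justified; $Y$ carries the free parameter $\eta$, the operators $T_G$, $T_{G\setminus\{v\}}$, $T_{G\setminus\mathcal{N}[v]}$ do not pairwise commute, and the manipulation you invoke from the proof of Lemma~\ref{lemma:IndependencePolynomialTransferMatrices} concerns a clean product over all $d$ roots of unity rather than the mixed expression you face here. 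So the proof, as written, is incomplete.

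The paper avoids this brute-force expansion entirely by using Lemma~\ref{lemma:HamiltonianParafermionicModeCommutator} as an \emph{ingredient} rather than merely as a template. Sandwiching Lemma~\ref{lemma:TransferOperatorHamiltonianSimplicialModeCommutator} by $A$ on the left and $B$ on the right (Corollary~\ref{corollary:HamiltonianTransferOperatorCommutator} lets $T_G(\eta)$ and $H$ pass through both) converts the $\chi$-identity directly into
\[
(1-\omega_d)\tfrac{\omega_d^{p+1}}{u}\,[T_G(u\omega_d^{-p}),\psi_{q,l}]
= T_G(u\omega_d^{-p})[H,\psi_{q,l}]-\omega_d[H,\psi_{q,l}]T_G(u\omega_d^{-p}).
\]
Now Lemma~\ref{lemma:HamiltonianParafermionicModeCommutator} replaces $[H,\psi_{q,l}]$ by $(1-\omega_d)\omega_d^{q+1}\varepsilon_l\,\psi_{q,l}$, giving
\[
[T_G(u\omega_d^{-p}),\psi_{q,l}] = u\omega_d^{q-p}\varepsilon_l\bigl(T_G(u\omega_d^{-p})\psi_{q,l}-\omega_d\,\psi_{q,l}T_G(u\omega_d^{-p})\bigr),
\]
which rearranges to the stated relation. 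In your language this shows $AYB=0$ in two lines: write $Y=[T_G(\eta),\chi]-\omega_d^{q}\eta\varepsilon_l\bigl(T_G(\eta)\chi-\omega_d\chi T_G(\eta)\bigr)$, sandwich, and both pieces become the same multiple of $T_G(\eta)\psi_{q,l}-\omega_d\psi_{q,l}T_G(\eta)$. No clique recursion, no tracking of $\omega_d$-phases through non-commuting transfer operators, and no appeal to $Z_G(-\varepsilon_l^{-d})=0$ beyond what is already baked into Lemma~\ref{lemma:HamiltonianParafermionicModeCommutator}.
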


\begin{proof}
    By applying Lemma~\ref{lemma:TransferOperatorHamiltonianSimplicialModeCommutator} and using Corollary~\ref{corollary:HamiltonianTransferOperatorCommutator}, we obtain
    \begin{align}
        &(1-\omega_d)\omega_d^{p+1}\left[T_G\left(u\omega_d^{-p}\right),\psi_{q,l}\right] \notag \\
        &\quad= \frac{(1-\omega_d)\omega_d^{p+1}}{N_l}T_G^q\left(\varepsilon_l^{-1}\right)\left[T_G\left(u\omega_d^{-p}\right),\chi\right]T_G^{\succ q}\left(\varepsilon_l^{-1}\right) \notag \\
        &\quad= \frac{u}{N_l}T_G^q\left(\varepsilon_l^{-1}\right)\left[T_G\left(u\omega_d^{-p}\right)[H,\chi]-\omega_d[H,\chi]T_G\left(u\omega_d^{-p}\right)\right]T_G^{\succ q}\left(\varepsilon_l^{-1}\right) \notag \\
        &\quad= u\left[T_G\left(u\omega_d^{-p}\right)[H,\psi_{q,l}]-\omega_d[H,\psi_{q,l}]T_G\left(u\omega_d^{-p}\right)\right]. \notag
    \end{align}
    Now, by applying Lemma~\ref{lemma:HamiltonianParafermionicModeCommutator},
    \begin{equation}
        \left[T_G\left(u\omega_d^{-p}\right),\psi_{q,l}\right] = u\omega_d^{q-p}\varepsilon_l\left[T_G\left(u\omega_d^{-p}\right)\psi_{q,l}-\omega_d\psi_{q,l}T_G\left(u\omega_d^{-p}\right)\right]. \notag
    \end{equation}
    Therefore,
    \begin{equation}
        \left(1-u\omega_d^{q-p}\varepsilon_l\right)T_G\left(u\omega_d^{-p}\right)\psi_{q,l} = \left(1-u\omega_d^{q-p+1}\varepsilon_l\right)\psi_{q,l}T_G\left(u\omega_d^{-p}\right), \notag
    \end{equation}
    completing the proof.
\end{proof}

\begin{lemma}
    \label{lemma:SimplicialModeParafermionicModeCommutator}
    Fix $d\in\mathbb{Z}_{\geq2}$. Let $H$ be a qudit Hamiltonian with oriented indifference frustration graph $G$. Further let $\chi$ be a simplicial mode with respect to an oriented perfect elimination ordering of $G$. The single-particle energies $\{\varepsilon_l\}$ and parafermionic modes $\{\psi_{q,l}\}$ satisfy
    \begin{equation}
        (1+\omega_d)\chi\psi_{q,l} = \frac{1}{N_l}T_G^q\left(\varepsilon_l^{-1}\right)\chi^2T_G^{\succ q}\left(\varepsilon_l^{-1}\right). \notag
    \end{equation}
\end{lemma}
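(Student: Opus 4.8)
The plan is to feed the definition $\psi_{q,l}=\frac{1}{N_l}T_G^q(\varepsilon_l^{-1})\chi T_G^{\succ q}(\varepsilon_l^{-1})$ into both sides and reduce each to a common operator, exploiting the vanishing of $T_G^q(\varepsilon_l^{-1})T_G^{\succ q}(\varepsilon_l^{-1})$. Concretely, since the frustration graph is an oriented indifference graph it is dipath oriented, so the transfer operators pairwise commute (Corollary~\ref{corollary:HamiltonianTransferOperatorCommutator}); hence the factors of $T_G^q(\varepsilon_l^{-1})T_G^{\succ q}(\varepsilon_l^{-1})=\prod_{m=0}^{d-1}T_G(\varepsilon_l^{-1}\omega_d^{-q-m})$ may be reindexed to give $\prod_{m\in\mathbb{Z}_d}T_G(\varepsilon_l^{-1}\omega_d^{-m})=Z_G(-\varepsilon_l^{-d})$ by Lemma~\ref{lemma:IndependencePolynomialTransferMatrices}, which vanishes by the definition of the single-particle energies. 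This is the same mechanism that drove the proof of Lemma~\ref{lemma:HamiltonianParafermionicModeCommutator}.

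With that in hand, I would first treat the left-hand side. Writing $\chi\psi_{q,l}=\frac{1}{N_l}\chi T_G^q(\varepsilon_l^{-1})\chi T_G^{\succ q}(\varepsilon_l^{-1})$ and commuting the central $\chi$ leftward through $T_G^q(\varepsilon_l^{-1})$ via $T_G^q(\varepsilon_l^{-1})\chi=\chi T_G^q(\varepsilon_l^{-1})-[\chi,T_G^q(\varepsilon_l^{-1})]$, the leading term carries a factor $T_G^q(\varepsilon_l^{-1})T_G^{\succ q}(\varepsilon_l^{-1})=0$ and drops out, leaving $\chi\psi_{q,l}=-\frac{1}{N_l}\chi[\chi,T_G^q(\varepsilon_l^{-1})]T_G^{\succ q}(\varepsilon_l^{-1})$. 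Then I would substitute Lemma~\ref{lemma:TransferOperatorSimplicialModeCommutator} for $[\chi,T_G^q(\varepsilon_l^{-1})]$ — a scalar times $h_v\chi T_{G\setminus\mathcal{N}[v]}(\varepsilon_l^{-1}\omega_d^{-q})$ — and simplify using $\chi h_v=\omega_d^{-1}h_v\chi$ together with the fact that $\chi$ commutes with every $h_w$ for $w\ne v$ (hence with $T_{G\setminus\mathcal{N}[v]}$); this collapses $\chi h_v\chi=\omega_d^{-1}h_v\chi^2$ and leaves $\chi\psi_{q,l}$ equal to an explicit scalar times the fixed operator $D\coloneqq h_v\chi^2 T_{G\setminus\mathcal{N}[v]}(\varepsilon_l^{-1}\omega_d^{-q})T_G^{\succ q}(\varepsilon_l^{-1})$.

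For the right-hand side I would run the parallel argument on $\frac{1}{N_l}T_G^q(\varepsilon_l^{-1})\chi^2 T_G^{\succ q}(\varepsilon_l^{-1})$, commuting $\chi^2$ leftward through $T_G^q(\varepsilon_l^{-1})$. One needs the analogue of Lemma~\ref{lemma:TransferOperatorSimplicialModeCommutator} with $\chi$ replaced by $\chi^2$; this follows from the identical vertex-recursion argument (Proposition~\ref{proposition:VertexRecursionRelation}), using $\chi^2 h_v=\omega_d^{-2}h_v\chi^2$ and the commutation of $\chi^2$ with the remaining $h_w$, and expresses $[\chi^2,T_G^q(\varepsilon_l^{-1})]$ as a scalar times $\chi^2 h_v T_{G\setminus\mathcal{N}[v]}(\varepsilon_l^{-1}\omega_d^{-q})$. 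Again the $\chi^2 T_G^q(\varepsilon_l^{-1})T_G^{\succ q}(\varepsilon_l^{-1})$ piece vanishes, so $\frac{1}{N_l}T_G^q(\varepsilon_l^{-1})\chi^2 T_G^{\succ q}(\varepsilon_l^{-1})$ likewise reduces to a scalar times $D$. After simplification both prefactors equal $\frac{(1-\omega_d^2)\varepsilon_l^{-1}}{\omega_d^{q+2}N_l}$ — on the left after multiplying through by $(1+\omega_d)$ — so the two sides coincide.

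The only real obstacle is bookkeeping: the phase that $\chi$ picks up past $h_v$ is $\omega_d^{\pm1}$ whereas $\chi^2$ picks up $\omega_d^{\pm2}$, so the two reductions must be carried to the common operator $D$ with matching conventions for the final scalar identity $(1+\omega_d)\frac{1-\omega_d}{\omega_d^{q+2}}=-\frac{1-\omega_d^{-2}}{\omega_d^{q}}$ to close. There is no conceptual subtlety once the vanishing of $T_G^q(\varepsilon_l^{-1})T_G^{\succ q}(\varepsilon_l^{-1})$ is established.
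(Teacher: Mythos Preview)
Your proposal is correct, and it uses the same core ingredients as the paper---the vertex recursion (Proposition~\ref{proposition:VertexRecursionRelation}), the vanishing $T_G^q(\varepsilon_l^{-1})T_G^{\succ q}(\varepsilon_l^{-1})=Z_G(-\varepsilon_l^{-d})=0$, and the commutation $h_v\chi=\omega_d\chi h_v$---but the organisation is genuinely different. You reduce each side separately to the common operator $D=h_v\chi^2 T_{G\setminus\mathcal{N}[v]}^q(\varepsilon_l^{-1})T_G^{\succ q}(\varepsilon_l^{-1})$ by peeling off the commutators $[\chi,T_G^q]$ and $[\chi^2,T_G^q]$, and then match scalar prefactors. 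The paper instead works only from the left-hand side: it expands $T_G^q$ via the vertex recursion, splits the factor $(1+\omega_d)$ into two pieces, and uses $h_v\chi=\omega_d\chi h_v$ to recombine them as $T_G^q\chi^2T_G^{\succ q}+\omega_d\chi^2 T_G^qT_G^{\succ q}$, the second term vanishing. Your route is more mechanical and symmetric---one does not need to spot the right splitting---at the cost of proving a $\chi^2$ analogue of Lemma~\ref{lemma:TransferOperatorSimplicialModeCommutator}; the paper's route avoids that auxiliary step but leans on a slightly non-obvious regrouping.
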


\begin{proof}
    Let $v$ denote the last vertex in the oriented perfect elimination ordering of $G$. By Proposition~\ref{proposition:VertexRecursionRelation},
    \begin{align}
        (1+\omega_d)\chi\psi_{q,l} &= \frac{(1+\omega_d)}{N_l}\chi T_G^q\left(\varepsilon_l^{-1}\right) \chi T_G^{\succ q}\left(\varepsilon_l^{-1}\right) \notag \\
        &= \frac{(1+\omega_d)}{N_l}\chi\left[T_{G\setminus\{v\}}^q\left(\varepsilon_l^{-1}\right)-\omega_d^{-q}\varepsilon_l^{-1}h_vT_{G\setminus\mathcal{N}[v]}^q\left(\varepsilon_l^{-1}\right)\right] \chi T_G^{\succ q}\left(\varepsilon_l^{-1}\right) \notag \\
        &= \frac{1}{N_l}\chi\left[(1+\omega_d)T_{G\setminus\{v\}}^q\left(\varepsilon_l^{-1}\right)-\omega_d^{-q+1}\left(1+\omega_d^{-1}\right)\varepsilon_l^{-1}h_vT_{G\setminus\mathcal{N}[v]}^q\left(\varepsilon_l^{-1}\right)\right] \chi T_G^{\succ q}\left(\varepsilon_l^{-1}\right) \notag \\
        &= \frac{1}{N_l}\chi\left[T_{G\setminus\{v\}}^q\left(\varepsilon_l^{-1}\right)-\omega_d^{-q+1}\varepsilon_l^{-1}h_vT_{G\setminus\mathcal{N}[v]}^q\left(\varepsilon_l^{-1}\right)\right] \chi T_G^{\succ q}\left(\varepsilon_l^{-1}\right) \notag \\
        &\quad+ \frac{1}{N_l}\chi\omega_d\left[T_{G\setminus\{v\}}^q\left(\varepsilon_l^{-1}\right)-\omega_d^{-q-1}\varepsilon_l^{-1}h_vT_{G\setminus\mathcal{N}[v]}^q\left(\varepsilon_l^{-1}\right)\right] \chi T_G^{\succ q}\left(\varepsilon_l^{-1}\right). \notag
    \end{align}
    By Proposition~\ref{proposition:VertexRecursionRelation} and Lemma~\ref{lemma:IndependencePolynomialTransferMatrices} with $h_v\chi=\omega_d\chi h_v$, we have
    \begin{align}
        (1+\omega_d)\chi\psi_{q,l} &= \frac{1}{N_l}\left[T_{G\setminus\{v\}}^q\left(\varepsilon_l^{-1}\right)-\omega_d^{-q}\varepsilon_l^{-1}h_vT_{G\setminus\mathcal{N}[v]}^q\left(\varepsilon_l^{-1}\right)\right] \chi^2 T_G^{\succ q}\left(\varepsilon_l^{-1}\right) \notag \\
        &\quad+ \frac{1}{N_l}\chi^2\omega_d\left[T_{G\setminus\{v\}}^q\left(\varepsilon_l^{-1}\right)-\omega_d^{-q}\varepsilon_l^{-1}h_vT_{G\setminus\mathcal{N}[v]}^q\left(\varepsilon_l^{-1}\right)\right]T_G^{\succ q}\left(\varepsilon_l^{-1}\right) \notag \\
        &= \frac{1}{N_l}\left[T_G^q\left(\varepsilon_l^{-1}\right)\chi^2+\omega_d\chi^2T_G^q\left(\varepsilon_l^{-1}\right)\right]T_G^{\succ q}\left(\varepsilon_l^{-1}\right) \notag \\
        &= \frac{1}{N_l}\left[T_G^q\left(\varepsilon_l^{-1}\right)\chi^2T_G^{\succ q}\left(\varepsilon_l^{-1}\right)+\omega_d\chi^2Z_G\left(-\varepsilon_l^{-d}\right)\right]. \notag
    \end{align}
    Now, since $Z_G\left(-\varepsilon_l^{-d}\right)=0$,
    \begin{equation}
        (1+\omega_d)\chi\psi_{q,l} = \frac{1}{N_l}T_G^q\left(\varepsilon_l^{-1}\right)\chi^2T_G^{\succ q}\left(\varepsilon_l^{-1}\right), \notag
    \end{equation}
    completing the proof.
\end{proof}

\begin{proof}[Proof of Lemma~\ref*{lemma:ParafermionicModeCommutator}.]
    By applying Lemma~\ref{lemma:TransferOperatorParafermionicModeCommutator},
    \begin{align}
        \psi_{p,k}\psi_{q,l} &= \frac{1}{N_k}T_G^p\left(\varepsilon_k^{-1}\right) \chi T_G^{\succ p}\left(\varepsilon_k^{-1}\right)\psi_{q,l} \notag \\
        &= \frac{1}{N_k}\left(\prod_{m=1}^{d-1}\frac{1-\omega_d^{q-p-m+1}\varepsilon_k^{-1}\varepsilon_l}{1-\omega_d^{q-p-m}\varepsilon_k^{-1}\varepsilon_l}\right)T_G^p\left(\varepsilon_k^{-1}\right)\chi\psi_{q,l}T_G^{\succ p}\left(\varepsilon_k^{-1}\right) \notag \\
        &= \frac{1}{N_k}\left(\frac{\omega_d^p\varepsilon_k-\omega_d^q\varepsilon_l}{\omega_d^p\varepsilon_k-\omega_d^{q+1}\varepsilon_l}\right)T_G^p\left(\varepsilon_k^{-1}\right)\chi\psi_{q,l}T_G^{\succ p}\left(\varepsilon_k^{-1}\right), \notag
    \end{align}
    for $k$ and $l$ distinct. Now, by applying Lemma~\ref{lemma:SimplicialModeParafermionicModeCommutator} and using Corollary~\ref{corollary:HamiltonianTransferOperatorCommutator}, we have
    \begin{equation}
        \psi_{p,k}\psi_{q,l} = \frac{1}{N_kN_l}\frac{1}{\left(1+\omega_d\right)}\left(\frac{\omega^p\varepsilon_k-\omega_d^q\varepsilon_l}{\omega_d^p\varepsilon_k-\omega_d^{q+1}\varepsilon_l}\right)T_G^p\left(\varepsilon_k^{-1}\right)T_G^q\left(\varepsilon_l^{-1}\right)\chi^2T_G^{\succ p}\left(\varepsilon_k^{-1}\right)T_G^{\succ q}\left(\varepsilon_l^{-1}\right). \notag
    \end{equation}
    Similarly,
    \begin{equation}
        \psi_{q,l}\psi_{p,k} = \frac{1}{N_kN_l}\frac{1}{\left(1+\omega_d\right)}\left(\frac{\omega_d^q\varepsilon_l-\omega_d^p\varepsilon_k}{\omega_d^q\varepsilon_l-\omega_d^{p+1}\varepsilon_k}\right)T_G^p\left(\varepsilon_k^{-1}\right)T_G^q\left(\varepsilon_l^{-1}\right)\chi^2T_G^{\succ p}\left(\varepsilon_k^{-1}\right)T_G^{\succ q}\left(\varepsilon_l^{-1}\right). \notag
    \end{equation}
    Therefore,
    \begin{equation}
        \left(\omega_d^p\varepsilon_k-\omega_d^{q+1}\varepsilon_l\right)\psi_{p,k}\psi_{q,l} = -\left(\omega_d^q\varepsilon_l-\omega_d^{p+1}\varepsilon_k\right)\psi_{q,l}\psi_{p,k}, \notag
    \end{equation}
    for $k$ and $l$ distinct. By applying Lemma~\ref{lemma:TransferOperatorParafermionicModeCommutator} and using Corollary~\ref{corollary:HamiltonianTransferOperatorCommutator}, we obtain
    \begin{equation}
        \psi_{p,k}\psi_{q,k} = 0, \notag
    \end{equation}
    for $p$ and $q+1$ distinct. This completes the proof.
\end{proof}

\section{Proof of Lemma~\ref*{lemma:ProjectorOperatorRelations}}
\label{section:ProjectorOperatorRelations}

\ProjectorOperatorRelations*

To prove Lemma~\ref{lemma:ProjectorOperatorRelations}, we first require the following lemmas.

\begin{lemma}
    \label{lemma:TransferOperatorParafermionicModeEqualCommutator}
    Fix $d\in\mathbb{Z}_{\geq2}$. Let $H$ be a qudit Hamiltonian with oriented indifference frustration graph $G$. Then the single-particle energies $\{\varepsilon_k\}$ and parafermionic modes $\{\psi_{p,k}\}$ satisfy
    \begin{equation}
        T_G^p\left(\varepsilon_k^{-1}\right)\psi_{p,k} = (1-\omega_d)\varepsilon_k\psi_{p,k}\frac{\partial T_G^p\left(\varepsilon^{-1}\right)}{\partial\varepsilon}\bigg|_{\varepsilon=\varepsilon_k}. \notag 
    \end{equation}
\end{lemma}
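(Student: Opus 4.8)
The plan is to obtain the identity as a derivative consequence of Lemma~\ref{lemma:TransferOperatorParafermionicModeCommutator}, specialised to the ``equal mode'' case $q=p$, $l=k$. First I would put $q=p$ and $l=k$ in that lemma, so that the phases $\omega_d^{q-p}$ and $\omega_d^{q-p+1}$ collapse to $1$ and $\omega_d$, and substitute $u=\varepsilon^{-1}$, using $T_G\!\left(u\omega_d^{-p}\right)=T_G^p(\varepsilon^{-1})$. Clearing the factor $\varepsilon^{-1}$ then rewrites the lemma as the single operator identity
\begin{equation}
    (\varepsilon-\varepsilon_k)\,T_G^p\!\left(\varepsilon^{-1}\right)\psi_{p,k} = (\varepsilon-\omega_d\varepsilon_k)\,\psi_{p,k}\,T_G^p\!\left(\varepsilon^{-1}\right), \notag
\end{equation}
which is valid for all $\varepsilon\neq 0$; since $T_G$ is polynomial, both sides are Laurent polynomials in $\varepsilon$, so this is a genuine algebraic identity that may be differentiated freely.

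Next I would extract a companion vanishing relation by evaluating the displayed identity at $\varepsilon=\varepsilon_k$: the left-hand side vanishes, leaving $(1-\omega_d)\varepsilon_k\,\psi_{p,k}\,T_G^p\!\left(\varepsilon_k^{-1}\right)=0$. Here I use that $d\geq 2$ forces $\omega_d\neq 1$, and that $\varepsilon_k\neq 0$ because $\varepsilon_k^{-d}$ is a finite, nonzero root of the polynomial $Z_G(-u^d)$, which has constant term $1$. Hence $\psi_{p,k}\,T_G^p\!\left(\varepsilon_k^{-1}\right)=0$.

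The final step is to differentiate the displayed identity with respect to $\varepsilon$ and evaluate at $\varepsilon=\varepsilon_k$. On the left, the product rule gives $T_G^p\!\left(\varepsilon_k^{-1}\right)\psi_{p,k}$, since the term carrying the prefactor $\varepsilon-\varepsilon_k$ drops out at $\varepsilon=\varepsilon_k$; on the right it gives $\psi_{p,k}\,T_G^p\!\left(\varepsilon_k^{-1}\right) + (1-\omega_d)\varepsilon_k\,\psi_{p,k}\,\frac{\partial T_G^p\left(\varepsilon^{-1}\right)}{\partial\varepsilon}\big|_{\varepsilon=\varepsilon_k}$. Discarding $\psi_{p,k}\,T_G^p\!\left(\varepsilon_k^{-1}\right)$ via the vanishing relation just established yields exactly the claimed formula. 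I do not anticipate any serious obstacle: the argument is the standard ``differentiate an identity at a point where a prefactor vanishes'' manoeuvre, and the only points needing care are confirming $\varepsilon_k\neq 0$ so that $u=\varepsilon^{-1}$ is a legitimate substitution near $\varepsilon_k$, and noting that Lemma~\ref{lemma:TransferOperatorParafermionicModeCommutator} holds for all $u$, so differentiation in $\varepsilon$ is justified.
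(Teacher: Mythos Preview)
Your proposal is correct and follows essentially the same approach as the paper: both specialise Lemma~\ref{lemma:TransferOperatorParafermionicModeCommutator} to $q=p$, $l=k$ and extract the derivative at $\varepsilon=\varepsilon_k$. The only cosmetic difference is that the paper phrases the extraction as a limit $\lim_{\varepsilon\to\varepsilon_k}\frac{\varepsilon-\omega_d\varepsilon_k}{\varepsilon-\varepsilon_k}\,\psi_{p,k}T_G^p(\varepsilon^{-1})$ and invokes Lemma~\ref{lemma:IndependencePolynomialTransferMatrices} together with $Z_G(-\varepsilon_k^{-d})=0$ to justify that $\psi_{p,k}T_G^p(\varepsilon_k^{-1})=0$, whereas you obtain that vanishing more directly by evaluating the identity itself at $\varepsilon=\varepsilon_k$.
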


\begin{proof}
    By applying Lemma~\ref{lemma:TransferOperatorParafermionicModeCommutator},
    \begin{align}
        T_G^p\left(\varepsilon_k^{-1}\right)\psi_{p,k} &= \lim_{\varepsilon\to\varepsilon_k}T_G^p\left(\varepsilon^{-1}\right)\psi_{p,k} \notag \\
        &= \lim_{\varepsilon\to\varepsilon_k}\left(\frac{\varepsilon-\omega_d\varepsilon_k}{\varepsilon-\varepsilon_k}\right)\psi_{p,k}T_G^p\left(\varepsilon^{-1}\right). \notag
    \end{align}
    Now, by applying Lemma~\ref{lemma:IndependencePolynomialTransferMatrices} and using $Z_G\left(-\varepsilon_l^{-d}\right)=0$,
    \begin{equation}
        T_G^p\left(\varepsilon_k^{-1}\right)\psi_{p,k} = (1-\omega_d)\varepsilon_k\psi_{p,k}\frac{\partial T_G^p\left(\varepsilon^{-1}\right)}{\partial\varepsilon}\bigg|_{\varepsilon=\varepsilon_k}, \notag
    \end{equation}
    completing the proof.
\end{proof}

\begin{lemma}
    \label{lemma:ProjectorAlgebraicIdentity}
    Fix $d\in\mathbb{Z}^+$. Then,
    \begin{equation}
        \prod_{p=0}^{d-2}\left(1-\omega_d^{p+1}\right) = d. \notag
    \end{equation}
\end{lemma}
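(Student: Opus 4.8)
The plan is to recognise this as the standard evaluation of the cyclotomic cofactor of $x^d-1$ at $x=1$. First I would write the complete factorisation over the $d^\text{th}$ roots of unity,
\begin{equation}
    x^d-1 = \prod_{k=0}^{d-1}\left(x-\omega_d^k\right), \notag
\end{equation}
which holds because the $\omega_d^k$ for $k\in\mathbb{Z}_d$ are exactly the $d$ distinct roots of the monic polynomial $x^d-1$.

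Next I would isolate the $k=0$ factor. Dividing both sides by $x-1$ gives
\begin{equation}
    1+x+\cdots+x^{d-1} = \frac{x^d-1}{x-1} = \prod_{k=1}^{d-1}\left(x-\omega_d^k\right), \notag
\end{equation}
an identity of polynomials valid for all $x\neq1$, hence for all $x$ by continuity (or because both sides are polynomials agreeing at infinitely many points). Evaluating at $x=1$, the left-hand side is the sum of $d$ ones, so
\begin{equation}
    d = \prod_{k=1}^{d-1}\left(1-\omega_d^k\right). \notag
\end{equation}
Finally I would reindex with $p=k-1$, so that $k=1,\dots,d-1$ corresponds to $p=0,\dots,d-2$ and $\omega_d^k=\omega_d^{p+1}$, yielding $\prod_{p=0}^{d-2}(1-\omega_d^{p+1})=d$, as required.

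There is essentially no obstacle here: this is a routine cyclotomic identity, and the only points requiring a word of care are that $\omega_d$ is a \emph{primitive} root (so the $d$ powers $\omega_d^k$ are genuinely distinct and the factorisation of $x^d-1$ is as stated) and that the division by $x-1$ is justified as a polynomial identity before specialising to $x=1$. The edge case $d=1$ is vacuous (empty product equal to $1=d$) and $d\geq2$ is covered uniformly by the argument above.
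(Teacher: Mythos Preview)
Your proof is correct and rests on the same factorisation the paper uses, namely $\prod_{p=0}^{d-1}(1-u\omega_d^p)=1-u^d$; the only difference is that you remove the $k=0$ factor by dividing by $x-1$ and evaluating the geometric-series quotient at $x=1$, whereas the paper removes it by differentiating the product identity and then setting $u=1$. These are equivalent one-line manoeuvres, and if anything your route is slightly more direct.
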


\begin{proof}
    First observe that the polynomials
    \begin{equation}
        \prod_{p=0}^{d-1}\left(1-u\omega_d^p\right) \quad\text{and}\quad 1-u^d \notag  
    \end{equation}
    are equal since they have identical roots and identical constant term. By considering the derivative, we obtain
    \begin{equation}
        \prod_{p=0}^{d-2}\left(1-u\omega_d^{p+1}\right)-(1-u)\frac{\partial}{\partial u}\prod_{p=0}^{d-2}\left(1-u\omega_d^{p+1}\right) = du^{d-1}. \notag  
    \end{equation}
    Evaluating this expression at $u=1$,
    \begin{equation}
        \prod_{p=0}^{d-2}\left(1-\omega_d^{p+1}\right) = d, \notag  
    \end{equation}
    completing the proof.
\end{proof}

\begin{lemma}
    \label{lemma:TransferOperatorSimplicalModeParafermionicModeRelation}
    Fix $d\in\mathbb{Z}_{\geq2}$. Let $H$ be a qudit Hamiltonian with oriented indifference frustration graph $G$. Further let $\chi$ be a simplicial mode with respect to an oriented perfect elimination ordering of $G$ and let $v$ denote the last vertex in the oriented perfect elimination ordering of $G$. Then the single-particle energies $\{\varepsilon_k\}$ and parafermionic modes $\{\psi_{p,k}\}$ satisfy
    \begin{equation}
        T_G^{p+1}\left(\varepsilon_k^{-1}\right)\chi\psi_{p,k} = (1-\omega_d)\omega_d^{-p-2}\varepsilon_k^{-1}h_v \chi T_{G \setminus N[v]}^{p+1}\left(\varepsilon_k^{-1}\right)\psi_{p,k}. \notag
    \end{equation}
\end{lemma}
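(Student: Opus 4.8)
The goal is to relate $T_G^{p+1}(\varepsilon_k^{-1})\chi\psi_{p,k}$ to an expression involving $T_{G\setminus N[v]}^{p+1}$ and $h_v$, where $v$ is the last vertex in the oriented perfect elimination ordering. The natural approach is to expand the leftmost factor $T_G^{p+1}(\varepsilon_k^{-1}) = T_G(\varepsilon_k^{-1}\omega_d^{-p-1})$ using the vertex recursion relation of Proposition~\ref{proposition:VertexRecursionRelation} at the vertex $v$, and then push the simplicial mode $\chi$ through using its defining commutation relation $h_v\chi = \omega_d\chi h_v$ (and $[\chi,h_u]=0$ for $u\neq v$). Concretely, I would write
\begin{equation}
    T_G^{p+1}\left(\varepsilon_k^{-1}\right)\chi = \left[T_{G\setminus\{v\}}^{p+1}\left(\varepsilon_k^{-1}\right) - \omega_d^{-p-1}\varepsilon_k^{-1}h_v T_{G\setminus\mathcal{N}[v]}^{p+1}\left(\varepsilon_k^{-1}\right)\right]\chi, \notag
\end{equation}
noting that $T_{G\setminus\{v\}}$ and $T_{G\setminus N[v]}$ both commute with $\chi$ (since neither involves $h_v$ and $\chi$ commutes with all other $h_u$), so that the recursion can be recast as $\chi$ acting from the left with the $h_v$ factor picking up an $\omega_d$.

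The second ingredient I expect to need is an analogous recursion applied to the $\psi_{p,k}$ factor, or rather a statement that reconstitutes $T_G^{p+1}(\varepsilon_k^{-1})$ out of the $G\setminus\{v\}$ and $G\setminus N[v]$ pieces in a way that makes $Z_G(-\varepsilon_k^{-d})=0$ usable. After pushing $\chi$ leftward, I anticipate getting two terms: one already of the desired form (the $h_v T_{G\setminus N[v]}$ term, up to collecting the phase $\omega_d^{-p-2}$ and the factor $(1-\omega_d)$), and a leftover $T_{G\setminus\{v\}}^{p+1}(\varepsilon_k^{-1})\chi\psi_{p,k}$ term. The plan is to show this leftover vanishes. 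I would handle it by again invoking Proposition~\ref{proposition:VertexRecursionRelation} together with Lemma~\ref{lemma:IndependencePolynomialTransferMatrices} — specifically, multiplying through by the remaining transfer-operator factors from $\psi_{p,k}$'s definition and from $T_G^{\succ p}$, the product $\prod_{m\in\mathbb{Z}_d}T_G(\varepsilon_k^{-1}\omega_d^{-m})$ collapses to $Z_G(-\varepsilon_k^{-d})=0$, exactly as in the proofs of Lemma~\ref{lemma:HamiltonianParafermionicModeCommutator} and Lemma~\ref{lemma:SimplicialModeParafermionicModeCommutator}. I would also use Corollary~\ref{corollary:HamiltonianTransferOperatorCommutator} freely to reorder transfer operators evaluated at different arguments, and Lemma~\ref{lemma:TransferOperatorParafermionicModeCommutator} to slide $\psi_{p,k}$ past transfer operators when the bookkeeping of phases demands it.

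The main obstacle I foresee is the careful tracking of the $\omega_d$-phases and the argument shifts $\omega_d^{-p}$, $\omega_d^{-p-1}$, $\omega_d^{-p-2}$ through the manipulations — getting the exponent $-p-2$ in the final normalisation correct requires being precise about whether $h_v$ sits to the left or right of $\chi$ and which copies of $T_G$ (the degree-$p$ copy $T_G^p$ inside $\psi_{p,k}$ versus the external $T_G^{p+1}$) each phase factor attaches to. A secondary subtlety is justifying that the leftover term genuinely contains a full product $\prod_{m\in\mathbb{Z}_d}T_G(\varepsilon_k^{-1}\omega_d^{-m})$ rather than a partial product: this needs the defining form $\psi_{p,k} = \frac1{N_k}T_G^p(\varepsilon_k^{-1})\chi T_G^{\succ p}(\varepsilon_k^{-1})$ and $\chi^2$ being a Weyl operator that commutes appropriately, so that $T_G^{p+1}T_G^p T_G^{\succ p}$ — or the reordered version thereof — spans all $d$ values of $m$. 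Once those phase and completeness checks are in hand, the identity follows by collecting terms.
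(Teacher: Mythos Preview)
Your decomposition via Proposition~\ref{proposition:VertexRecursionRelation} does not split the way you expect. Writing
\[
T_G^{p+1}\left(\varepsilon_k^{-1}\right)\chi
=\chi\,T_{G\setminus\{v\}}^{p+1}\left(\varepsilon_k^{-1}\right)
-\omega_d^{-p-1}\varepsilon_k^{-1}\,h_v\chi\,T_{G\setminus\mathcal{N}[v]}^{p+1}\left(\varepsilon_k^{-1}\right),
\]
the second term carries the coefficient $-\omega_d^{-p-1}$, not the target $(1-\omega_d)\omega_d^{-p-2}=\omega_d^{-p-2}-\omega_d^{-p-1}$. So the $h_vT_{G\setminus\mathcal{N}[v]}$ piece is \emph{not} already of the desired form up to phase bookkeeping: an additive $\omega_d^{-p-2}\varepsilon_k^{-1}h_v\chi T_{G\setminus\mathcal{N}[v]}^{p+1}\psi_{p,k}$ is missing, and that missing piece must come from the ``leftover''. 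Consequently the leftover $\chi\,T_{G\setminus\{v\}}^{p+1}(\varepsilon_k^{-1})\psi_{p,k}$ is \emph{nonzero} in general. Your proposed mechanism for killing it --- assembling a full product $\prod_{m\in\mathbb{Z}_d}T_G(\varepsilon_k^{-1}\omega_d^{-m})=Z_G(-\varepsilon_k^{-d})$ --- cannot work here, because the leftover contains $T_{G\setminus\{v\}}^{p+1}$ rather than $T_G^{p+1}$, so no reordering produces the complete $T_G$-product.

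The paper's route avoids this by using Lemma~\ref{lemma:TransferOperatorSimplicialModeCommutator} instead of the raw vertex recursion: that lemma gives directly
\[
T_G^{p+1}\left(\varepsilon_k^{-1}\right)\chi
=\chi\,T_G^{p+1}\left(\varepsilon_k^{-1}\right)
+(1-\omega_d)\omega_d^{-p-2}\varepsilon_k^{-1}\,h_v\chi\,T_{G\setminus\mathcal{N}[v]}^{p+1}\left(\varepsilon_k^{-1}\right),
\]
so the second term already has the right coefficient, and the first term is $\chi\,T_G^{p+1}(\varepsilon_k^{-1})\psi_{p,k}$ with the \emph{full} $T_G$. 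Then Lemma~\ref{lemma:TransferOperatorParafermionicModeCommutator} at $u=\varepsilon_k^{-1}$, $l=k$, $q=p$, and degree shift $p\mapsto p+1$ reads $(1-\omega_d^{-1})T_G^{p+1}(\varepsilon_k^{-1})\psi_{p,k}=0$, which kills that term in one line. You actually listed this lemma as a tool for ``sliding $\psi_{p,k}$ past transfer operators'', but its real role here is annihilation, not commutation. If you want to salvage your route, note that rewriting your leftover as $\chi[T_{G\setminus\{v\}}^{p+1}-\omega_d^{-p-1}\varepsilon_k^{-1}h_vT_{G\setminus\mathcal{N}[v]}^{p+1}]\psi_{p,k}+\omega_d^{-p-2}\varepsilon_k^{-1}h_v\chi T_{G\setminus\mathcal{N}[v]}^{p+1}\psi_{p,k}$ (using $h_v\chi=\omega_d\chi h_v$) collapses the bracket back to $\chi T_G^{p+1}\psi_{p,k}=0$ and supplies exactly the missing piece --- but that is just the paper's argument in disguise.
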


\begin{proof}
    By applying Lemma~\ref{lemma:TransferOperatorSimplicialModeCommutator},
    \begin{equation}
        T_G^{p+1}\left(\varepsilon_k^{-1}\right)\chi\psi_{p,k} = \left[\chi T_G^{p+1}\left(\varepsilon_k^{-1}\right)+(1-\omega_d)\omega_d^{-p-2}\varepsilon_k^{-1}h_v \chi T_{G \setminus N[v]}^{p+1}\left(\varepsilon_k^{-1}\right)\right]\psi_{p,k}. \notag
    \end{equation}
    Now, by applying Lemma~\ref{lemma:TransferOperatorParafermionicModeCommutator},
    \begin{equation}
        T_G^{p+1}\left(\varepsilon_k^{-1}\right)\chi\psi_{p,k} = (1-\omega_d)\omega_d^{-p-2}\varepsilon_k^{-1}h_v \chi T_{G \setminus N[v]}^{p+1}\left(\varepsilon_k^{-1}\right)\psi_{p,k}, \notag
    \end{equation}
    completing the proof.
\end{proof}

\begin{lemma}
    \label{lemma:TransferOperatorDerivativeParafermionicModeRelation}
    Fix $d\in\mathbb{Z}_{\geq2}$. Let $H$ be a qudit Hamiltonian with oriented indifference frustration graph $G$. Then the single-particle energies $\{\varepsilon_k\}$ and parafermionic modes $\{\psi_{p,k}\}$ satisfy
    \begin{equation}
        T_G^{\succ p+1}\left(\varepsilon_k^{-1}\right)\frac{\partial T_G^{p+1}\left(\varepsilon^{-1}\right)}{\partial\varepsilon}\bigg|_{\varepsilon=\varepsilon_k}\psi_{p,k} = \frac{\partial Z_G\left(-\varepsilon^{-d}\right)}{\partial\varepsilon}\bigg|_{\varepsilon=\varepsilon_k}\psi_{p,k}. \notag 
    \end{equation}
\end{lemma}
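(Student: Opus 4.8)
The plan is to differentiate the factorisation of the independence polynomial provided by Lemma~\ref{lemma:IndependencePolynomialTransferMatrices}. After reindexing the product $\prod_{m\in\mathbb{Z}_d}$, that lemma gives the operator identity $Z_G\left(-\varepsilon^{-d}\right) = T_G^{p+1}\left(\varepsilon^{-1}\right)T_G^{\succ p+1}\left(\varepsilon^{-1}\right)$ for every $\varepsilon$, in which all of the transfer-operator factors — being polynomials in the mutually commuting charges $\{Q_G^{(i)}\}$ by Corollary~\ref{corollary:HamiltonianTransferOperatorCommutator} — commute with one another, as do all of their $\varepsilon$-derivatives. Applying the product rule, evaluating at $\varepsilon=\varepsilon_k$, and multiplying on the right by $\psi_{p,k}$ gives
\begin{equation}
    \frac{\partial Z_G\left(-\varepsilon^{-d}\right)}{\partial\varepsilon}\bigg|_{\varepsilon=\varepsilon_k}\psi_{p,k} = \frac{\partial T_G^{p+1}\left(\varepsilon^{-1}\right)}{\partial\varepsilon}\bigg|_{\varepsilon=\varepsilon_k}T_G^{\succ p+1}\left(\varepsilon_k^{-1}\right)\psi_{p,k}+T_G^{p+1}\left(\varepsilon_k^{-1}\right)\frac{\partial T_G^{\succ p+1}\left(\varepsilon^{-1}\right)}{\partial\varepsilon}\bigg|_{\varepsilon=\varepsilon_k}\psi_{p,k}. \notag
\end{equation}

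The first step is to kill the second term. For this I would record the auxiliary identity $T_G^{p+1}\left(\varepsilon_k^{-1}\right)\psi_{p,k}=0$, which follows from Lemma~\ref{lemma:TransferOperatorParafermionicModeCommutator}: choosing the transfer operator there to be evaluated at $\varepsilon_k^{-1}\omega_d^{-(p+1)}$ and the mode to be $\psi_{p,k}$, the prefactor on the right-hand side becomes $1-\omega_d^{0}=0$, while the prefactor on the left is $1-\omega_d^{-1}$, which is nonzero since $d\geq2$. Because $T_G^{p+1}\left(\varepsilon_k^{-1}\right)$ commutes with every transfer-operator factor (and every transfer-operator derivative) occurring in $\frac{\partial T_G^{\succ p+1}\left(\varepsilon^{-1}\right)}{\partial\varepsilon}$, I can slide it rightward past that derivative until it reaches $\psi_{p,k}$ and annihilates it, so the entire second term vanishes.

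The second step is cosmetic: in the surviving first term, the same commutativity lets me move $T_G^{\succ p+1}\left(\varepsilon_k^{-1}\right)$ to the front, yielding $T_G^{\succ p+1}\left(\varepsilon_k^{-1}\right)\frac{\partial T_G^{p+1}\left(\varepsilon^{-1}\right)}{\partial\varepsilon}\big|_{\varepsilon=\varepsilon_k}\psi_{p,k}$, which is exactly the left-hand side of the claimed identity. The only real subtlety — and the step I would check most carefully — is the index bookkeeping in specialising Lemma~\ref{lemma:TransferOperatorParafermionicModeCommutator} to obtain $T_G^{p+1}\left(\varepsilon_k^{-1}\right)\psi_{p,k}=0$, together with the repeated use of Corollary~\ref{corollary:HamiltonianTransferOperatorCommutator} to justify commuting these operators and their derivatives freely; everything else is a single application of the product rule.
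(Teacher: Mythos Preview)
Your proposal is correct and follows essentially the same route as the paper: differentiate the factorisation $Z_G(-\varepsilon^{-d})=T_G^{p+1}(\varepsilon^{-1})T_G^{\succ p+1}(\varepsilon^{-1})$ from Lemma~\ref{lemma:IndependencePolynomialTransferMatrices}, then use Lemma~\ref{lemma:TransferOperatorParafermionicModeCommutator} (with exactly the specialisation you describe) to obtain $T_G^{p+1}(\varepsilon_k^{-1})\psi_{p,k}=0$ and kill the unwanted term. The only cosmetic difference is that the paper orders the product-rule terms so that $T_G^{p+1}(\varepsilon_k^{-1})$ already sits directly to the left of $\psi_{p,k}$, avoiding your explicit ``slide it rightward'' step; both orderings are equivalent by Corollary~\ref{corollary:HamiltonianTransferOperatorCommutator}.
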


\begin{proof}
    It follows from Lemma~\ref{lemma:IndependencePolynomialTransferMatrices} that
    \begin{equation}
        T_G^{\succ p+1}\left(\varepsilon_k^{-1}\right)\frac{\partial T_G^{p+1}\left(\varepsilon^{-1}\right)}{\partial\varepsilon}\bigg|_{\varepsilon=\varepsilon_k}+\frac{\partial T_G^{\succ p+1}\left(\varepsilon^{-1}\right)}{\partial\varepsilon}\bigg|_{\varepsilon=\varepsilon_k}T_G^{p+1}\left(\varepsilon_k^{-1}\right) = \frac{\partial Z_G\left(-\varepsilon^{-d}\right)}{\partial\varepsilon}\bigg|_{\varepsilon=\varepsilon_k}. \notag
    \end{equation}
    Hence, by applying Lemma~\ref{lemma:TransferOperatorParafermionicModeCommutator},
    \begin{align}
        T_G^{\succ p+1}\left(\varepsilon_k^{-1}\right)\frac{\partial T_G^{p+1}\left(\varepsilon^{-1}\right)}{\partial\varepsilon}\bigg|_{\varepsilon=\varepsilon_k}\psi_{p,k} &= \left[\frac{\partial Z_G\left(-\varepsilon^{-d}\right)}{\partial\varepsilon}\bigg|_{\varepsilon=\varepsilon_k}-\frac{\partial T_G^{\succ p+1}\left(\varepsilon^{-1}\right)}{\partial\varepsilon}\bigg|_{\varepsilon=\varepsilon_k}T_G^{p+1}\left(\varepsilon_k^{-1}\right)\right]\psi_{p,k} \notag \\
        &= \frac{\partial Z_G\left(-\varepsilon^{-d}\right)}{\partial\varepsilon}\bigg|_{\varepsilon=\varepsilon_k}\psi_{p,k}, \notag
    \end{align}
    completing the proof.
\end{proof}

\begin{lemma}
    \label{lemma:ParafermionicModeVertexDeletionRelation}
    Fix $d\in\mathbb{Z}_{\geq2}$. Let $H$ be a qudit Hamiltonian with oriented indifference frustration graph $G$. Further let $\chi$ be a simplicial mode with respect to an oriented perfect elimination ordering of $G$ and let $v$ denote the last vertex in the oriented perfect elimination ordering of $G$. Then the single-particle energies $\{\varepsilon_k\}$ and parafermionic modes $\{\psi_{p,k}\}$ satisfy
    \begin{equation}
        \psi_{p,k} = \frac{1}{N_k}(1-\omega_d)\omega_d^{-p-1}\varepsilon_k^{-1}h_v \chi T_{G\setminus\mathcal{N}[v]}^p\left(\varepsilon_k^{-1}\right)T_G^{\succ p}\left(\varepsilon_k^{-1}\right). \notag
    \end{equation}
\end{lemma}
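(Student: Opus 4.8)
**Proof proposal for Lemma~\ref*{lemma:ParafermionicModeVertexDeletionRelation}.**

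The plan is to start from the definition of the parafermionic mode,
\begin{equation}
    \psi_{p,k} = \frac{1}{N_k}T_G^p\left(\varepsilon_k^{-1}\right) \chi T_G^{\succ p}\left(\varepsilon_k^{-1}\right), \notag
\end{equation}
and to rewrite the leftmost factor $T_G^p\left(\varepsilon_k^{-1}\right)$ using the vertex recursion relation of Proposition~\ref{proposition:VertexRecursionRelation} at the last vertex $v$ of the oriented perfect elimination ordering. Concretely, $T_G^p(u) = T_{G\setminus\{v\}}^p(u) - u\omega_d^{-p}h_v T_{G\setminus\mathcal{N}[v]}^p(u)$, so that
\begin{equation}
    \psi_{p,k} = \frac{1}{N_k}\left[T_{G\setminus\{v\}}^p\left(\varepsilon_k^{-1}\right) - \omega_d^{-p}\varepsilon_k^{-1}h_v T_{G\setminus\mathcal{N}[v]}^p\left(\varepsilon_k^{-1}\right)\right]\chi\, T_G^{\succ p}\left(\varepsilon_k^{-1}\right). \notag
\end{equation}
The goal is then to show that the $T_{G\setminus\{v\}}^p$ term contributes exactly the same thing (up to a factor) as the $h_v T_{G\setminus\mathcal{N}[v]}^p$ term, so that the two combine into the single expression claimed.

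The key mechanism is Lemma~\ref{lemma:TransferOperatorSimplicialModeCommutator}, which relates $[T_G^p,\chi]$ to $h_v\chi T_{G\setminus\mathcal{N}[v]}^p$, together with the commutation $h_v\chi=\omega_d\chi h_v$. I would move $\chi$ to the left past $T_{G\setminus\{v\}}^p\left(\varepsilon_k^{-1}\right)$ — note $\chi$ commutes with every $h_w$ for $w\neq v$, hence with $T_{G\setminus\{v\}}^p$ — giving $T_{G\setminus\{v\}}^p\left(\varepsilon_k^{-1}\right)\chi = \chi\, T_{G\setminus\{v\}}^p\left(\varepsilon_k^{-1}\right)$. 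Then I would use $\chi\,T_{G\setminus\{v\}}^p\left(\varepsilon_k^{-1}\right) = \chi\left[T_G^p\left(\varepsilon_k^{-1}\right) + \omega_d^{-p}\varepsilon_k^{-1}h_v T_{G\setminus\mathcal{N}[v]}^p\left(\varepsilon_k^{-1}\right)\right]$ to re-express the first term, and apply Lemma~\ref{lemma:TransferOperatorParafermionicModeCommutator} (which, after multiplication by the trailing $T_G^{\succ p}$, makes the full transfer operator $T_G\left(u\omega_d^{-p}\right)$ act on $\psi_{q,l}$ in a controlled way, and in particular lets $T_G^p\left(\varepsilon_k^{-1}\right)$ acting from the left be absorbed) — or more simply, observe that $\chi T_G^p\left(\varepsilon_k^{-1}\right)$ can be folded back into a contribution proportional to $\psi_{p,k}$ itself. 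Collecting the scalar coefficients $(1-\omega_d)$ and $\omega_d^{-p-1}\varepsilon_k^{-1}$ that arise from the $h_v\chi=\omega_d\chi h_v$ swap and from Lemma~\ref{lemma:TransferOperatorSimplicialModeCommutator}, and using $Z_G\left(-\varepsilon_k^{-d}\right)=0$ (via Lemma~\ref{lemma:IndependencePolynomialTransferMatrices}) to kill any leftover term, should yield the stated identity.

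The main obstacle I anticipate is the bookkeeping of where $\chi$ sits relative to the various transfer operators and the resulting phase factors: $\chi$ commutes with $T_{G\setminus\{v\}}$ and $T_{G\setminus\mathcal{N}[v]}$ but not with $T_G$ (whose last-vertex term involves $h_v$), and the power of $\omega_d$ in the final coefficient $\omega_d^{-p-1}$ must come out exactly, so I would need to track carefully the difference between the naive $\omega_d^{-p}$ from the recursion and the extra $\omega_d^{-1}$ from commuting $\chi$ through $h_v$. A clean way to organise this is to first establish the intermediate identity $T_G^p\left(\varepsilon_k^{-1}\right)\chi = \chi T_G^p\left(\varepsilon_k^{-1}\right) + (1-\omega_d)\omega_d^{-p-1}\varepsilon_k^{-1}h_v\chi T_{G\setminus\mathcal{N}[v]}^p\left(\varepsilon_k^{-1}\right)$ from Lemma~\ref{lemma:TransferOperatorSimplicialModeCommutator}, substitute into the definition of $\psi_{p,k}$, and then show the $\chi T_G^p\left(\varepsilon_k^{-1}\right)T_G^{\succ p}\left(\varepsilon_k^{-1}\right) = \chi Z_G\left(-\varepsilon_k^{-d}\right)$ piece vanishes by Lemma~\ref{lemma:IndependencePolynomialTransferMatrices} and the root condition, leaving precisely the claimed formula.
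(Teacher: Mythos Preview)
Your proposal is correct, and the clean route you settle on in the final paragraph---apply Lemma~\ref{lemma:TransferOperatorSimplicialModeCommutator} to write $T_G^p\!\left(\varepsilon_k^{-1}\right)\chi = \chi T_G^p\!\left(\varepsilon_k^{-1}\right) + (1-\omega_d)\omega_d^{-p-1}\varepsilon_k^{-1}h_v\chi T_{G\setminus\mathcal{N}[v]}^p\!\left(\varepsilon_k^{-1}\right)$, substitute into the definition of $\psi_{p,k}$, and kill the first term via $T_G^p T_G^{\succ p}=Z_G\!\left(-\varepsilon_k^{-d}\right)=0$ from Lemma~\ref{lemma:IndependencePolynomialTransferMatrices}---is exactly the paper's proof. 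The earlier detours through Proposition~\ref{proposition:VertexRecursionRelation} and Lemma~\ref{lemma:TransferOperatorParafermionicModeCommutator} are unnecessary here; you can drop them.
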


\begin{proof}
    By applying Lemma~\ref{lemma:TransferOperatorSimplicialModeCommutator},
    \begin{align}
        \psi_{p,k} &= \frac{1}{N_k}T_G^p\left(\varepsilon_k^{-1}\right) \chi T_G^{\succ p}\left(\varepsilon_k^{-1}\right) \notag \\
        &= \frac{1}{N_k}\left[ \chi T_G^p\left(\varepsilon_k^{-1}\right)+(1-\omega_d)\omega_d^{-p-1}\varepsilon_k^{-1}h_v \chi T_{G\setminus\mathcal{N}[v]}^p\left(\varepsilon_k^{-1}\right)\right]T_G^{\succ p}\left(\varepsilon_k^{-1}\right). \notag
    \end{align}
    Now, by applying Lemma~\ref{lemma:IndependencePolynomialTransferMatrices} and using $Z_G\left(-\varepsilon_k^{-d}\right)=0$,
    \begin{equation}
        \psi_{p,k} = \frac{1}{N_k}(1-\omega_d)\omega_d^{-p-1}\varepsilon_k^{-1}h_v \chi T_{G\setminus\mathcal{N}[v]}^p\left(\varepsilon_k^{-1}\right)T_G^{\succ p}\left(\varepsilon_k^{-1}\right). \notag
    \end{equation}
    This completes the proof.
\end{proof}

\begin{lemma}
    \label{lemma:TransferOperatorDerivativeIdempotenceRelation}
    Fix $d\in\mathbb{Z}_{\geq2}$. Let $H$ be a qudit Hamiltonian with oriented indifference frustration graph $G$. Then, for all $p\in\mathbb{Z}$, the single-particle energies $\{\varepsilon_l\}$ and parafermionic modes $\{\psi_{q,l}\}$ satisfy
    \begin{equation}
        \left[T_G^{\succ p}\left(\varepsilon_k^{-1}\right)\frac{\partial T_G^p\left(\varepsilon^{-1}\right)}{\partial\varepsilon}\bigg|_{\varepsilon=\varepsilon_k}\right]^2 = \frac{\partial Z_G\left(-\varepsilon^{-d}\right)}{\partial\varepsilon}\bigg|_{\varepsilon=\varepsilon_k}T_G^{\succ p}\left(\varepsilon_k^{-1}\right)\frac{\partial T_G^p\left(\varepsilon^{-1}\right)}{\partial\varepsilon}\bigg|_{\varepsilon=\varepsilon_k}. \notag
    \end{equation}
\end{lemma}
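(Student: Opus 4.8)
The plan is to reduce the identity to a short computation inside the commutative algebra generated by the transfer operators. First I would combine Lemma~\ref{lemma:IndependencePolynomialTransferMatrices} with the definitions of $T_G^p$ and $T_G^{\succ p}$ and with the commutativity of transfer operators from Corollary~\ref{corollary:HamiltonianTransferOperatorCommutator} to record the factorisation
\begin{equation}
    Z_G\left(-\varepsilon^{-d}\right) = T_G^p\left(\varepsilon^{-1}\right)T_G^{\succ p}\left(\varepsilon^{-1}\right), \notag
\end{equation}
obtained by putting $u=\varepsilon^{-1}$ in $Z_G(-u^d)=\prod_{m\in\mathbb{Z}_d}T_G(u\omega_d^{-m})$ and reindexing the product. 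Since $T_G(x)=\sum_k(-x)^kQ_G^{(k)}$ is a polynomial in $x$ with coefficients among the pairwise-commuting independent set charges $\{Q_G^{(k)}\}$ (Theorem~\ref{theorem:IndependentSetChargesCommute}), every operator appearing below — each $T_G$ at any argument, their $\varepsilon$-derivatives, and $Z_G(-\varepsilon^{-d})$ itself (which is in fact a scalar polynomial in $\varepsilon$) — is a scalar combination of the $Q_G^{(k)}$, so they all commute with one another. Evaluating the factorisation at $\varepsilon=\varepsilon_k$ and using $Z_G(-\varepsilon_k^{-d})=0$ gives the vanishing relation $T_G^p(\varepsilon_k^{-1})T_G^{\succ p}(\varepsilon_k^{-1})=0$.

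Next I would differentiate the factorisation with respect to $\varepsilon$, evaluate at $\varepsilon=\varepsilon_k$, and abbreviate
\begin{equation}
    A = T_G^{\succ p}\left(\varepsilon_k^{-1}\right), \quad B = \frac{\partial T_G^p\left(\varepsilon^{-1}\right)}{\partial\varepsilon}\bigg|_{\varepsilon=\varepsilon_k}, \quad C = T_G^p\left(\varepsilon_k^{-1}\right), \quad D = \frac{\partial T_G^{\succ p}\left(\varepsilon^{-1}\right)}{\partial\varepsilon}\bigg|_{\varepsilon=\varepsilon_k}, \notag
\end{equation}
together with $Z'_k = \partial_\varepsilon Z_G(-\varepsilon^{-d})|_{\varepsilon=\varepsilon_k}$. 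The differentiated identity then reads $Z'_k = BA+CD$, while the vanishing relation reads $CA=0$. Using that $A,B,C,D$ all commute, I would compute
\begin{equation}
    Z'_k\,AB = (BA+CD)\,AB = A^2B^2 + (CA)(DB) = A^2B^2 = (AB)^2, \notag
\end{equation}
which is precisely the asserted identity, since its left-hand side is $(AB)^2$ and its right-hand side is $Z'_k\,AB$.

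I do not expect a genuine obstacle here; once one observes that everything commutes the argument is essentially bookkeeping. The only point that deserves a line of care is the claim that the $\varepsilon$-derivatives of $T_G^p$ and $T_G^{\succ p}$ still lie in the commutative algebra spanned by $\{Q_G^{(k)}\}$, which follows by differentiating $T_G(\varepsilon^{-1}\omega_d^{-p})=\sum_k(-\varepsilon^{-1}\omega_d^{-p})^kQ_G^{(k)}$ termwise. The conceptual inputs are Corollary~\ref{corollary:HamiltonianTransferOperatorCommutator} (equivalently Theorem~\ref{theorem:IndependentSetChargesCommute}), which makes the computation commutative, and Lemma~\ref{lemma:IndependencePolynomialTransferMatrices}, whose factorisation vanishes at $\varepsilon_k$ and thereby annihilates the cross term $(CA)(DB)$.
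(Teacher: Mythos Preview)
Your proof is correct and follows the same approach as the paper's: both differentiate the factorisation $Z_G(-\varepsilon^{-d})=T_G^p(\varepsilon^{-1})T_G^{\succ p}(\varepsilon^{-1})$ from Lemma~\ref{lemma:IndependencePolynomialTransferMatrices}, use commutativity of the transfer operators (Corollary~\ref{corollary:HamiltonianTransferOperatorCommutator}), and kill the cross term via $T_G^p(\varepsilon_k^{-1})T_G^{\succ p}(\varepsilon_k^{-1})=Z_G(-\varepsilon_k^{-d})=0$. Your $A,B,C,D$ packaging is in fact a bit cleaner than the paper's version, but the argument is the same.
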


\begin{proof}
    It follows from Lemma~\ref{lemma:IndependencePolynomialTransferMatrices} that
    \begin{equation}
        T_G^{\succ p+1}\left(\varepsilon_k^{-1}\right)\frac{\partial T_G^{p+1}\left(\varepsilon^{-1}\right)}{\partial\varepsilon}\bigg|_{\varepsilon=\varepsilon_k}+\frac{\partial T_G^{\succ p+1}\left(\varepsilon^{-1}\right)}{\partial\varepsilon}\bigg|_{\varepsilon=\varepsilon_k}T_G^{p+1}\left(\varepsilon_k^{-1}\right) = \frac{\partial Z_G\left(-\varepsilon^{-d}\right)}{\partial\varepsilon}\bigg|_{\varepsilon=\varepsilon_k}. \notag
    \end{equation}
    Hence, by applying Lemma~\ref{lemma:IndependencePolynomialTransferMatrices} and using $Z_G\left(-\varepsilon_k^{-d}\right)=0$, we obtain
    \begin{align}
        &\left[T_G^{\succ p}\left(\varepsilon_k^{-1}\right)\frac{\partial T_G^p\left(\varepsilon^{-1}\right)}{\partial\varepsilon}\bigg|_{\varepsilon=\varepsilon_k}\right]^2 \notag \\ 
        &\quad= T_G^{\succ p}\left(\varepsilon_k^{-1}\right)\left[\frac{\partial Z_G\left(-\varepsilon^{-d}\right)}{\partial\varepsilon}\bigg|_{\varepsilon=\varepsilon_k}-T_G^{\succ p+1}\left(\varepsilon_k^{-1}\right)\frac{\partial T_G^{p+1}\left(\varepsilon^{-1}\right)}{\partial\varepsilon}\bigg|_{\varepsilon=\varepsilon_k}\right]\frac{\partial T_G^p\left(\varepsilon^{-1}\right)}{\partial\varepsilon}\bigg|_{\varepsilon=\varepsilon_k} \notag \\
        &= \frac{\partial Z_G\left(-\varepsilon^{-d}\right)}{\partial\varepsilon}\bigg|_{\varepsilon=\varepsilon_k}T_G^{\succ p}\left(\varepsilon_k^{-1}\right)\frac{\partial T_G^p\left(\varepsilon^{-1}\right)}{\partial\varepsilon}\bigg|_{\varepsilon=\varepsilon_k}. \notag
    \end{align}
    This completes the proof.
\end{proof}

\begin{proof}[Proof of Lemma~\ref*{lemma:ProjectorOperatorRelations}.]
    By applying Lemma~\ref{lemma:TransferOperatorParafermionicModeCommutator} and Lemma~\ref{lemma:TransferOperatorParafermionicModeEqualCommutator} successively and then by applying Lemma~\ref{lemma:ProjectorAlgebraicIdentity}, we obtain
    \begin{align}
        \mathcal{P}_{r,k} &= \frac{1}{N_k}T_G^{r-1}\left(\varepsilon_k^{-1}\right) \chi T_G^{\succ r-1}\left(\varepsilon_k^{-1}\right)\prod_{p=2}^d\psi_{r-p,k} \notag \\
        &= \frac{1}{N_k}T_G^{r-1}\left(\varepsilon_k^{-1}\right) \chi\left[\prod_{p=2}^d(1-\omega_d)\varepsilon_k\left(\prod_{m=0}^{d-p-1}\frac{1-\omega_d^{-p-m+1}}{1-\omega_d^{-p-m}}\right)\psi_{r-p,k}\frac{\partial T_G^{r-p}\left(\varepsilon^{-1}\right)}{\partial\varepsilon}\bigg|_{\varepsilon=\varepsilon_k}\right] \notag \\
        &= \frac{1}{N_k}T_G^{r-1}\left(\varepsilon_k^{-1}\right) \chi\left[\prod_{p=2}^d\left(1-\omega_d^{-p+1}\right)\varepsilon_k\psi_{r-p,k}\frac{\partial T_G^{r-p}\left(\varepsilon^{-1}\right)}{\partial\varepsilon}\bigg|_{\varepsilon=\varepsilon_k}\right] \notag \\
        &= \frac{d\varepsilon_k^{d-1}}{N_k}T_G^{r-1}\left(\varepsilon_k^{-1}\right) \chi\left[\prod_{p=2}^d\psi_{r-p,k}\frac{\partial T_G^{r-p}\left(\varepsilon^{-1}\right)}{\partial\varepsilon}\bigg|_{\varepsilon=\varepsilon_k}\right]. \notag
    \end{align}
    Let $v$ denote the last vertex in the oriented perfect elimination ordering of $G$. By applying Lemma~\ref{lemma:TransferOperatorSimplicalModeParafermionicModeRelation} and Lemma~\ref{lemma:TransferOperatorDerivativeParafermionicModeRelation} successively and then by applying Lemma~\ref{lemma:ParafermionicModeVertexDeletionRelation}, we have
    \begin{align}
        \mathcal{P}_{r,k} &= \frac{d(1-\omega_d)^{d-1}}{N_k^{d-1}}\left[\frac{\partial Z_G\left(-\varepsilon^{-d}\right)}{\partial\varepsilon}\bigg|_{\varepsilon=\varepsilon_k}\right]^{d-2}\left[\prod_{p=1}^{d-1}\omega_d^{p-r-1}h_v \chi T_{G\setminus\mathcal{N}[v]}^{r-p}\left(\varepsilon_k^{-1}\right)\right]\psi_{r,k}\frac{\partial T_G^r\left(\varepsilon^{-1}\right)}{\partial\varepsilon}\bigg|_{\varepsilon=\varepsilon_k} \notag \\
        &= \frac{d(1-\omega_d)^d\varepsilon_k^{-1}}{N_k^d}\left[\frac{\partial Z_G\left(-\varepsilon^{-d}\right)}{\partial\varepsilon}\bigg|_{\varepsilon=\varepsilon_k}\right]^{d-2}\left[\prod_{p=1}^d\omega_d^{p-1}h_v \chi T_{G\setminus\mathcal{N}[v]}^{r-p}\left(\varepsilon_k^{-1}\right)\right] \notag \\ 
        &\quad\times T_G^{\succ r}\left(\varepsilon_k^{-1}\right)\frac{\partial T_G^r\left(\varepsilon^{-1}\right)}{\partial\varepsilon}\bigg|_{\varepsilon=\varepsilon_k}. \notag
    \end{align}
    By applying Lemma~\ref{lemma:IndependencePolynomialTransferMatrices} with $h_v\chi=\omega_d\chi h_v$,
    \begin{equation}
        \mathcal{P}_{r,k} = \frac{d(1-\omega_d)^d\varepsilon_k^{-1}}{N_k^d}b_v^dZ_{G\setminus\mathcal{N}[v]}\left(-\varepsilon_k^{-d}\right)\left[\frac{\partial Z_G\left(-\varepsilon^{-d}\right)}{\partial\varepsilon}\bigg|_{\varepsilon=\varepsilon_k}\right]^{d-2}T_G^{\succ r}\left(\varepsilon_k^{-1}\right)\frac{\partial T_G^r\left(\varepsilon^{-1}\right)}{\partial\varepsilon}\bigg|_{\varepsilon=\varepsilon_k}. \notag
    \end{equation}
    Now, by applying Proposition~\ref{proposition:VertexRecursionRelation} and using $Z_G\left(-\varepsilon_k^{-d}\right)=0$,
    \begin{equation}
        \mathcal{P}_{r,k} = \frac{d(1-\omega_d)^d\varepsilon_k^{d-1}}{N_k^d}Z_{G\setminus\{v\}}\left(-\varepsilon_k^{-d}\right)\left[\frac{\partial Z_G\left(-\varepsilon^{-d}\right)}{\partial\varepsilon}\bigg|_{\varepsilon=\varepsilon_k}\right]^{d-2}T_G^{\succ r}\left(\varepsilon_k^{-1}\right)\frac{\partial T_G^r\left(\varepsilon^{-1}\right)}{\partial\varepsilon}\bigg|_{\varepsilon=\varepsilon_k}. \notag
    \end{equation}
    By setting
    \begin{equation}
        N_k = (1-\omega_d)\left[dZ_{G\setminus\{v\}}\left(-\varepsilon_k^{-d}\right)\left(\varepsilon_k\frac{\partial Z_G\left(-\varepsilon^{-d}\right)}{\partial\varepsilon}\bigg|_{\varepsilon=\varepsilon_k}\right)^{d-1}\right]^{\frac{1}{d}}, \notag
    \end{equation}
    we obtain
    \begin{equation}
        \mathcal{P}_{r,k} = \left[\frac{\partial Z_G\left(-\varepsilon^{-d}\right)}{\partial\varepsilon}\bigg|_{\varepsilon=\varepsilon_k}\right]^{-1}T_G^{\succ r}\left(\varepsilon_k^{-1}\right)\frac{\partial T_G^r\left(\varepsilon^{-1}\right)}{\partial\varepsilon}\bigg|_{\varepsilon=\varepsilon_k}. \notag
    \end{equation}
    By applying Lemma~\ref{lemma:TransferOperatorDerivativeIdempotenceRelation},
    \begin{align}
        \mathcal{P}_{r,k}^2 &= \left[\frac{\partial Z_G\left(-\varepsilon^{-d}\right)}{\partial\varepsilon}\bigg|_{\varepsilon=\varepsilon_k}\right]^{-1}T_G^{\succ r}\left(\varepsilon_k^{-1}\right)\frac{\partial T_G^r\left(\varepsilon^{-1}\right)}{\partial\varepsilon}\bigg|_{\varepsilon=\varepsilon_k} \notag \\
        &= \mathcal{P}_{r,k}. \notag
    \end{align}
    By applying Lemma~\ref{lemma:TransferOperatorParafermionicModeCommutator} and using Corollary~\ref{corollary:HamiltonianTransferOperatorCommutator}, we obtain
    \begin{align}
        \mathcal{P}_{r,k}\mathcal{P}_{s,k} &= \left(\prod_{p=1}^d\psi_{r-p,k}\right)\left(\prod_{p=1}^d\psi_{s-p,k}\right) \notag \\
        &= \frac{1}{N_k}\left(\prod_{p=1}^{d-1}\psi_{r-p,k}\right)T_G^r\left(\varepsilon_k^{-1}\right) \chi T_G^{\succ r}\left(\varepsilon_k^{-1}\right)\left(\prod_{p=1}^d\psi_{s-p,k}\right) \notag \\
        &= 0, \notag
    \end{align}
    for $r$ and $s$ distinct. This completes the proof.
\end{proof}

\section{Proof of Lemma~\ref*{lemma:HamiltonianProjectorOperatorRelation}}
\label{section:HamiltonianProjectorOperatorRelation}

\HamiltonianProjectorOperatorRelation*

\begin{proof}
    By the factor theorem, for all $k\in[\alpha(G)]$,
    \begin{equation}
        \frac{\partial Z_G\left(-\varepsilon^{-d}\right)}{\partial\varepsilon}\bigg|_{\varepsilon=\varepsilon_k} = d\varepsilon_k^{-1}\prod_{\substack{j=1 \\ j \neq k}}^{\alpha(G)}\left(1-\varepsilon_j^d\varepsilon_k^{-d}\right). \notag
    \end{equation}
    By applying Lemma~\ref{lemma:ProjectorOperatorRelations},
    \begin{align}
        \sum_{k=1}^{\alpha(G)}\sum_{r\in\mathbb{Z}_d}\omega_d^r\varepsilon_k\mathcal{P}_{r,k} &= \frac{1}{d}\sum_{k=1}^{\alpha(G)}\sum_{r\in\mathbb{Z}_d}\omega_d^r\varepsilon_k^2\prod_{\substack{j=1 \\ j \neq k}}^{\alpha(G)}\left(\frac{\varepsilon_k^d}{\varepsilon_k^d-\varepsilon_j^d}\right)T_G^{\succ r}\left(\varepsilon_k^{-1}\right)\frac{\partial T_G^r\left(\varepsilon^{-1}\right)}{\partial\varepsilon}\bigg|_{\varepsilon=\varepsilon_k} \notag \\
        &= \frac{1}{d}\sum_{k=1}^{\alpha(G)}\sum_{r\in\mathbb{Z}_d}\omega_d^r\varepsilon_k^2\prod_{\substack{j=1 \\ j \neq k}}^{\alpha(G)}\left(\frac{\varepsilon_k^d}{\varepsilon_k^d-\varepsilon_j^d}\right)T_G^{\succ 0}\left(\omega_d^{-r}\varepsilon_k^{-1}\right)\frac{\partial T_G^0\left(\omega_d^{-r}\varepsilon^{-1}\right)}{\partial\varepsilon}\bigg|_{\varepsilon=\varepsilon_k} \notag \\
        &= -\frac{1}{d}\sum_{k=1}^{\alpha(G)}\sum_{r\in\mathbb{Z}_d}\prod_{\substack{j=1 \\ j \neq k}}^{\alpha(G)}\left(\frac{\varepsilon_k^d}{\varepsilon_k^d-\varepsilon_j^d}\right)\left(T_G^{\succ 0}(u)\frac{\partial T_G^0(u)}{\partial u}\right)\bigg|_{u=\omega_d^{-r}\varepsilon_k^{-1}}. \notag
    \end{align}
    Finally, by the Lagrange interpolation formula,
    \begin{equation}
        \sum_{k=1}^{\alpha(G)}\sum_{r\in\mathbb{Z}_d}\omega_d^r\varepsilon_k\mathcal{P}_{r,k} = -\left(T_G^{\succ 0}(u)\frac{\partial T_G^0(u)}{\partial u}\right)\bigg|_{u=0} = H, \notag
    \end{equation}
    completing the proof.
\end{proof}

\bibliography{bibliography}

\begin{thebibliography}{46}%
\makeatletter
\providecommand \@ifxundefined [1]{%
 \@ifx{#1\undefined}
}%
\providecommand \@ifnum [1]{%
 \ifnum #1\expandafter \@firstoftwo
 \else \expandafter \@secondoftwo
 \fi
}%
\providecommand \@ifx [1]{%
 \ifx #1\expandafter \@firstoftwo
 \else \expandafter \@secondoftwo
 \fi
}%
\providecommand \natexlab [1]{#1}%
\providecommand \enquote  [1]{``#1''}%
\providecommand \bibnamefont  [1]{#1}%
\providecommand \bibfnamefont [1]{#1}%
\providecommand \citenamefont [1]{#1}%
\providecommand \href@noop [0]{\@secondoftwo}%
\providecommand \href [0]{\begingroup \@sanitize@url \@href}%
\providecommand \@href[1]{\@@startlink{#1}\@@href}%
\providecommand \@@href[1]{\endgroup#1\@@endlink}%
\providecommand \@sanitize@url [0]{\catcode `\\12\catcode `\$12\catcode `\&12\catcode `\#12\catcode `\^12\catcode `\_12\catcode `\%12\relax}%
\providecommand \@@startlink[1]{}%
\providecommand \@@endlink[0]{}%
\providecommand \url  [0]{\begingroup\@sanitize@url \@url }%
\providecommand \@url [1]{\endgroup\@href {#1}{\urlprefix }}%
\providecommand \urlprefix  [0]{URL }%
\providecommand \Eprint [0]{\href }%
\providecommand \doibase [0]{https://doi.org/}%
\providecommand \selectlanguage [0]{\@gobble}%
\providecommand \bibinfo  [0]{\@secondoftwo}%
\providecommand \bibfield  [0]{\@secondoftwo}%
\providecommand \translation [1]{[#1]}%
\providecommand \BibitemOpen [0]{}%
\providecommand \bibitemStop [0]{}%
\providecommand \bibitemNoStop [0]{.\EOS\space}%
\providecommand \EOS [0]{\spacefactor3000\relax}%
\providecommand \BibitemShut  [1]{\csname bibitem#1\endcsname}%
\let\auto@bib@innerbib\@empty
\bibitem [{\citenamefont {Onsager}(1944)}]{onsager1944crystal}%
  \BibitemOpen
  \bibfield  {author} {\bibinfo {author} {\bibfnamefont {L.}~\bibnamefont {Onsager}},\ }\href {https://doi.org/10.1103/PhysRev.65.117} {\bibfield  {journal} {\bibinfo  {journal} {Physical Review}\ }\textbf {\bibinfo {volume} {65}},\ \bibinfo {pages} {117} (\bibinfo {year} {1944})}\BibitemShut {NoStop}%
\bibitem [{\citenamefont {Kaufman}(1949)}]{kaufman1949crystal}%
  \BibitemOpen
  \bibfield  {author} {\bibinfo {author} {\bibfnamefont {B.}~\bibnamefont {Kaufman}},\ }\href {https://doi.org/10.1103/PhysRev.76.1232} {\bibfield  {journal} {\bibinfo  {journal} {Physical Review}\ }\textbf {\bibinfo {volume} {76}},\ \bibinfo {pages} {1232} (\bibinfo {year} {1949})}\BibitemShut {NoStop}%
\bibitem [{\citenamefont {Jordan}\ and\ \citenamefont {Wigner}(1928)}]{jordan1928uber}%
  \BibitemOpen
  \bibfield  {author} {\bibinfo {author} {\bibfnamefont {P.}~\bibnamefont {Jordan}}\ and\ \bibinfo {author} {\bibfnamefont {E.}~\bibnamefont {Wigner}},\ }\href {https://doi.org/10.1007/bf01331938} {\bibfield  {journal} {\bibinfo  {journal} {Z. Phys}\ }\textbf {\bibinfo {volume} {47}},\ \bibinfo {pages} {631} (\bibinfo {year} {1928})}\BibitemShut {NoStop}%
\bibitem [{\citenamefont {Schultz}\ \emph {et~al.}(1964)\citenamefont {Schultz}, \citenamefont {Mattis},\ and\ \citenamefont {Lieb}}]{schultz1964two}%
  \BibitemOpen
  \bibfield  {author} {\bibinfo {author} {\bibfnamefont {T.~D.}\ \bibnamefont {Schultz}}, \bibinfo {author} {\bibfnamefont {D.~C.}\ \bibnamefont {Mattis}},\ and\ \bibinfo {author} {\bibfnamefont {E.~H.}\ \bibnamefont {Lieb}},\ }\href {https://doi.org/10.1103/RevModPhys.36.856} {\bibfield  {journal} {\bibinfo  {journal} {Reviews of Modern Physics}\ }\textbf {\bibinfo {volume} {36}},\ \bibinfo {pages} {856} (\bibinfo {year} {1964})}\BibitemShut {NoStop}%
\bibitem [{\citenamefont {Kitaev}(2006)}]{kitaev2006anyons}%
  \BibitemOpen
  \bibfield  {author} {\bibinfo {author} {\bibfnamefont {A.}~\bibnamefont {Kitaev}},\ }\href {https://doi.org/10.1016/j.aop.2005.10.005} {\bibfield  {journal} {\bibinfo  {journal} {Annals of Physics}\ }\textbf {\bibinfo {volume} {321}},\ \bibinfo {pages} {2} (\bibinfo {year} {2006})},\ \Eprint {https://arxiv.org/abs/cond-mat/0506438} {arXiv:cond-mat/0506438} \BibitemShut {NoStop}%
\bibitem [{\citenamefont {Mandal}\ and\ \citenamefont {Surendran}(2009)}]{mandal2009exactly}%
  \BibitemOpen
  \bibfield  {author} {\bibinfo {author} {\bibfnamefont {S.}~\bibnamefont {Mandal}}\ and\ \bibinfo {author} {\bibfnamefont {N.}~\bibnamefont {Surendran}},\ }\href {https://doi.org/10.1103/PhysRevB.79.024426} {\bibfield  {journal} {\bibinfo  {journal} {Physical Review B}\ }\textbf {\bibinfo {volume} {79}},\ \bibinfo {pages} {024426} (\bibinfo {year} {2009})},\ \Eprint {https://arxiv.org/abs/0801.0229} {arXiv:0801.0229} \BibitemShut {NoStop}%
\bibitem [{\citenamefont {Chapman}\ and\ \citenamefont {Flammia}(2020)}]{chapman2020characterization}%
  \BibitemOpen
  \bibfield  {author} {\bibinfo {author} {\bibfnamefont {A.}~\bibnamefont {Chapman}}\ and\ \bibinfo {author} {\bibfnamefont {S.~T.}\ \bibnamefont {Flammia}},\ }\href {https://doi.org/10.22331/q-2020-06-04-278} {\bibfield  {journal} {\bibinfo  {journal} {Quantum}\ }\textbf {\bibinfo {volume} {4}},\ \bibinfo {pages} {278} (\bibinfo {year} {2020})},\ \Eprint {https://arxiv.org/abs/2003.05465} {arXiv:2003.05465} \BibitemShut {NoStop}%
\bibitem [{\citenamefont {Ogura}\ \emph {et~al.}(2020)\citenamefont {Ogura}, \citenamefont {Imamura}, \citenamefont {Kameyama}, \citenamefont {Minami},\ and\ \citenamefont {Sato}}]{ogura2020geometric}%
  \BibitemOpen
  \bibfield  {author} {\bibinfo {author} {\bibfnamefont {M.}~\bibnamefont {Ogura}}, \bibinfo {author} {\bibfnamefont {Y.}~\bibnamefont {Imamura}}, \bibinfo {author} {\bibfnamefont {N.}~\bibnamefont {Kameyama}}, \bibinfo {author} {\bibfnamefont {K.}~\bibnamefont {Minami}},\ and\ \bibinfo {author} {\bibfnamefont {M.}~\bibnamefont {Sato}},\ }\href {https://doi.org/10.1103/PhysRevB.102.245118} {\bibfield  {journal} {\bibinfo  {journal} {Physical Review B}\ }\textbf {\bibinfo {volume} {102}},\ \bibinfo {pages} {245118} (\bibinfo {year} {2020})}\BibitemShut {NoStop}%
\bibitem [{\citenamefont {Fendley}(2019)}]{fendley2019free}%
  \BibitemOpen
  \bibfield  {author} {\bibinfo {author} {\bibfnamefont {P.}~\bibnamefont {Fendley}},\ }\href {https://doi.org/10.1088/1751-8121/ab305d} {\bibfield  {journal} {\bibinfo  {journal} {Journal of Physics A: Mathematical and Theoretical}\ }\textbf {\bibinfo {volume} {52}},\ \bibinfo {pages} {335002} (\bibinfo {year} {2019})},\ \Eprint {https://arxiv.org/abs/1901.08078} {arXiv:1901.08078} \BibitemShut {NoStop}%
\bibitem [{\citenamefont {Elman}\ \emph {et~al.}(2021)\citenamefont {Elman}, \citenamefont {Chapman},\ and\ \citenamefont {Flammia}}]{elman2021free}%
  \BibitemOpen
  \bibfield  {author} {\bibinfo {author} {\bibfnamefont {S.~J.}\ \bibnamefont {Elman}}, \bibinfo {author} {\bibfnamefont {A.}~\bibnamefont {Chapman}},\ and\ \bibinfo {author} {\bibfnamefont {S.~T.}\ \bibnamefont {Flammia}},\ }\href {https://doi.org/10.1007/s00220-021-04220-w} {\bibfield  {journal} {\bibinfo  {journal} {Communications in Mathematical Physics}\ }\textbf {\bibinfo {volume} {388}},\ \bibinfo {pages} {969} (\bibinfo {year} {2021})},\ \Eprint {https://arxiv.org/abs/2012.07857} {arXiv:2012.07857} \BibitemShut {NoStop}%
\bibitem [{\citenamefont {Chapman}\ \emph {et~al.}(2023)\citenamefont {Chapman}, \citenamefont {Elman},\ and\ \citenamefont {Mann}}]{chapman2023unified}%
  \BibitemOpen
  \bibfield  {author} {\bibinfo {author} {\bibfnamefont {A.}~\bibnamefont {Chapman}}, \bibinfo {author} {\bibfnamefont {S.~J.}\ \bibnamefont {Elman}},\ and\ \bibinfo {author} {\bibfnamefont {R.~L.}\ \bibnamefont {Mann}},\ }\href@noop {} {\bibfield  {journal} {\bibinfo  {journal} {arXiv e-prints}\ } (\bibinfo {year} {2023})},\ \Eprint {https://arxiv.org/abs/2305.15625} {arXiv:2305.15625} \BibitemShut {NoStop}%
\bibitem [{\citenamefont {Fendley}\ and\ \citenamefont {Pozsgay}(2024)}]{fendley2024free}%
  \BibitemOpen
  \bibfield  {author} {\bibinfo {author} {\bibfnamefont {P.}~\bibnamefont {Fendley}}\ and\ \bibinfo {author} {\bibfnamefont {B.}~\bibnamefont {Pozsgay}},\ }\href {https://doi.org/10.21468/SciPostPhys.16.4.102} {\bibfield  {journal} {\bibinfo  {journal} {SciPost Physics}\ }\textbf {\bibinfo {volume} {16}},\ \bibinfo {pages} {102} (\bibinfo {year} {2024})},\ \Eprint {https://arxiv.org/abs/2310.19897} {arXiv:2310.19897} \BibitemShut {NoStop}%
\bibitem [{\citenamefont {Alicea}\ and\ \citenamefont {Fendley}(2016)}]{alicea2016topological}%
  \BibitemOpen
  \bibfield  {author} {\bibinfo {author} {\bibfnamefont {J.}~\bibnamefont {Alicea}}\ and\ \bibinfo {author} {\bibfnamefont {P.}~\bibnamefont {Fendley}},\ }\href {https://doi.org/10.1146/annurev-conmatphys-031115-011336} {\bibfield  {journal} {\bibinfo  {journal} {Annual Review of Condensed Matter Physics}\ }\textbf {\bibinfo {volume} {7}},\ \bibinfo {pages} {119} (\bibinfo {year} {2016})},\ \Eprint {https://arxiv.org/abs/1504.02476} {arXiv:1504.02476} \BibitemShut {NoStop}%
\bibitem [{\citenamefont {Fradkin}\ and\ \citenamefont {Kadanoff}(1980)}]{fradkin1980disorder}%
  \BibitemOpen
  \bibfield  {author} {\bibinfo {author} {\bibfnamefont {E.}~\bibnamefont {Fradkin}}\ and\ \bibinfo {author} {\bibfnamefont {L.~P.}\ \bibnamefont {Kadanoff}},\ }\href {https://doi.org/10.1016/0550-3213(80)90472-1} {\bibfield  {journal} {\bibinfo  {journal} {Nuclear Physics B}\ }\textbf {\bibinfo {volume} {170}},\ \bibinfo {pages} {1} (\bibinfo {year} {1980})}\BibitemShut {NoStop}%
\bibitem [{\citenamefont {Fendley}(2012)}]{fendley2012parafermionic}%
  \BibitemOpen
  \bibfield  {author} {\bibinfo {author} {\bibfnamefont {P.}~\bibnamefont {Fendley}},\ }\href {https://doi.org/10.1088/1742-5468/2012/11/P11020} {\bibfield  {journal} {\bibinfo  {journal} {Journal of Statistical Mechanics: Theory and Experiment}\ }\textbf {\bibinfo {volume} {2012}},\ \bibinfo {pages} {P11020} (\bibinfo {year} {2012})},\ \Eprint {https://arxiv.org/abs/1209.0472} {arXiv:1209.0472} \BibitemShut {NoStop}%
\bibitem [{\citenamefont {von Gehlen}\ and\ \citenamefont {Rittenberg}(1985)}]{vongehlen1985zn}%
  \BibitemOpen
  \bibfield  {author} {\bibinfo {author} {\bibfnamefont {G.}~\bibnamefont {von Gehlen}}\ and\ \bibinfo {author} {\bibfnamefont {V.}~\bibnamefont {Rittenberg}},\ }\href {https://doi.org/10.1016/0550-3213(85)90350-5} {\bibfield  {journal} {\bibinfo  {journal} {Nuclear Physics B}\ }\textbf {\bibinfo {volume} {257}},\ \bibinfo {pages} {351} (\bibinfo {year} {1985})}\BibitemShut {NoStop}%
\bibitem [{\citenamefont {Baxter}(1988)}]{baxter1988free}%
  \BibitemOpen
  \bibfield  {author} {\bibinfo {author} {\bibfnamefont {R.}~\bibnamefont {Baxter}},\ }\href {https://doi.org/10.1007/BF01019722} {\bibfield  {journal} {\bibinfo  {journal} {Journal of Statistical Physics}\ }\textbf {\bibinfo {volume} {52}},\ \bibinfo {pages} {639} (\bibinfo {year} {1988})}\BibitemShut {NoStop}%
\bibitem [{\citenamefont {Au-Yang}\ and\ \citenamefont {Perk}(1997)}]{auyang1997many}%
  \BibitemOpen
  \bibfield  {author} {\bibinfo {author} {\bibfnamefont {H.}~\bibnamefont {Au-Yang}}\ and\ \bibinfo {author} {\bibfnamefont {J.~H.}\ \bibnamefont {Perk}},\ }\href {https://doi.org/10.1142/S0217979297000046} {\bibfield  {journal} {\bibinfo  {journal} {International Journal of Modern Physics B}\ }\textbf {\bibinfo {volume} {11}},\ \bibinfo {pages} {11} (\bibinfo {year} {1997})},\ \Eprint {https://arxiv.org/abs/q-alg/9609003} {arXiv:q-alg/9609003} \BibitemShut {NoStop}%
\bibitem [{\citenamefont {Baxter}(2006)}]{baxter2006challenge}%
  \BibitemOpen
  \bibfield  {author} {\bibinfo {author} {\bibfnamefont {R.}~\bibnamefont {Baxter}},\ }in\ \href {https://doi.org/10.1088/1742-6596/42/1/003} {\emph {\bibinfo {booktitle} {Journal of Physics: Conference Series}}},\ Vol.~\bibinfo {volume} {42}\ (\bibinfo {organization} {IOP Publishing},\ \bibinfo {year} {2006})\ p.~\bibinfo {pages} {11},\ \Eprint {https://arxiv.org/abs/cond-mat/0510683} {arXiv:cond-mat/0510683} \BibitemShut {NoStop}%
\bibitem [{\citenamefont {Baxter}(1989)}]{baxter1989simple}%
  \BibitemOpen
  \bibfield  {author} {\bibinfo {author} {\bibfnamefont {R.}~\bibnamefont {Baxter}},\ }\href {https://doi.org/10.1016/0375-9601(89)90884-0} {\bibfield  {journal} {\bibinfo  {journal} {Physics Letters A}\ }\textbf {\bibinfo {volume} {140}},\ \bibinfo {pages} {155} (\bibinfo {year} {1989})}\BibitemShut {NoStop}%
\bibitem [{\citenamefont {Bazhanov}\ and\ \citenamefont {Stroganov}(1990)}]{bazhanov1990chiral}%
  \BibitemOpen
  \bibfield  {author} {\bibinfo {author} {\bibfnamefont {V.}~\bibnamefont {Bazhanov}}\ and\ \bibinfo {author} {\bibfnamefont {Y.~G.}\ \bibnamefont {Stroganov}},\ }\href {https://doi.org/10.1007/BF01025851} {\bibfield  {journal} {\bibinfo  {journal} {Journal of Statistical Physics}\ }\textbf {\bibinfo {volume} {59}},\ \bibinfo {pages} {799} (\bibinfo {year} {1990})}\BibitemShut {NoStop}%
\bibitem [{\citenamefont {Baxter}(2004)}]{baxter2004transfer}%
  \BibitemOpen
  \bibfield  {author} {\bibinfo {author} {\bibfnamefont {R.}~\bibnamefont {Baxter}},\ }\href {https://doi.org/10.1023/B:JOSS.0000044062.64287.b9} {\bibfield  {journal} {\bibinfo  {journal} {Journal of Statistical Physics}\ }\textbf {\bibinfo {volume} {117}},\ \bibinfo {pages} {1} (\bibinfo {year} {2004})},\ \Eprint {https://arxiv.org/abs/cond-mat/0409493} {arXiv:cond-mat/0409493} \BibitemShut {NoStop}%
\bibitem [{\citenamefont {Fendley}(2014)}]{fendley2014free}%
  \BibitemOpen
  \bibfield  {author} {\bibinfo {author} {\bibfnamefont {P.}~\bibnamefont {Fendley}},\ }\href {https://doi.org/10.1088/1751-8113/47/7/075001} {\bibfield  {journal} {\bibinfo  {journal} {Journal of Physics A: Mathematical and Theoretical}\ }\textbf {\bibinfo {volume} {47}},\ \bibinfo {pages} {075001} (\bibinfo {year} {2014})},\ \Eprint {https://arxiv.org/abs/1310.6049} {arXiv:1310.6049} \BibitemShut {NoStop}%
\bibitem [{\citenamefont {Baxter}(2014)}]{baxter2014tau2}%
  \BibitemOpen
  \bibfield  {author} {\bibinfo {author} {\bibfnamefont {R.}~\bibnamefont {Baxter}},\ }\href {https://doi.org/10.1088/1751-8113/47/31/315001} {\bibfield  {journal} {\bibinfo  {journal} {Journal of Physics A: Mathematical and Theoretical}\ }\textbf {\bibinfo {volume} {47}},\ \bibinfo {pages} {315001} (\bibinfo {year} {2014})},\ \Eprint {https://arxiv.org/abs/1310.7074} {arXiv:1310.7074} \BibitemShut {NoStop}%
\bibitem [{\citenamefont {Au-Yang}\ and\ \citenamefont {Perk}(2014)}]{au2014parafermions}%
  \BibitemOpen
  \bibfield  {author} {\bibinfo {author} {\bibfnamefont {H.}~\bibnamefont {Au-Yang}}\ and\ \bibinfo {author} {\bibfnamefont {J.~H.}\ \bibnamefont {Perk}},\ }\href {https://doi.org/10.1088/1751-8113/47/31/315002} {\bibfield  {journal} {\bibinfo  {journal} {Journal of Physics A: Mathematical and Theoretical}\ }\textbf {\bibinfo {volume} {47}},\ \bibinfo {pages} {315002} (\bibinfo {year} {2014})},\ \Eprint {https://arxiv.org/abs/1402.0061} {arXiv:1402.0061} \BibitemShut {NoStop}%
\bibitem [{\citenamefont {Au-Yang}\ and\ \citenamefont {Perk}(2016)}]{au2016parafermions}%
  \BibitemOpen
  \bibfield  {author} {\bibinfo {author} {\bibfnamefont {H.}~\bibnamefont {Au-Yang}}\ and\ \bibinfo {author} {\bibfnamefont {J.~H.}\ \bibnamefont {Perk}},\ }\href@noop {} {\bibfield  {journal} {\bibinfo  {journal} {arXiv e-prints}\ } (\bibinfo {year} {2016})},\ \Eprint {https://arxiv.org/abs/1606.06319} {arXiv:1606.06319} \BibitemShut {NoStop}%
\bibitem [{\citenamefont {Yamazaki}(1964)}]{yamazaki1964projective}%
  \BibitemOpen
  \bibfield  {author} {\bibinfo {author} {\bibfnamefont {K.}~\bibnamefont {Yamazaki}},\ }\href@noop {} {\bibfield  {journal} {\bibinfo  {journal} {J. Fac. Sci. Univ. Tokyo Sect. I}\ }\textbf {\bibinfo {volume} {10}},\ \bibinfo {pages} {1964} (\bibinfo {year} {1964})}\BibitemShut {NoStop}%
\bibitem [{\citenamefont {Morris}(1967)}]{morris1967generalized}%
  \BibitemOpen
  \bibfield  {author} {\bibinfo {author} {\bibfnamefont {A.}~\bibnamefont {Morris}},\ }\href {https://doi.org/10.1093/qmath/18.1.7} {\bibfield  {journal} {\bibinfo  {journal} {The Quarterly Journal of Mathematics}\ }\textbf {\bibinfo {volume} {18}},\ \bibinfo {pages} {7} (\bibinfo {year} {1967})}\BibitemShut {NoStop}%
\bibitem [{\citenamefont {Alcaraz}\ \emph {et~al.}(2017)\citenamefont {Alcaraz}, \citenamefont {Batchelor},\ and\ \citenamefont {Liu}}]{alcaraz2017energy}%
  \BibitemOpen
  \bibfield  {author} {\bibinfo {author} {\bibfnamefont {F.~C.}\ \bibnamefont {Alcaraz}}, \bibinfo {author} {\bibfnamefont {M.~T.}\ \bibnamefont {Batchelor}},\ and\ \bibinfo {author} {\bibfnamefont {Z.-Z.}\ \bibnamefont {Liu}},\ }\href {https://doi.org/10.1088/1751-8121/aa645a} {\bibfield  {journal} {\bibinfo  {journal} {Journal of Physics A: Mathematical and Theoretical}\ }\textbf {\bibinfo {volume} {50}},\ \bibinfo {pages} {16LT03} (\bibinfo {year} {2017})},\ \Eprint {https://arxiv.org/abs/1612.02617} {arXiv:1612.02617} \BibitemShut {NoStop}%
\bibitem [{\citenamefont {Alcaraz}\ and\ \citenamefont {Batchelor}(2018)}]{alcaraz2018anomalous}%
  \BibitemOpen
  \bibfield  {author} {\bibinfo {author} {\bibfnamefont {F.~C.}\ \bibnamefont {Alcaraz}}\ and\ \bibinfo {author} {\bibfnamefont {M.~T.}\ \bibnamefont {Batchelor}},\ }\href {https://doi.org/10.1103/PhysRevE.97.062118} {\bibfield  {journal} {\bibinfo  {journal} {Physical Review E}\ }\textbf {\bibinfo {volume} {97}},\ \bibinfo {pages} {062118} (\bibinfo {year} {2018})},\ \Eprint {https://arxiv.org/abs/1802.04453} {arXiv:1802.04453} \BibitemShut {NoStop}%
\bibitem [{\citenamefont {Alcaraz}\ and\ \citenamefont {Pimenta}(2020{\natexlab{a}})}]{alcaraz2020integrable}%
  \BibitemOpen
  \bibfield  {author} {\bibinfo {author} {\bibfnamefont {F.~C.}\ \bibnamefont {Alcaraz}}\ and\ \bibinfo {author} {\bibfnamefont {R.~A.}\ \bibnamefont {Pimenta}},\ }\href {https://doi.org/10.1103/PhysRevB.102.235170} {\bibfield  {journal} {\bibinfo  {journal} {Physical Review B}\ }\textbf {\bibinfo {volume} {102}},\ \bibinfo {pages} {235170} (\bibinfo {year} {2020}{\natexlab{a}})},\ \Eprint {https://arxiv.org/abs/2010.01116} {arXiv:2010.01116} \BibitemShut {NoStop}%
\bibitem [{\citenamefont {Alcaraz}\ and\ \citenamefont {Pimenta}(2020{\natexlab{b}})}]{alcaraz2020free}%
  \BibitemOpen
  \bibfield  {author} {\bibinfo {author} {\bibfnamefont {F.~C.}\ \bibnamefont {Alcaraz}}\ and\ \bibinfo {author} {\bibfnamefont {R.~A.}\ \bibnamefont {Pimenta}},\ }\href {https://doi.org/10.1103/physrevb.102.121101} {\bibfield  {journal} {\bibinfo  {journal} {Physical Review B}\ }\textbf {\bibinfo {volume} {102}},\ \bibinfo {pages} {121101} (\bibinfo {year} {2020}{\natexlab{b}})},\ \Eprint {https://arxiv.org/abs/2005.14622} {arXiv:2005.14622} \BibitemShut {NoStop}%
\bibitem [{\citenamefont {Alcaraz}\ and\ \citenamefont {Pimenta}(2021)}]{alcaraz2021free}%
  \BibitemOpen
  \bibfield  {author} {\bibinfo {author} {\bibfnamefont {F.~C.}\ \bibnamefont {Alcaraz}}\ and\ \bibinfo {author} {\bibfnamefont {R.~A.}\ \bibnamefont {Pimenta}},\ }\href {https://doi.org/10.1103/PhysRevE.104.054121} {\bibfield  {journal} {\bibinfo  {journal} {Physical Review E}\ }\textbf {\bibinfo {volume} {104}},\ \bibinfo {pages} {054121} (\bibinfo {year} {2021})},\ \Eprint {https://arxiv.org/abs/2108.04372} {arXiv:2108.04372} \BibitemShut {NoStop}%
\bibitem [{\citenamefont {Alcaraz}\ \emph {et~al.}(2021)\citenamefont {Alcaraz}, \citenamefont {Hoyos},\ and\ \citenamefont {Pimenta}}]{alcaraz2021powerful}%
  \BibitemOpen
  \bibfield  {author} {\bibinfo {author} {\bibfnamefont {F.~C.}\ \bibnamefont {Alcaraz}}, \bibinfo {author} {\bibfnamefont {J.~A.}\ \bibnamefont {Hoyos}},\ and\ \bibinfo {author} {\bibfnamefont {R.~A.}\ \bibnamefont {Pimenta}},\ }\href {https://doi.org/10.1103/PhysRevB.104.174206} {\bibfield  {journal} {\bibinfo  {journal} {Physical Review B}\ }\textbf {\bibinfo {volume} {104}},\ \bibinfo {pages} {174206} (\bibinfo {year} {2021})},\ \Eprint {https://arxiv.org/abs/2109.01938} {arXiv:2109.01938} \BibitemShut {NoStop}%
\bibitem [{\citenamefont {Alcaraz}\ \emph {et~al.}(2023)\citenamefont {Alcaraz}, \citenamefont {Pimenta},\ and\ \citenamefont {Sirker}}]{alcaraz2023ising}%
  \BibitemOpen
  \bibfield  {author} {\bibinfo {author} {\bibfnamefont {F.~C.}\ \bibnamefont {Alcaraz}}, \bibinfo {author} {\bibfnamefont {R.~A.}\ \bibnamefont {Pimenta}},\ and\ \bibinfo {author} {\bibfnamefont {J.}~\bibnamefont {Sirker}},\ }\href {https://doi.org/10.1103/PhysRevB.107.235136} {\bibfield  {journal} {\bibinfo  {journal} {Physical Review B}\ }\textbf {\bibinfo {volume} {107}},\ \bibinfo {pages} {235136} (\bibinfo {year} {2023})},\ \Eprint {https://arxiv.org/abs/2303.15284} {arXiv:2303.15284} \BibitemShut {NoStop}%
\bibitem [{\citenamefont {Batchelor}\ \emph {et~al.}(2023)\citenamefont {Batchelor}, \citenamefont {Henry},\ and\ \citenamefont {Lu}}]{batchelor2023brief}%
  \BibitemOpen
  \bibfield  {author} {\bibinfo {author} {\bibfnamefont {M.~T.}\ \bibnamefont {Batchelor}}, \bibinfo {author} {\bibfnamefont {R.~A.}\ \bibnamefont {Henry}},\ and\ \bibinfo {author} {\bibfnamefont {X.}~\bibnamefont {Lu}},\ }\href {https://doi.org/10.1007/s43673-023-00105-3} {\bibfield  {journal} {\bibinfo  {journal} {AAPPS Bulletin}\ }\textbf {\bibinfo {volume} {33}},\ \bibinfo {pages} {29} (\bibinfo {year} {2023})},\ \Eprint {https://arxiv.org/abs/2311.04582} {arXiv:2311.04582} \BibitemShut {NoStop}%
\bibitem [{\citenamefont {Rose}(1970)}]{rose1970triangulated}%
  \BibitemOpen
  \bibfield  {author} {\bibinfo {author} {\bibfnamefont {D.~J.}\ \bibnamefont {Rose}},\ }\href {https://doi.org/10.1016/0022-247X(70)90282-9} {\bibfield  {journal} {\bibinfo  {journal} {Journal of Mathematical Analysis and Applications}\ }\textbf {\bibinfo {volume} {32}},\ \bibinfo {pages} {597} (\bibinfo {year} {1970})}\BibitemShut {NoStop}%
\bibitem [{\citenamefont {Looges}\ and\ \citenamefont {Olariu}(1993)}]{looges1993optimal}%
  \BibitemOpen
  \bibfield  {author} {\bibinfo {author} {\bibfnamefont {P.~J.}\ \bibnamefont {Looges}}\ and\ \bibinfo {author} {\bibfnamefont {S.}~\bibnamefont {Olariu}},\ }\href {https://doi.org/10.1016/0898-1221(93)90308-I} {\bibfield  {journal} {\bibinfo  {journal} {Computers \& Mathematics with Applications}\ }\textbf {\bibinfo {volume} {25}},\ \bibinfo {pages} {15} (\bibinfo {year} {1993})}\BibitemShut {NoStop}%
\bibitem [{\citenamefont {Achlioptas}\ and\ \citenamefont {Zampetakis}(2021)}]{achlioptas2021local}%
  \BibitemOpen
  \bibfield  {author} {\bibinfo {author} {\bibfnamefont {D.}~\bibnamefont {Achlioptas}}\ and\ \bibinfo {author} {\bibfnamefont {K.}~\bibnamefont {Zampetakis}},\ }in\ \href {https://doi.org/10.4230/LIPIcs.ICALP.2021.8} {\emph {\bibinfo {booktitle} {48th International Colloquium on Automata, Languages, and Programming}}}\ (\bibinfo {year} {2021})\BibitemShut {NoStop}%
\bibitem [{\citenamefont {Stoudenmire}\ \emph {et~al.}(2015)\citenamefont {Stoudenmire}, \citenamefont {Clarke}, \citenamefont {Mong},\ and\ \citenamefont {Alicea}}]{stoudenmire2015assembling}%
  \BibitemOpen
  \bibfield  {author} {\bibinfo {author} {\bibfnamefont {E.}~\bibnamefont {Stoudenmire}}, \bibinfo {author} {\bibfnamefont {D.~J.}\ \bibnamefont {Clarke}}, \bibinfo {author} {\bibfnamefont {R.~S.}\ \bibnamefont {Mong}},\ and\ \bibinfo {author} {\bibfnamefont {J.}~\bibnamefont {Alicea}},\ }\href {https://doi.org/10.1103/PhysRevB.91.235112} {\bibfield  {journal} {\bibinfo  {journal} {Physical Review B}\ }\textbf {\bibinfo {volume} {91}},\ \bibinfo {pages} {235112} (\bibinfo {year} {2015})},\ \Eprint {https://arxiv.org/abs/1501.05305} {arXiv:1501.05305} \BibitemShut {NoStop}%
\bibitem [{\citenamefont {Brandst{\"a}dt}\ and\ \citenamefont {Dragan}(2003)}]{brandstadt2003linear}%
  \BibitemOpen
  \bibfield  {author} {\bibinfo {author} {\bibfnamefont {A.}~\bibnamefont {Brandst{\"a}dt}}\ and\ \bibinfo {author} {\bibfnamefont {F.~F.}\ \bibnamefont {Dragan}},\ }\href {https://doi.org/10.1016/S0166-218X(02)00571-1} {\bibfield  {journal} {\bibinfo  {journal} {Discrete Applied Mathematics}\ }\textbf {\bibinfo {volume} {129}},\ \bibinfo {pages} {285} (\bibinfo {year} {2003})}\BibitemShut {NoStop}%
\bibitem [{\citenamefont {Fendley}(2023)}]{fendley2023personal}%
  \BibitemOpen
  \bibfield  {author} {\bibinfo {author} {\bibfnamefont {P.}~\bibnamefont {Fendley}},\ }\href@noop {} {\bibinfo {title} {Personal communication}} (\bibinfo {year} {2023})\BibitemShut {NoStop}%
\bibitem [{\citenamefont {Knill}(2001)}]{knill2001fermionic}%
  \BibitemOpen
  \bibfield  {author} {\bibinfo {author} {\bibfnamefont {E.}~\bibnamefont {Knill}},\ }\href@noop {} {\bibfield  {journal} {\bibinfo  {journal} {arXiv e-prints}\ } (\bibinfo {year} {2001})},\ \Eprint {https://arxiv.org/abs/quant-ph/0108033} {arXiv:quant-ph/0108033} \BibitemShut {NoStop}%
\bibitem [{\citenamefont {Terhal}\ and\ \citenamefont {DiVincenzo}(2002)}]{terhal2002classical}%
  \BibitemOpen
  \bibfield  {author} {\bibinfo {author} {\bibfnamefont {B.~M.}\ \bibnamefont {Terhal}}\ and\ \bibinfo {author} {\bibfnamefont {D.~P.}\ \bibnamefont {DiVincenzo}},\ }\href {https://doi.org/10.1103/physreva.65.032325} {\bibfield  {journal} {\bibinfo  {journal} {Physical Review A}\ }\textbf {\bibinfo {volume} {65}},\ \bibinfo {pages} {032325} (\bibinfo {year} {2002})},\ \Eprint {https://arxiv.org/abs/quant-ph/0108010} {arXiv:quant-ph/0108010} \BibitemShut {NoStop}%
\bibitem [{\citenamefont {Pozsgay}\ and\ \citenamefont {Fukai}(2024)}]{pozsgay2024quantum}%
  \BibitemOpen
  \bibfield  {author} {\bibinfo {author} {\bibfnamefont {B.}~\bibnamefont {Pozsgay}}\ and\ \bibinfo {author} {\bibfnamefont {K.}~\bibnamefont {Fukai}},\ }\href@noop {} {\bibfield  {journal} {\bibinfo  {journal} {arXiv e-prints}\ } (\bibinfo {year} {2024})},\ \Eprint {https://arxiv.org/abs/2402.02984} {arXiv:2402.02984} \BibitemShut {NoStop}%
\bibitem [{\citenamefont {Vona}\ \emph {et~al.}(2025)\citenamefont {Vona}, \citenamefont {Mesty{\'a}n},\ and\ \citenamefont {Pozsgay}}]{vona2025exact}%
  \BibitemOpen
  \bibfield  {author} {\bibinfo {author} {\bibfnamefont {I.}~\bibnamefont {Vona}}, \bibinfo {author} {\bibfnamefont {M.}~\bibnamefont {Mesty{\'a}n}},\ and\ \bibinfo {author} {\bibfnamefont {B.}~\bibnamefont {Pozsgay}},\ }\href {https://doi.org/10.1103/PhysRevB.111.144306} {\bibfield  {journal} {\bibinfo  {journal} {Physical Review B}\ }\textbf {\bibinfo {volume} {111}},\ \bibinfo {pages} {144306} (\bibinfo {year} {2025})},\ \Eprint {https://arxiv.org/abs/2405.20832} {arXiv:2405.20832} \BibitemShut {NoStop}%
\end{thebibliography}%

\end{document}